\newtheorem{theorem}{Theorem}[section]
\newtheorem{lemma}[theorem]{Lemma}
\newtheorem{corollary}[theorem]{Corollary}
\newcommand{\iid}{\stackrel{\mbox{\scriptsize iid}}{\sim}}
\newcommand{\ind}{\stackrel{\mbox{\scriptsize ind}}{\sim}}
\newcommand{\indicator}{\ensuremath{\mathbbm{1}}}
\newcommand{\calF}{\mathcal{F}}
\newcommand{\calN}{\mathcal{N}}
\newcommand{\calO}{\mathcal{O}}
\newcommand{\dd}{\mathrm d}
\newcommand{\norm}[1]{\left\lVert#1\right\rVert}
\newcommand{\derij}{\frac{\partial}{\partial \Lambda_{ij}}}
\newcommand{\E}{\mathbb{E}}
\newcommand{\R}{\mathbb{R}}
\newcommand{\e}{{\rm e}}
\renewcommand{\mid}{\ensuremath{\,|\,}}
\newcommand{\Capp}{C^{\mathrm{app}}}
\newcommand{\Dapp}{D^{\mathrm{app}}}
\NewDocumentCommand{\evalat}{sO{\big}mm}{%
  \IfBooleanTF{#1}
   {\mleft. #3 \mright|_{#4}}
   {#3#2|_{#4}}%
}
\date{} 
\title{Bayesian clustering of high-dimensional data via latent repulsive mixtures}
\author[1]{Lorenzo Ghilotti}
\author[2]{Mario Beraha}
\author[2]{Alessandra Guglielmi}
\affil[1]{Department of Economics, Management and Statistics, University of Milano-Bicocca,  Milan, Italy}
\affil[2]{Department of Mathematics, Politecnico di Milano, Milan, Italy}
\begin{document}

\maketitle

\begin{abstract}
Model-based clustering of moderate or large dimensional data is notoriously difficult. 
We propose a model for simultaneous dimensionality reduction and clustering by
assuming a mixture model for a set of latent scores, which are then linked to the observations via a
Gaussian latent factor model. This approach was recently investigated by \cite{Chandra20}. The authors use a factor-analytic representation and assume a mixture model for the latent factors. However, performance can deteriorate in the presence of model misspecification. 
Assuming a repulsive point process prior for the component-specific means of the mixture for the latent scores is shown to yield a more robust model that outperforms the standard mixture model for the latent factors in several simulated scenarios.
The repulsive point process must be anisotropic to favor well-separated clusters of data, and its density should be tractable for efficient posterior inference. We address these issues by proposing a general construction for anisotropic determinantal point processes. 
We illustrate our model in simulations as well as a plant species co-occurrence dataset.
\end{abstract}

KEYWORDS: Anisotropic point process; Determinantal point process; Gaussian factor model;  MCMC; Model-based clustering.

\section{Introduction}

This paper concerns  Bayesian cluster analysis for moderate or high dimensional data.
Specifically, we focus on observations $y_1, \ldots, y_n \in \R^p$. 
Although a variety of models have been proposed in the literature \citep{neal_ddt, teh_coalescent, duan_clustering, natarajan2021cohesion}, Bayesian mixtures constitute a direct approach for model-based clustering; see \cite{fruhwirth2019handbook} for a recent review.
In mixture models, it is assumed that data are generated from $m$ (either random or fixed) homogeneous populations. Typically, each population is assumed to be suitably modeled via a parametric density $f_\theta(\cdot)$ for some parameter $\theta \in \Theta$.
Weights $\bm w = (w_1, \ldots, w_m)$ ($w_h \geq 0$, $\sum_{h=1}^m w_h = 1$) specify the relative frequency of each population. In summary, 
the conditional distribution of data, given parameters, under the mixture model, takes the form
\begin{equation}\label{eq:mix_lik}
    y_1, \ldots, y_n \mid \bm w, \bm \theta \iid \sum_{h=1}^m w_h f_{\theta_h}(\cdot)
\end{equation}
Under the Bayesian approach, suitable priors are assumed for $\bm w$, $\bm \theta = (\theta_1, \ldots, \theta_m)$ and $m$.

The poor performance of Bayesian mixtures when data dimension $p$ is larger than a moderate value, e.g. $p > 10$, is a long-standing issue. 
This is partly due to the poor scalability of the algorithms for posterior inference \citep[see, e.g.,][]{MalsinerWalli16, CelouxHandbook}, and partly due to the asymptotic properties of the model when $p$ is large.
Specifically, Theorem~1 together with Corollaries 1 and 2 in \cite{Chandra20} shows the degeneracy of the cluster estimate when $f_\theta$ in \eqref{eq:mix_lik} is the Gaussian distribution, the de-facto standard. As $p \rightarrow +\infty$ for a fixed $n$, if the covariance matrix is cluster-specific, then with (posterior) probability one, all observations are clustered as singletons, while if the covariance matrix is shared among all the clusters, only one cluster is detected. 

The most popular approach among practitioners to cluster high-dimensional data follows a two-step procedure: first, fitting a latent factor model \citep{lopes2014modern}, a $d$-dimensional score $\eta_i$, where $d \ll p$, is associated with each observation. 
Then, traditional clustering algorithms are applied to the $\eta_i$'s. 
However, this two-step procedure does not allow propagation of the variability induced by the dimensionality reduction step into the cluster estimates.
To overcome this critical limitation, the natural option is to consider a model for simultaneous dimensionality reduction and clustering, by assuming a mixture model for the latent scores, which are then linked to the observations via a Gaussian latent factor model.
This approach was recently investigated by \cite{Chandra20} within the Bayesian nonparametric framework.

However,  mixture models  might  overestimate the number of
clusters to recover the “true” sampling model.
Indeed, the typical postulate in Bayesian mixture models is that parameters $\theta_h$ in \eqref{eq:mix_lik} are i.i.d. from a base distribution.
Under this assumption, if the true data generating density does not agree with \eqref{eq:mix_lik}, the number of clusters a posteriori diverges as $n$ increases \citep{Cai21}.
Two sources of misspecification pertain to
the Gaussian latent factor model with a standard mixture model for the latent scores:
first, the factor-analytic representation entails that data lie close to a $d$-dimensional hyperplane; second, the deviation from such a hyperplane is Gaussian distributed. Both these assumptions can be questioned and are unlikely to hold in practice.
Moreover, \cite{Chandra20} assume a Dirichlet process (DP) mixture prior for the latent scores, which might yield inconsistent estimates even for well-specified models \citep{Mill14}.
Promising results towards consistency of DP mixtures have been obtained in \cite{Ascolani_Biom}, but their assumptions do not hold for the model in \cite{Chandra20}.

The present paper proposes APPLAM: Anisotropic (repulsive) Point Process LAtent Mixture model. 
We consider a model for simultaneous dimensionality reduction and clustering as above but assume a repulsive point process as the prior for the component-specific location parameters of the mixture for the latent scores.
Repulsive mixtures, pioneered by \cite{PeRaDu12}, offer a practical solution to the lack of robustness of mixture models with i.i.d. component parameters.
See also \cite{xu2016bayesian, xie2019bayesian, fuquene2019choosing, bianchini2018determinantal, Ber22}. In these works, repulsive priors are used to jointly model the component-specific location parameters, and, in some of them, their number. However, in the context of a latent mixture model, we argue that to have well-separated clusters of data, it is not sufficient to have well-separated clusters at the latent level, but the repulsion should also take into account the factor analytic model that links the latent variables to the observations.
To this end, we propose an anisotropic determinantal point process (DPP) as the prior for the component-specific parameters, where the matrix of factor loadings drives the anisotropy. We derive a general construction of anisotropic DPPs that induce the desired repulsion.

\subsection{Bayesian clustering via latent mixtures}\label{sec:lamb}

Let $\bm y = (y_1, \ldots, y_n)$, $y_i \in \R^p$, $\Lambda \in \R^{p \times d}$ be the matrix of factor loadings,
$\eta_1, \ldots, \eta_n \in \R^d$ be a set of latent factors, 
and $\Sigma = \text{diag}(\sigma^2_1, \ldots, \sigma^2_p)$ $(\sigma_j > 0)$ be a diagonal covariance matrix. 
Let $\calN_p(\bm a,B)$ denote the $p$--dimensional Gaussian distribution with mean $\bm a$ and covariance matrix $B$. A latent factor mixture model assumes
\begin{equation}
\label{eq:lamb_general}
\begin{aligned}
    y_i \mid \eta_i, \Lambda, \Sigma &\ind \calN_p(\Lambda \eta_i, \Sigma), \\ 
    \eta_i \mid \bm w, \bm \mu, \bm \Delta & \iid \sum_{h=1}^m w_h \calN_d(\mu_h, \Delta_h), 
\end{aligned}
\end{equation}
for $i=1, \ldots, n$.
Extensions to other kernel distributions for the mixture model for latent factors and for observations are straightforward.
The mixture model is completed when assigning the prior for $\bm w = (w_1, \ldots, w_m)$, $\bm \mu = (\mu_1, \ldots, \mu_m)$, $\bm \Delta = (\Delta_1, \ldots, \Delta_m)$, $\Lambda$, and the $\sigma_j^2$'s. In the case of repulsive mixtures, as we propose here, $m$ is also random, so the number of mixture components is learned from the data. See Section~\ref{sec:applam} for the full description of our prior.   

Introducing a set of latent cluster indicator variables $c_i$ such that $P(c_i = h \mid \bm w) = w_h$, we can equivalently state the prior for the $\eta_i$'s in \eqref{eq:lamb_general} as $\eta_i \mid c_i = h, \bm \mu, \bm \Delta \ind \calN_d(\mu_h, \Delta_h)$, $i=1, \ldots, n$.
Therefore, as it is standard in Bayesian mixture models, we cluster data $y_i$'s through the latent variables $\eta_i$'s, i.e., $y_i$ and $y_j$ belong to the same cluster if $c_i = c_j$.
Through the $c_i$'s we can identify the clusters with the allocated components, i.e., those components $h \in \{1, \ldots, m\}$ for which there exists $i$ such that $c_i = h$.

The cluster estimate is interpretable and, hence, useful if observations belonging to different clusters are well separated. 
Repulsive mixture models encourage well-separated clusters by assuming a prior for the cluster centers $\mu_h$'s that favors regular (i.e., well-separated) point configurations.
 A straightforward approach to define such a prior is assuming a repulsive point process which governs both the cardinality $m$ of the components and the locations of the $\mu_h$'s.
In particular, it is possible to define such a process by specifying a density with respect to a Poisson point process. For instance, \cite{quinlan2017parsimonious, xie2019bayesian} assume a pairwise interaction process whose density, with respect to a  suitably defined Poisson process, is
\begin{equation}\label{eq:pairwise_pp}
    p(\{\mu_1, \ldots, \mu_m\}) = \frac{1}{Z} \prod_{j=1}^m \phi_1(\mu_j) \prod_{1 \leq h < k \leq m} \phi_2(\|\mu_h - \mu_k\|)
\end{equation}
where $\phi_1$ is a bounded function, $\phi_2$ is a non-decreasing function, and $Z$ is a normalizing constant that is usually intractable. See \cite{DaVeJo2} and \cite{MoWaBook03} for the definition of density with respect to a Poisson point process.  

This choice of the prior would ensure that different clusters are associated with well-separated latent scores $\eta_i$' s, but is this enough to ensure well-separated clusters of data $y_i$'s?
By the properties of the Gaussian distribution, we have
\[
    \{y_i: c_i = h\} \mid \bm c, \bm \mu, \bm \Delta, \bm \Lambda \iid \calN_p(\Lambda \mu_h, \Lambda \Delta_h \Lambda^\top +\Sigma).
\]
Hence, it is clear that it is not sufficient to encourage a priori that $\|\mu_h - \mu_k\|$ is large to obtain well-separated clusters of datapoints, as the distance between cluster centers is $\|\Lambda \mu_h - \Lambda \mu_k\|$. Such intuition is confirmed by our numerical experiments, as discussed in Appendix \ref{app:comparison_iso_aniso}, where we show that failing to induce the \emph{right repulsion} across cluster centers produces extremely poor cluster estimates a posteriori.
Of course, to produce repulsion across the $\Lambda \mu_h$'s in a pairwise interaction point process, we could easily modify \eqref{eq:pairwise_pp}  by considering $\phi_2(x) \equiv \tilde \phi_2(\|\Lambda x\|)$,  where now the normalizing constant $Z$ depends on $\phi_1, \phi_2$ and, most importantly, $\Lambda$. 
The intractability of the normalizing constant in \eqref{eq:pairwise_pp} poses a significant challenge to posterior simulation. 
\cite{Ber22} discuss how to sample from the posterior distribution of the parameters involved in \eqref{eq:pairwise_pp} using the exchange algorithm \citep{moller2006efficient, murray2006mcmc}, which requires perfect sampling from \eqref{eq:pairwise_pp}. However, \cite{Ber22} investigate the efficiency of the algorithm only when sampling a single real-valued parameter.
Here, instead, the point process density does depend on $\Lambda$ and our preliminary investigation showed that updating $\Lambda$ via the exchange algorithm results in extremely poor mixing due to the high-dimensionality of the matrix $\Lambda$ itself.

\subsection{Our contributions}
\label{sec:our_contrib}

We propose an anisotropic determinantal point process (DPP) in Section~\ref{sec:anisotropic_DPP} as the prior for the component-specific latent factor centers $\{\mu_1, \ldots, \mu_m\}$, where the matrix of factor loadings drives the anisotropy. As a first main contribution, we derive a general construction of anisotropic DPPs together with novel existence conditions for our class of DPPs that are similar to those in \cite{Lav15}, and the associated density does not involve intractable terms. 
We show analytically that the anisotropic DPP induces the desired repulsion. Indeed, we prove that our process is equivalent to a standard (isotropic) DPP distribution for the points $\{\Lambda \mu_1, \ldots, \Lambda \mu_m\}$ on the hyperplane spanned by the columns of $\Lambda$.
Though one could, in principle, directly define a standard DPP on such a hyperplane, the resulting process would be analytically intractable (requiring to perform the eigendecomposition of a kernel function defined on a random hyperplane), and thus of limited practical utility.
We further provide an explicit expression for the spectral density of the DPP, which is essential for simulation purposes. This step also requires a standard approximation \citep[see, e.g.,][]{Lav15} of two infinite sums in the expression of the density of the DPP.  We empirically show that the approximation error is negligible.

Our second contribution is to embed the anisotropic DPP in a latent factor mixture model \eqref{eq:lamb_general}  as a prior for the locations $\mu_h$ and their cardinality.
We discuss several aspects of the proposed model, focusing in particular on the role of the parameters governing the repulsiveness of the process. We also provide guidance on setting such parameters via a data-dependent procedure.
Posterior inference presents non-trivial computational challenges that we address by proposing a Metropolis-within-Gibbs algorithm. In particular, the full conditional for $\Lambda$ is updated using the well-known Metropolis adjusted Langevin algorithm, which, to be computationally efficient, requires the analytical expression for the gradient of (the logarithm of) the density of the DPP; we provide such an expression among our theoretical results.
We illustrate the adequacy of our model by applying it to real and simulated datasets. Specifically, we consider data collecting the occurrence of plant species ($p=123$) at different
sites of the Bauges Natural Regional Park (France). The goal is
to infer the clustering structure of the sites ($n=1139$), such that sites belonging to the same cluster show
similar patterns concerning the occurrence of the plant species.
Further insights on our model are obtained via extensive simulation studies with high dimensional ($p$ up to 1000, which is extremely large for traditional clustering methods) and large ($n = 1000$) datasets. Beyond justifying the need to assume an anisotropic repulsive prior, we compare our APPLAM model with the LAMB one of \cite{Chandra20}. We find that APPLAM produces more robust cluster estimates in the presence of model misspecification.

\section{Methodology}
\label{sec:model}

\subsection{A general construction for anisotropic DPPs}
\label{sec:anisotropic_DPP}

A determinantal point process $\Phi$ on $(\R^d, \mathcal{B}(\R^d))$, where $\mathcal{B}(\R^d)$ is the Borel sigma algebra, is a random subset  $\{\mu_1, \ldots, \mu_m\} \subset \R^d$. See \cite{Macchi75, Lav15, BaBlaKa}.
The probability distribution of a DPP is completely characterized by a continuous complex-valued covariance kernel $K: \R^d \times \R^d \rightarrow \mathbb C$ in terms of its $m$-factorial moment measure. Such a general definition might appear cumbersome to the unfamiliar reader and is reported in Appendix~\ref{app:notation_proofs}. To keep the discussion light, we focus here on DPPs restricted on a compact region $R \subset \R^d$, so that we can write its density with respect to a Poisson process.
However, our results hold for general DPPs defined on the whole $\R^d$.
By Mercer's theorem, the restriction of $K$ to $R \times R$ admits the following spectral representation: $K(x, y) = \sum_{j \geq 1} \gamma_j \xi_j(x) \overline{\xi}_j(y)$, where the $\xi_j$'s form an orthonormal basis for $L^2(R; \mathbb C)$ (eigenfunctions) of complex-valued functions and the $\gamma_j$'s are a summable nonnegative sequence, the eigenvalues; $\overline{\xi}_j$ denotes the conjugate value of $ \xi_j$. 
In particular, when restricted to $R$, a DPP satisfying $\gamma_j < 1$ for all  $j=1,2,\ldots$ admits a density  with respect to the unit-rate Poisson process on $R$ as given by
\begin{equation}
\label{eq:DPPprior}
    p(\{\mu_1, \ldots, \mu_m\}) = \e^{|R| - D} \det\{C(\mu_h, \mu_k)\}_{h, k = 1, \ldots, m}, \ \mu_1, \ldots, \mu_m \in R, \ m=0,1,2,\ldots
\end{equation}
where $C(x, y) = \sum_{j \geq 1}\gamma_j / (1 - \gamma_j) \xi_j(x) \overline{\xi}_j(y)$, $|R| = \int_R \dd x$, $D = - \sum_{j \geq 1} \log(1 - \gamma_j)$, and we adopt the convention that $\det\{C(\mu_h, \mu_k)\}_{h, k = 1, \ldots, m} = 1$ if $m=0$. See \cite{Lav15} for a proof of such results. 
As noted in \cite{Lav15}, the continuity of $C$ implies that the determinant in \eqref{eq:DPPprior} tends to zero as the Euclidean distance $\|\mu_h-\mu_k \|$
goes to zero for some $h\neq k$, which shows that a DPP is a repulsive point process.
The probability distribution of a DPP is uniquely characterized by its kernel $K(\cdot)$ on $\R^d$ or, equivalently, by its spectral representation.
Compared to the expression in \eqref{eq:pairwise_pp}, the density of a DPP features an analytically tractable normalizing constant $\e^{|R| - D}$, which significantly simplifies posterior inference for the hyperparameters of the process; see, e.g., \cite{Ber22} for a discussion on the computational complexities inherited from an intractable normalizing constant. 
On the other hand, to work with a DPP density \eqref{eq:DPPprior}, one needs to restrict the point process on a compact region $R$, which is a nuisance not shared by the pairwise interaction point processes. However, as discussed in Section \ref{sec:elicitation}, the choice of $R$ plays a limited role in our model.

Analytic expressions for eigenvalues $\gamma_j$'s are crucial for inferential purposes. Following the so-called ``spectral approach'' by \cite{Lav15}, \cite{bianchini2018determinantal} and \cite{Ber22} assume $K(x, y) = K_0(x - y)= K_0(\|x - y\|)$, i.e., $K$ is a stationary and isotropic function. 
Instead of modeling $K$,  
they fix the $\xi_j$'s as the Fourier basis and assume a parametric model for the $\gamma_j$'s . This approach ensures the positive definiteness of $K$ and the existence of the DPP density but is not flexible enough for the mixture model \eqref{eq:lamb_general}.
In particular, isotropy of the DPP kernel $K$ conflicts with our goal of forcing repulsion across the $\Lambda \mu_h$'s.
In the following, we provide a new general construction for stationary anisotropic DPPs, an explicit expression for the Fourier transform of its kernel $K_0$, and easy-to-check conditions that guarantee the DPP's existence.
\begin{theorem}\label{teo:aniso_dpp}
Let $\Lambda$ be a (fixed) $p \times d$ real matrix with full rank. Let $W$ be a strictly positive random variable and let $h(y)$ be the marginal density of the random variable $Y$ defined as
\begin{equation}\label{eq:model_Y}
    Y \mid W \sim \calN_d (0,W (\Lambda^T \Lambda)^{-1})
\end{equation}
Let $K_0(x) = \rho h(x) / h(0)$ for $x \in \R^d$ and $\rho > 0$.
Then there exists a DPP $\Phi$ on $\R^d$ with kernel $K(x, y) = K_0(x - y)$ for $\rho \leq \rho_{\max}$ defined as
\begin{equation}\label{eq: rho_max}
    \rho_{\max} = |\Lambda^T \Lambda|^{\frac{1}{2}}(2\pi)^{-d/2}\, \E\bigl[ W^{-\frac{d}{2}} \bigr],
\end{equation}
and
\begin{equation}\label{eq: kernel_Y}
    K_0(x) = \frac{\rho}{\E\bigl[W^{-\frac{d}{2}}\bigr]} \, \E\biggl[W^{-\frac{d}{2}} \exp\biggl(-\frac{||\Lambda x||^2}{2W}\biggr)\biggr], \qquad x \in \R^d.
\end{equation}
If $\varphi(x) = \mathcal{F}(K_0)(x)$ denotes the Fourier transform of $K_0$, we have that
\begin{align}
\varphi(x) = \frac{\rho}{h(0)} \,\E\bigl[\exp\bigl(-2\pi^2 W x^T (\Lambda^T \Lambda)^{-1} x\bigr)\bigr], \qquad x \in \R^d. \label{eq: spect_dens_Y}
\end{align}
Moreover, for any compact $R \subset \R^d$, the restriction of $\Phi$ to $R$ has a density with respect to the unit rate Poisson point process on $R$ if $\rho < \rho_{\max}$. 
\end{theorem}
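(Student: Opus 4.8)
The plan is to reduce everything to one Gaussian computation plus the spectral existence criterion for stationary DPPs. First I would make $h$ explicit. Conditionally on $W=w$, the density of $Y$ at $y$ is $|w(\Lambda^T\Lambda)^{-1}|^{-1/2}(2\pi)^{-d/2}\exp(-\|\Lambda y\|^2/(2w)) = w^{-d/2}|\Lambda^T\Lambda|^{1/2}(2\pi)^{-d/2}\exp(-\|\Lambda y\|^2/(2w))$. Integrating over the law of $W$ gives $h(y)=(2\pi)^{-d/2}|\Lambda^T\Lambda|^{1/2}\,\E[W^{-d/2}\exp(-\|\Lambda y\|^2/(2W))]$, so that $h(0)=(2\pi)^{-d/2}|\Lambda^T\Lambda|^{1/2}\E[W^{-d/2}]=\rho_{\max}$; dividing $\rho h$ by $h(0)$ yields \eqref{eq: kernel_Y}. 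Note that $K_0$ is real, even, continuous, bounded by $\rho$, and integrable, because $h$ is a probability density.

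Next I would compute $\varphi=\mathcal{F}(K_0)$. Since $h$ is the density of $Y$, $\mathcal{F}(h)(x)=\E[e^{-2\pi i x^T Y}]$; conditioning on $W$ and using the Gaussian characteristic function gives $\E[e^{-2\pi i x^T Y}\mid W]=\exp(-2\pi^2 W\,x^T(\Lambda^T\Lambda)^{-1}x)$, hence $\varphi(x)=(\rho/h(0))\,\E[\exp(-2\pi^2 W\,x^T(\Lambda^T\Lambda)^{-1}x)]$, which is \eqref{eq: spect_dens_Y}. In particular $\varphi\ge 0$, so by Bochner's theorem $K_0$ is positive semidefinite and hence a bona fide continuous covariance kernel, and we are in the setting of the spectral approach: existence of the stationary DPP with kernel $K_0(\cdot-\cdot)$ on $\R^d$ is equivalent to $\varphi(x)\le 1$ for all $x$ (equivalently, the condition $\gamma_j\le 1$ of the spectral approach, with the $\gamma_j$ now being the values of $\varphi$).

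Now I would bound $\varphi$. Since $\Lambda$ has full column rank, $\Lambda^T\Lambda$ and $(\Lambda^T\Lambda)^{-1}$ are positive definite, so $x^T(\Lambda^T\Lambda)^{-1}x\ge 0$ with equality only at $x=0$; together with $W>0$ this gives $\exp(-2\pi^2 W\,x^T(\Lambda^T\Lambda)^{-1}x)\le 1$ almost surely, whence $\varphi(x)\le\varphi(0)=\rho/h(0)$ for every $x$, the supremum being attained at $x=0$. Therefore $\sup_x\varphi(x)\le 1$ if and only if $\rho\le h(0)=\rho_{\max}$, which establishes the existence of $\Phi$ on $\R^d$ for $\rho\le\rho_{\max}$. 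For the final assertion, the restriction of $\Phi$ to a compact $R$ is the DPP whose operator is the compression to $L^2(R)$ of the translation-invariant operator with kernel $K_0(x-y)$ on $L^2(\R^d)$; the latter has operator norm $\sup_x\varphi(x)$, and compression cannot increase the norm, so every eigenvalue $\gamma_j$ of the restricted (trace-class) operator satisfies $\gamma_j\le\sup_x\varphi(x)=\rho/h(0)$. When $\rho<\rho_{\max}$ this is strictly less than $1$, hence all $\gamma_j<1$ and \eqref{eq:DPPprior} provides the density with respect to the unit-rate Poisson process on $R$.

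The routine parts are the two Gaussian integrals. The step that needs care is the last one: one must invoke the correct form of the spectral existence theorem for stationary DPPs and translate the abstract contraction condition into the pointwise bound $\varphi\le 1$ on the spectral density, and one must be mindful that only the strict inequality $\rho<\rho_{\max}$ forces the eigenvalues of the restricted kernel below $1$ — at $\rho=\rho_{\max}$ the process is still well defined but this argument for absolute continuity with respect to the Poisson process no longer applies.
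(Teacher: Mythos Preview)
Your proof is correct and follows essentially the same route as the paper: compute $h$ by mixing the Gaussian conditional over $W$, read off $K_0$ and $\rho_{\max}=h(0)$, derive $\varphi$, and apply the spectral existence criterion $\sup_x\varphi(x)\le 1$ from \cite{Lav15}. Two small stylistic differences are worth noting: you obtain $\varphi$ via the characteristic-function identity $\mathcal{F}(h)(x)=\E[e^{-2\pi i x^TY}]$ and conditioning on $W$, which is slicker than the paper's direct integration with a complex completion of the square; and you supply an explicit operator-compression argument for why $\rho<\rho_{\max}$ forces all eigenvalues of the restricted kernel strictly below $1$, whereas the paper simply defers this last point to \cite{Lav15}.
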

The kernel $K_0$ characterizing the DPP in Theorem~\ref{teo:aniso_dpp} is not isotropic.
Parameter $\rho$ is the intensity of the process, i.e. it controls the distribution of the number of points in the process. In particular, the expected total number of points in $R$ is equal to $\rho|R|$.
We also emphasize that explicit knowledge of the Fourier transform, as described in Theorem~\ref{teo:aniso_dpp}, is essential for simulation purposes, as one typically approximates the density of the DPP using $\varphi$ as described in Section~\ref{sec:dpp_dens_approx}.
 
The following result is an equivalent characterization of the anisotropic DPP defined above.
\begin{theorem}\label{teo:trans_dpp}
Let  
$\Phi$ be an anisotropic  DPP on $\R^d$ defined as in Theorem~\ref{teo:aniso_dpp}. Then $\tilde \Phi = \{\Lambda \mu \, : \,  \mu \in \Phi\}$ is a stationary and isotropic DPP on $B = \Lambda\R^d$ with kernel 
\[
    \tilde K_0(y) = \frac{\rho |\Lambda^\top \Lambda|^{-1/2}}{\E[W^{-d/2}]} \E \left[ W^{-d/2} \exp\left(- \frac{\|y\|^2}{2W}\right)\right].
\]
\end{theorem}
Theorem~\ref{teo:trans_dpp} sheds light on the repulsiveness induced by our anisotropic DPP and why it is suited in the context of latent mixture models.
As discussed in Section~\ref{sec:lamb}, the model produces interpretable clusters if the points $\{\Lambda \mu_1\, \ldots \Lambda \mu_m\}$ are well separated.
From the theorem above, it turns out that our new construction in Theorem~\ref{teo:aniso_dpp} is equivalent to assuming an isotropic DPP prior for the points $\Lambda \mu_h's$, thus matching our goal of inducing separation across clusters of data.
However, let us stress that the construction in Theorem~\ref{teo:trans_dpp} is not ``operational''. Indeed, to work with a process defined on $\Lambda \R^d$, one would need to perform the eigendecomposition of $\tilde K_0$ at every step of the MCMC algorithm, i.e., anytime $\Lambda$ is updated.
Beyond the analytical difficulty of finding the eigendecomposition of $\tilde K_0$, this also introduces a computational burden not shared by the (equivalent) definition in Theorem~\ref{teo:aniso_dpp}.

The type or strength of repulsion is controlled by the random variable $W \sim p(w)$ in Theorem~\ref{teo:aniso_dpp}.
In the rest of the paper, we consider one specific choice for the random variable $W$ in Theorem~\ref{teo:aniso_dpp}, leading to the anisotropic counterpart of the Gaussian DPPs discussed in \cite{Lav15}, which we refer to as Gaussian-like DPP in the following.
 Moreover, to prove the generality of our approach, in Appendix \ref{sec:app_whmat} we show how to construct an anisotropic counterpart of the Whittle-Mat\'ern DPP, although we will not use it in our examples.

\begin{corollary}\label{cor:gauss_dpp}
Using the same notation of Theorem~\ref{teo:aniso_dpp}, let $W$ be a degenerate random variable defined as
$
    W=|\Lambda^T \Lambda|^{\frac{1}{d}} c^{-2/d}$, for $c > 0$,
where $\Lambda$ is fixed.
Then the kernel $K_0$, its Fourier transform $\varphi = \mathcal{F}(K_0)$ and $\rho_{\max}$ follow here:
\begin{align}
K_0(x) & = \rho \exp \biggl( -\frac{||\Lambda x||^2}{2|\Lambda^T \Lambda|^{\frac{1}{d}} \,c^{-\frac{2}{d}}} \biggr), \qquad x \in \R^d \nonumber\\
\varphi(x) &= \rho \,\frac{(2\pi)^{d/2}}{c} \exp \left( -2\pi^2 |\Lambda^T \Lambda|^{\frac{1}{d}} c^{-\frac{2}{d}} x^T (\Lambda^T \Lambda)^{-1} x \right), \qquad x \in \R^d \label{phi_gauss} \\
\rho_{\max} &= c (2\pi)^{-d/2}. \nonumber
\end{align}
\end{corollary}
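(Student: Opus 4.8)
The plan is to recognize that Corollary~\ref{cor:gauss_dpp} is nothing but the instance of Theorem~\ref{teo:aniso_dpp} in which the mixing variable $W$ is degenerate, i.e.\ a constant random variable; all one has to do is substitute $W=|\Lambda^T\Lambda|^{1/d}c^{-2/d}$ into \eqref{eq: rho_max}, \eqref{eq: kernel_Y} and \eqref{eq: spect_dens_Y} and simplify. First I would verify the hypotheses of Theorem~\ref{teo:aniso_dpp}: $\Lambda$ has full rank, so $|\Lambda^T\Lambda|>0$, and since $c>0$ the number $W=|\Lambda^T\Lambda|^{1/d}c^{-2/d}>0$, hence $W$ is an admissible (strictly positive) random variable --- a degenerate one is not excluded. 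Consequently the existence of the DPP for $\rho\le\rho_{\max}$ and the absolute continuity of its restriction to a compact $R$ for $\rho<\rho_{\max}$ follow at once from Theorem~\ref{teo:aniso_dpp}, and it only remains to make the three displayed quantities explicit.

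For $\rho_{\max}$ and $K_0$, I would start from
\[
\E\bigl[W^{-d/2}\bigr] \;=\; W^{-d/2} \;=\; \bigl(|\Lambda^T\Lambda|^{1/d}c^{-2/d}\bigr)^{-d/2} \;=\; |\Lambda^T\Lambda|^{-1/2}\,c .
\]
Plugging this into \eqref{eq: rho_max} the factors $|\Lambda^T\Lambda|^{1/2}$ cancel and leave $\rho_{\max}=c\,(2\pi)^{-d/2}$. Plugging it into \eqref{eq: kernel_Y}, the common factor $\E[W^{-d/2}]$ cancels between numerator and denominator (the expectation now being a plain evaluation), which yields $K_0(x)=\rho\exp\!\bigl(-\|\Lambda x\|^2/(2W)\bigr)$; substituting the explicit value of $W$ gives the stated formula.

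For the Fourier transform I would evaluate the quantity $h(0)$ appearing in \eqref{eq: spect_dens_Y}. Since $W$ is constant, \eqref{eq:model_Y} says $Y\sim\calN_d\!\bigl(0,W(\Lambda^T\Lambda)^{-1}\bigr)$, so its density at the origin is $h(0)=(2\pi)^{-d/2}\,|\Lambda^T\Lambda|^{1/2}\,W^{-d/2}$. Inserting $h(0)$ and the (trivial) expectation into \eqref{eq: spect_dens_Y}, and using $W^{d/2}=|\Lambda^T\Lambda|^{1/2}c^{-1}$ together with $W=|\Lambda^T\Lambda|^{1/d}c^{-2/d}$, the prefactor $\rho/h(0)$ reduces to a constant multiple of $\rho/c$ and the exponent becomes $-2\pi^2\,|\Lambda^T\Lambda|^{1/d}c^{-2/d}\,x^T(\Lambda^T\Lambda)^{-1}x$, i.e.\ the expression in \eqref{phi_gauss}. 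As a cross-check one may instead Fourier-transform the explicit Gaussian $K_0$ directly and verify the same answer.

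I do not anticipate any real obstacle: the argument is pure bookkeeping of powers of $|\Lambda^T\Lambda|$, of $2\pi$ and of $c$, together with the observation that nothing in the existence part of Theorem~\ref{teo:aniso_dpp} needs to be re-proved. The only two places deserving a moment of care are (i) keeping the normalization of the Fourier transform consistent with the one fixed in Theorem~\ref{teo:aniso_dpp} --- the convention under which the transform of a centered Gaussian density with covariance matrix $V$ is $\exp(-2\pi^2 x^T V x)$ --- which is what makes the $2\pi$ factors come out as stated, and (ii) double-checking the determinant exponents, since $W^{d/2}$, $W^{-d/2}$ and $|\Lambda^T\Lambda|^{\pm1/2}$ all appear and must combine correctly.
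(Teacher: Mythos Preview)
Your proposal is correct and follows exactly the same route as the paper's proof: specialize Theorem~\ref{teo:aniso_dpp} to the degenerate $W=|\Lambda^T\Lambda|^{1/d}c^{-2/d}$, compute $W^{-d/2}=|\Lambda^T\Lambda|^{-1/2}c$, and plug into \eqref{eq: rho_max}, \eqref{eq: kernel_Y} and \eqref{eq: spect_dens_Y}. One small remark: if you carry out the computation of $\rho/h(0)$ you announce, you obtain $\rho\,(2\pi)^{d/2}/c$, which is precisely what the paper's own proof derives --- the exponent $2/d$ printed in the statement of \eqref{phi_gauss} is a typo for $d/2$.
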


We conclude this section by investigating the effect that $\Lambda$ has on the repulsiveness of the DPP. First of all, it is clear that $\Lambda$ induces anisotropy. To visualize this, we
consider the pair correlation function $g(x)$ \citep[PCF,][]{Lav15}; for the Gaussian-like DPP we have:
\begin{equation}
\label{eq:pair_cor}
    g(x) = 1 - \frac{K_0(x)}{K_0(0)} = 1 - \exp \biggl( -\frac{||\Lambda x||^2}{2|\Lambda^T \Lambda|^{\frac{1}{d}} \,c^{-\frac{2}{d}}} \biggr)^2, \quad x \in \mathbb R^d
\end{equation}
and set $p=d=2$ for visual purposes.
Figure~\ref{fig:gauss_pcf} shows the PCFs of two Gaussian-like DPPs with different $\Lambda \in \R^{2\times 2}$.
In the left panel, $\Lambda$ has eigenvectors $e_1 = (1,0)^T$, $e_2=(0,1)^T$ and eigenvalues $\lambda_1 = 1$, $\lambda_2 = \lambda ( = 4)$, which induces stronger repulsion along the horizontal axis than along the vertical one. In the right panel, $\Lambda$ has eigenvectors $e_1 = \sqrt{2}/2 (1,1)^T$, $e_2=\sqrt{2}/2 (-1,1)^T$ and eigenvalues $\lambda_1 = 1$, $\lambda_2= \lambda( = 4)$, which induces stronger repulsion along the bisector of the first quadrant than along the orthogonal direction.

\begin{figure}[t]
\centering
\begin{subfigure}{.5\textwidth}
  \centering
  \includegraphics[width=\linewidth]{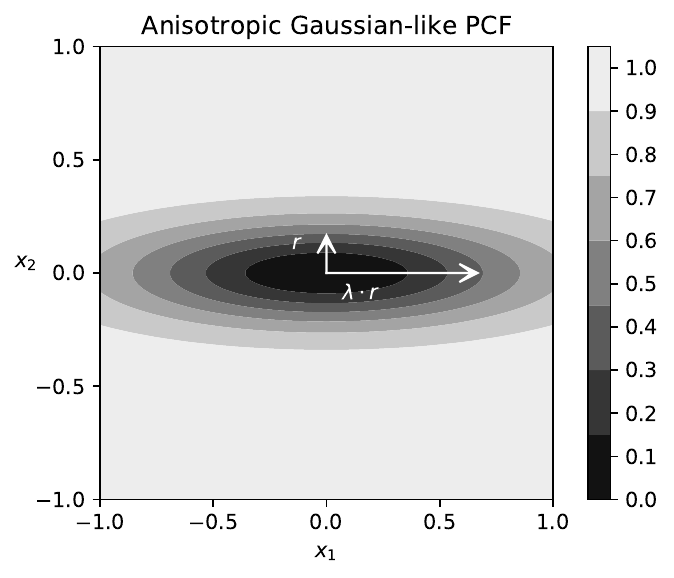}
\end{subfigure}%
\begin{subfigure}{.5\textwidth}
  \centering
  \includegraphics[width=\linewidth]{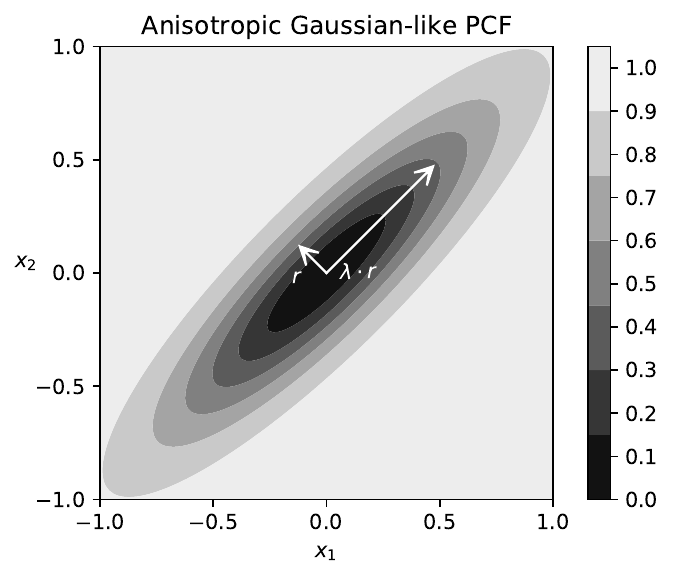}
\end{subfigure}
\caption{Pair correlation function of two Gaussian-like DPPs, one showing strong repulsion along the horizontal direction (left plot) and one showing strong repulsion along the bisector of the first quadrant (right plot).
}
\label{fig:gauss_pcf}
\end{figure}

Although the PCF is useful for visualizing the anisotropy of repulsiveness, it provides little information on \emph{how much} repulsiveness is induced by the DPP.
Several repulsiveness measures have been proposed for spatial point processes. See, e.g., \cite{Lav15, biscio_quant} for a few quantitative indexes.
Here, we focus in particular on the coefficient $p_0$ introduced in \cite{moller2021couplings}
which is a global index of repulsion quantifying the effect of a point on the expected number of points in the process.
For a stationary DPP, $p_0 = \rho^{-1}\int \lvert K_0(x) \rvert^2 \dd x$,
where $\rho:=K_0(0)$.
The larger $p_0$, the more repulsive the DPP is.
For our Gaussian-like DPP in Corollary~\ref{cor:gauss_dpp}, standard computations lead to $p_0 = \rho \pi^{d/2} c^{-1}$,
which shows that the amount of repulsiveness in the process does not depend on $\Lambda$.

\subsection{The APPLAM model}
\label{sec:applam}

The anisotropic repulsive point process latent mixture model proposed here, denoted APPLAM as mentioned before,  assumes likelihood \eqref{eq:lamb_general}. We complete the prior specification as follows. First, following common practice in mixture models, conditional to $\{\mu_1, \ldots, \mu_m\}$, indeed conditionally only to $m$, we  assume
\begin{equation}\label{eq:prior_marks}
    \begin{aligned}
        w_1, \ldots, w_m \mid m &\sim \text{Dirichlet}(\alpha, \ldots, \alpha) \\
        \Delta_1, \ldots, \Delta_m \mid m & \iid \text{IW}_d(\nu_0, \Psi_0)
    \end{aligned}
\end{equation}
where $\text{IW}_d(\nu_0, \Psi_0)$ denotes the $d$-dimensional inverse Wishart distribution, with $\nu_0 > d-1$ degrees of freedom
and mean $\Psi_0 / (\nu_0 - d - 1)$.
Moreover, as  it is common in latent factor models, we assume $\Sigma = \text{diag}(\sigma^2_1, \ldots, \sigma^2_p)$ in \eqref{eq:lamb_general} to be a diagonal matrix and
\begin{equation}\label{eq:prior_sigma}
    \sigma^2_j \iid \text{inv-Gamma}(a_\sigma, b_\sigma), \qquad j=1, \ldots, p,
\end{equation}
i.e., their marginal prior distribution is inverse Gamma with mean $b_\sigma/(a_\sigma-1)$.
As far as the matrix $\Lambda$ is concerned, we impose sparsity through the prior. Indeed, sparse priors are often employed in latent factor models to avoid overparametrization. In particular we assume the Dirichlet-Laplace prior \citep{bhattacharya2015dirichlet} with parameter $a$ as the prior for $\Lambda$, that is, denoting with $\lambda_{jh}$'s the elements of $\Lambda$,
\begin{equation}\label{eq:prior_lambda}
\begin{aligned}
\lambda_{jh}\,|\,\phi, \tau, \psi &\ind \calN(0,\psi_{jh}\phi_{jh}^2\tau^2), \quad j=1, \ldots, p, \, h=1, \ldots, d\\
vec(\phi) \sim \mathrm{Dirichlet}(a,\ldots,a), \quad
\psi_{jh} &\iid \mathrm{Exp}(1/2), \quad
\tau \sim \mathrm{Gamma}(p d  a, 1/2)
\end{aligned}
\end{equation}
where, for any $p\times d$ matrix $A$, $vec(A)$ denotes the vector of dimension $p\times d$ such that $vec(A)_{p(h-1)+ j} =(A)_{j,h} $.  This prior is very popular in the literature on Bayesian factor models because of ease of sampling and good frequentist properties \citep{bhattacharya2015dirichlet}.   
Note also that we have assumed prior independence among the blocks of
parameters above.

The key point in our model is the prior specification for the latent factor centers $\{ \mu_1, \ldots, \mu_m\}$.
To introduce repulsiveness between the points $\{\Lambda \mu_h\}_{h=1}^m$, we first fix a compact $R \subset \R^d$, and assume that the locations $\{\mu_1, \ldots, \mu_m\}$ are distributed as an anisotropic DPP on $R$, conditioned on $m \geq 1$. As shown below, the choice of $R$ plays a very limited role, and we discuss how to fix it in Section \ref{sec:elicitation}. Conditioning on the DPP being non-empty is necessary for our formulation of the model, as \eqref{eq:lamb_general} is not well-defined if $m=0$. Therefore, we assume the following density for the locations $\{\mu_1, \ldots, \mu_m\}$ and their cardinality:
\begin{multline}\label{eq:prior_mu}
     p(\{\mu_1, \ldots, \mu_m\} \mid \Lambda) =  f_{\text{DPP}}(\bm \mu \mid \rho, \Lambda, K_0; R)  \\= \frac{\e^{|R| - D}}{1 - \e^{-D}} \det\{C(\mu_h, \mu_k)\}_{h, k = 1, \ldots, m}, \qquad m \geq 1, \, \mu_1, \ldots, \mu_m \in R
\end{multline}
and $p(\emptyset \mid \Lambda) = 0$, where $D$ and $C$ are defined just after  \eqref{eq:DPPprior}.  The above notation $f_{\text{DPP}}(\bm \mu \mid \rho, \Lambda, K_0; R)$ makes explicit that the density depends on $R$,  the intensity $\rho$, the stationary kernel $K_0$ and anisotropy is driven by $\Lambda$.

\subsection{Prior elicitation for the anisotropic DPP} \label{sec:elicitation}

The prior distributions for $\Sigma$, the $\Delta_h$'s, and $\Lambda$ are shared between Lamb and APPLAM, and we adopt the default values suggested in \cite{Chandra20}. See Appendix \ref{app:elicitation} for further details.
Here, we specifically focus on those hyperparameters defining the DPP density prior in \eqref{eq:prior_mu} when $K_0$ is the anisotropic Gaussian-like kernel in Corollary~\ref{cor:gauss_dpp}, i.e., $R$, $\rho$, and $c$.

First, we discuss the choice of the compact set $R$ of $\R^d$, which is the support of the $\mu_h$'s. Note that, marginally, each $\mu_h$ is uniformly distributed on $R$. We argue that a reasonable choice is to assume $R = [-r, r]^d$, i.e., a hypercube. Indeed, there is no reason to assume a priori that the latent factors exhibit a higher variance along any particular axis. Moreover, such a hypercube should be centered in the origin as in traditional latent factor models, whereby the latent factors are given a standard Gaussian prior.
Clearly, there is non-identifiability between $\Lambda$ and $r$. Indeed, letting $t > 0$ and $\tilde \Lambda = t \Lambda$, $\tilde R = [-r/t, r/t]^d$ leaves the likelihood of the data invariant. Therefore, the choice of $r$ plays a limited role in the model definition.
In particular, our experience shows that posterior inference is robust if $r$ is sufficiently large, i.e., $r \ge 5$. Therefore, we set $R = [-10, 10]^d$ in the rest of the paper.

Parameters $\rho$ and $c$ control the repulsiveness and the intensity of the process. As noted in Section~\ref{sec:anisotropic_DPP}, these quantities are intimately related in DPPs. 
While $\rho |R|$ is the the expected number of points in $R$ of the DPP, note that in \eqref{eq:prior_mu}
we condition on the process being non-empty so that the role of $\rho$ is not as clear as in \eqref{eq:DPPprior}.
Therefore, we suggest fixing those parameters by looking only at the repulsiveness induced by the DPP. In Appendix \ref{app:hyperparams}, we exploit the equivalent representation of the anisotropic DPP given in Theorem~\ref{teo:trans_dpp} to get insights on how such parameters govern the so-called \emph{repulsion range} (i.e., how far should two cluster centers be to influence each other) as well as the strength of the repulsion across close points. We then propose a heuristic approach based on the observed data to set reasonable ranges for hyperparameters $\rho,c$.
In particular, we follow \cite{Lav15, bianchini2018determinantal} and parametrize the DPP by $(\rho, s)$ such that $\rho = s \rho_{\max} \equiv s c (2 \pi)^{-d/2}$, $s \in (0, 1)$, cf. Corollary \ref{cor:gauss_dpp}. In such a way, $s$ takes the interpretation of the strength of repulsion for a fixed $\rho$.

\section{Posterior Simulation}\label{sec:mcmc}

\subsection{Approximation of the DPP density}\label{sec:dpp_dens_approx}

The point process density in \eqref{eq:prior_mu} cannot be evaluated in closed form due to the infinite series in the expression of $D$ and $C$. We follow \cite{Lav15} and approximate it with $f^{\text{app}}_{\text{DPP}}$, which is as in \eqref{eq:prior_mu} where $D$ is replaced by $\Dapp = - \sum_{k \in \mathbb Z_N^d} \log(1 - \varphi(k))$ and $C$ by
\begin{equation}\label{eq:approx_dens}
  \Capp(x, y) = \frac{1}{|R|} \sum_{k \in \mathbb Z_N^d} \frac{\varphi(k)}{1 - \varphi(k)} \exp\left( \frac{2 \pi i}{|R|^{1/d}} \langle k, x - y \rangle\right)
\end{equation}
where $\mathbb Z_N^d = \{-N,\ldots,N\}^d $ and $\langle\cdot, \cdot \rangle$ denotes the scalar product. Note that $\bm \mu \sim f^{\text{app}}_{\text{DPP}}(\rho, \Lambda, C^{\text{app}}; R)$ is still a DPP on $R$, conditioned that it is nonempty.
The truncation level $N$ controls the goodness of the approximation of the density and
sets an upper bound of $(2N + 1)^d$ on the number of points in the DPP \citep[][]{Lav15}.
This approximation can be seen as analogous to the truncation of stick-breaking priors commonly adopted in Bayesian nonparametric mixture models \citep{ishwaran2001gibbs}. For the Dirichlet process, a standard choice is truncating the prior to 50 support points. Similarly, when $d=2$ (the smallest value we consider) and $N=3$, we are setting here an upper bound of 49 points in the DPP, which increases exponentially with the dimension.
Appendix~\ref{app:truncation} reports a simulation showing that the approximation error introduced by the truncation is, in fact, negligible when $\Lambda$ is either fixed to the identity matrix, or random and distributed according to the prior  \eqref{eq:prior_lambda} or the posterior.

\subsection{The Gibbs sampling algorithm}

Including the mixture weights $\bm w$ as finite-Dirichlet distributed (see the first line of \eqref{eq:prior_marks}) in the state space is not convenient, as the sum-to-one constraint on $\bm w$ would lead to complex split-merge reversible jump moves with poor mixing of the chain.
As in \cite{Ber22}, we consider  the following equivalent characterization of the prior for $\bm w$
\begin{equation}\label{eq:prior_jumps}
     \bm w = \left(\frac{S_1}{T}, \ldots, \frac{S_m}{T} \right), \quad T = \sum_{h=1}^m S_h, \quad S_h \iid \text{Gamma}(\alpha, 1). 
\end{equation}
Conditional to $\bm c$ we consider $\bm \mu =\bm  \mu^{(a)} \cup \bm \mu^{(na)}$, $\bm S = [\bm S^{(a)}, \bm S^{(na)}]$ and $\bm \Delta = [\bm \Delta^{(a)}, \bm \Delta^{(na)}]$, divided into allocated and non-allocated components, and denoted with the $(a)$ and $(na)$ superscripts, respectively.  That is, $\bm \mu^{(a)} = \{\mu_{c_i}, \, i=1, \ldots, n\}$, $\bm \mu^{(na)} = \bm \mu \setminus \bm \mu^{(a)}$, and analogously for the terms involving the $S_h$'s and $\Delta_h$'s.

Combining the likelihood \eqref{eq:lamb_general} with the prior as in \eqref{eq:prior_marks}-\eqref{eq:prior_mu}, we can see that the joint distribution of data and parameters has a density,  with respect to some dominating measure on the proper space; see Appendix~\ref{sec:app_measure} for this density and the associated dominating measure. 
Normalization of the weights leads to a term $T^{-n} = (\sum S^{(a)}_{h_1} + \sum S^{(na)}_{h_2})^{-n}$ in the expression of the joint density, which makes it impossible to factorize the density according to the allocated and non-allocated components. 
To overcome this issue, as in \cite{Ber22},  we introduce an auxiliary random variable $u \mid T \sim \text{Gamma}(n, T)$.
We report the joint density of data and parameters in Appendix~\ref{sec:app_measure}.

We propose a Gibbs sampler algorithm, recurring to a Metropolis step when the corresponding full-conditionals cannot be sampled directly. Most of the updates are straightforward, except for two steps, which are outlined below. The full description of the algorithm is in Appendix~\ref{sec:app_Gibbs}.

To update the non-allocated variables $(\bm\mu^{(na)}, \bm S^{(na)}, \bm\Delta^{(na)})$, we disintegrate the joint full-conditional of the non-allocated variables as 
	\[  
	    p(\bm\mu^{(na)}, \bm s^{(na)}, \bm\Delta^{(na)} \mid \text{rest}) = p(\bm\mu^{(na)} \mid \text{rest}) p(\bm s^{(na)} \mid \bm\mu^{(na)}, \text{rest}) p(\bm \Delta^{(na)} \mid \bm\mu^{(na)}, \text{rest}),
	\]
	where ``rest'' identifies all the variables except for $(\bm\mu^{(na)}, \bm s^{(na)}, \bm\Delta^{(na)})$.
	Then $\bm\mu^{(na)} \mid \text{rest}$ is a Gibbs point process with density
	\[
	    p(\{\mu^{(na)}_1, \ldots, \mu^{(na)}_\ell \} \mid \text{rest}) \propto f^{\text{app}}_{\text{DPP}}(\{\mu^{(na)}_1, \ldots, \mu^{(na)}_\ell \}  \cup \bm \mu^{(a)} \mid \rho, \Lambda, K_0; R) \psi(u)^{\ell}
	\]
	where $\psi(u) = \E[\e^{- u S}]$. See also \cite{Ber22}. We employ the birth-death Metropolis-Hastings algorithm in \cite{geyer1994simulation} to sample from this point process density.
	Given $\bm \mu^{(na)}$ it is straightforward  to show
	\begin{align*}
	    \Delta^{(na)}_1, \ldots, \Delta^{(na)}_\ell \mid \text{rest}  &\iid \mathrm{IW}_d(\nu_0,\Psi_0) \\
	    S^{(na)}_1, \ldots, S^{(na)}_\ell \mid \text{rest} &\iid \text{Gamma}(\alpha, 1 + u) \ .
	\end{align*}
	
To update of the latent allocation variables $\bm c$, we found it useful to marginalize over the $\eta_i$'s to get better mixing chains. Hence, we can sample each $c_i$ independently from a discrete distribution over $\{1, \ldots, k+\ell\}$, where $k$ is the number of allocated components, with weights $\omega_{ih}$:
    \begin{align*}
        \omega_{ih} &\propto S_h^{(a)} \mathcal{N}_p(y_i \mid \Lambda \mu_h^{(a)}, \Sigma + \Lambda \Delta_h^{(a)} \Lambda^\top), \quad & h=1, \ldots, k \\
            \omega_{i k + h} & \propto S_h^{(na)} \mathcal{N}_p(y_i \mid \Lambda  \mu_h^{(na)}, \Sigma + \Lambda \Delta_h^{(na)} \Lambda^\top), \quad & h=1, \ldots, \ell \ .
    \end{align*}
    Each evaluation of the $p$-dimensional Gaussian density would require $\calO(p^3)$ operations if care is not taken. However, we take advantage of the special structure of the covariance matrix. Using Woodbury's matrix identity, we have that
    \[
        \left(\Sigma + \Lambda \Delta \Lambda^\top\right)^{-1} = \Sigma^{-1} - \Sigma^{-1} \Lambda \left(\Delta^{-1} + \Lambda^\top \Sigma^{-1} \Lambda \right)^{-1} \Lambda^\top \Sigma^{-1},
    \]
    and hence we need now to compute the inverse of a $d \times d$ matrix. Therefore, evaluating the quadratic form in the exponential requires only $\calO(p)$ computational cost. 
    Moreover, using the matrix determinant lemma, the determinant of the covariance matrix can be computed  as
    \[
        \det(\Sigma + \Lambda \Delta \Lambda^\top) = \det(\Delta^{-1} + \Lambda^\top \Sigma^{-1} \Lambda) \det(\Delta) \det(\Sigma) .
    \]
    This is computed without additional cost by caching operations from the matrix inversion.

\subsection{Updating $\Lambda$ using gradient-based MCMC algorithms}\label{sec:mala_lambda}

As mentioned in Section~\ref{sec:lamb}, sampling from $\Lambda$'s full conditional is non-trivial. 
In particular, we found that random-walk Metropolis-Hastings led to very poor mixing of the MCMC chain, while the adaptive Metropolis-Hastings algorithm in \cite{haario_adaptive} is not feasible here due to the high dimensionality of $\Lambda$. 
Instead, we found the Metropolis adjusted Langevin Algorithm \citep[MALA][]{roberts1996mala} to be more adequate here.
The target density is 
\begin{equation}\label{eq:lambda_fullcond}
\begin{aligned}
p(\Lambda\,|\,\cdots) & \propto p(\bm y\,|\,\Lambda,\bm \eta, \Sigma) p(\Lambda\,|\,\phi,\tau,\psi) \frac{\e^{1 - \Dapp}}{1-\e^{-\Dapp}} \det[\Capp](\mu_1,\ldots,\mu_m),
\end{aligned}
\end{equation}
where $\cdots$ denotes all the variables except $\Lambda$. Although not explicitly stated, $\Dapp$ and $\Capp$ both depend on $\Lambda$.

To sample from \eqref{eq:lambda_fullcond} using MALA, we must evaluate $\nabla \log(p(\Lambda\,|\,\cdots))$.
In a preliminary investigation, we tried using automatic differentiation \citep[AD,][]{griewank1989automatic} to get $\nabla \log(p(\Lambda\,|\,\cdots))$, as it requires only the implementation of a function evaluating $\log(p(\Lambda\,|\,\cdots))$. Unfortunately, we found that this strategy is viable
only in trivial scenarios, i.e. up to $p=50$ and $d=3$ due to RAM memory requirements; see Figure~\ref{fig:ad_vs_grad} in Appendix~\ref{app:analytic-automatic}.  
Therefore, in the following theorem, we provide the analytical expression of $\nabla \log(p(\Lambda\,|\,\cdots))$
when the associated DPP is Gaussian-like. See Appendix~\ref{sec:app_whmat} for the Whittle-Mat\'ern DPP case.
\begin{theorem}\label{teo:grad_lambda}
    Under the Gaussian-like DPP prior, the gradient of the log-full conditional density of $\Lambda$ equals
\begin{equation*}
    \begin{aligned}
        \nabla \log p(\Lambda\,|\,\cdots) & = \Sigma^{-1} \sum_{i=1}^n{(y_i-\Lambda\eta_i)\eta_i^\top} - \frac{1}{(\psi \odot \phi^2) \tau^2} \odot \Lambda +
        \\
        & \qquad + (2 \pi^2 c^{-\tfrac2d})\sum\limits_{k \in \mathbb{Z}_N^d}{g^{(k)} \frac{\varphi(k)}{(1-\varphi(k))^2} \Bigl[ \frac{1-\varphi(k)}{1-\e^{-D^{\mathrm{app}}}} -  v_k^T \left( \Capp\right)^{-1} u_k \Bigr] }
    \end{aligned}
\end{equation*}
where $\varphi(k)$ is defined in \eqref{phi_gauss}, $\odot$ denotes the elementwise (Hadamard) product, 
\[
    g^{(k)} = 2|\Lambda^T \Lambda|^{\tfrac1d} \Lambda (\Lambda^T \Lambda)^{-1} \Bigl[ \frac1d k^T (\Lambda^T \Lambda)^{-1} k \mathds{1}_d - k ((\Lambda^T \Lambda)^{-1} k)^T \Bigr],
\]
$u_k$ and $v_k$ are $m$-dimensional column vectors for each $k \in \mathbb{Z}^d$ with entries
\begin{equation*}
    (u_k)_j = \e^{2\pi i \langle k, T\mu_j\rangle} , \quad (v_k)_j = \e^{-2\pi i \langle k, T\mu_j\rangle} , \quad j=1,\ldots,m
\end{equation*}
and $\Capp := \Capp(\mu_1,\ldots,\mu_n)$.
\end{theorem}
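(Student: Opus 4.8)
The plan is to differentiate the three factors of \eqref{eq:lambda_fullcond} that depend on $\Lambda$ one at a time. The likelihood term $p(\bm y\mid\Lambda,\bm\eta,\Sigma)$ and the Dirichlet–Laplace prior term $p(\Lambda\mid\phi,\tau,\psi)$ are standard: taking $\log$ of the Gaussian density $\prod_i\calN_p(y_i\mid\Lambda\eta_i,\Sigma)$ and using $\partial(y-\Lambda\eta)^\top\Sigma^{-1}(y-\Lambda\eta)/\partial\Lambda = -2\Sigma^{-1}(y-\Lambda\eta)\eta^\top$ yields the first summand $\Sigma^{-1}\sum_i(y_i-\Lambda\eta_i)\eta_i^\top$, while the Gaussian prior $\lambda_{jh}\ind\calN(0,\psi_{jh}\phi_{jh}^2\tau^2)$ contributes entrywise $-\lambda_{jh}/(\psi_{jh}\phi_{jh}^2\tau^2)$, which is exactly $-\Lambda\oslash(\psi\odot\phi^2)\tau^2$ written with the Hadamard notation of the statement. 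So the real work is the DPP normalizing/kernel term $\e^{1-\Dapp}/(1-\e^{-\Dapp})\cdot\det[\Capp](T\mu_1,\dots,T\mu_n)$.

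For that term I would first isolate the only $\Lambda$-dependence, which enters through $\varphi(k)$ for $k\in\mathbb Z_N^d$ via the Gaussian-like spectral density in \eqref{phi_gauss}: $\varphi(k)=\rho(2\pi)^{2/d}c^{-1}\exp\bigl(-2\pi^2|\Lambda^\top\Lambda|^{1/d}c^{-2/d}\,k^\top(\Lambda^\top\Lambda)^{-1}k\bigr)$. By the chain rule $\nabla_\Lambda[\cdots]=\sum_{k}\bigl(\partial[\cdots]/\partial\varphi(k)\bigr)\nabla_\Lambda\varphi(k)$. The scalar derivative with respect to $\varphi(k)$ splits, by $\log\det$ differentiation and the product rule on $\e^{1-\Dapp}/(1-\e^{-\Dapp})$, into the bracket $\bigl[(1-\varphi(k))/(1-\e^{-\Dapp})-v_k^\top(\Capp)^{-1}u_k\bigr]$ times $\varphi(k)/(1-\varphi(k))^2$: the first piece comes from $\partial\Dapp/\partial\varphi(k)=1/(1-\varphi(k))$ fed through $\e^{1-\Dapp}/(1-\e^{-\Dapp})$, and the second from $\partial\log\det\Capp/\partial\varphi(k)=\mathrm{tr}\bigl((\Capp)^{-1}\partial\Capp/\partial\varphi(k)\bigr)$, where $\partial\Capp/\partial\varphi(k)=(1-\varphi(k))^{-2}u_kv_k^\top$ follows by differentiating $\varphi(k)/(1-\varphi(k))$ and reading off the rank-one outer product $e^{2\pi i\langle k,T\mu_h\rangle}e^{-2\pi i\langle k,T\mu_l\rangle}$ from \eqref{eq:approx_dens}; the trace of $(\Capp)^{-1}u_kv_k^\top$ is $v_k^\top(\Capp)^{-1}u_k$.

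It remains to compute $\nabla_\Lambda\varphi(k)$, equivalently $\nabla_\Lambda$ of the exponent $-2\pi^2c^{-2/d}\,|\Lambda^\top\Lambda|^{1/d}k^\top(\Lambda^\top\Lambda)^{-1}k$. Writing $M=\Lambda^\top\Lambda$, I would use $\partial|M|^{1/d}/\partial\Lambda = \tfrac1d|M|^{1/d}\cdot 2\Lambda M^{-1}$ (via $\partial\log|M|/\partial\Lambda=2\Lambda M^{-1}$) and $\partial(k^\top M^{-1}k)/\partial\Lambda = -2\Lambda M^{-1}k k^\top M^{-1} = -2\Lambda M^{-1}k(M^{-1}k)^\top$ (via $\partial M^{-1}/\partial\Lambda$ and symmetry of $M$). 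Combining with the product rule and factoring out $2|M|^{1/d}\Lambda M^{-1}$ gives exactly $g^{(k)}/(\text{sign})$, so that $\nabla_\Lambda\varphi(k) = -2\pi^2c^{-2/d}\varphi(k)\,g^{(k)}$ with $g^{(k)}=2|M|^{1/d}\Lambda M^{-1}[\tfrac1d k^\top M^{-1}k\,\mathds 1_d - k(M^{-1}k)^\top]$ as displayed. Assembling the pieces and cancelling one factor of $\varphi(k)$ against the $\varphi(k)/(1-\varphi(k))^2$ from the scalar derivative — wait, rather, keeping it — produces the stated $(2\pi^2c^{-2/d})\sum_k g^{(k)}\varphi(k)(1-\varphi(k))^{-2}[\cdots]$; the overall sign flips to $+$ because the scalar-derivative bracket was built from $-\partial\Dapp$ contributions. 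The main obstacle is purely bookkeeping: keeping the matrix-calculus identities for $|\Lambda^\top\Lambda|^{1/d}$ and $(\Lambda^\top\Lambda)^{-1}$ consistent (they are symmetric in the two columns of $k(M^{-1}k)^\top$ only after symmetrization) and tracking the signs through the $\log\det$ and the $\e^{1-\Dapp}/(1-\e^{-\Dapp})$ fraction so that everything collapses into the compact Hadamard-plus-trace form; there is no analytic difficulty beyond that.
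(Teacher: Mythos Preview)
Your proposal is correct and follows essentially the same route as the paper's proof: split the log full conditional into the Gaussian likelihood, the Dirichlet--Laplace prior, and the DPP term, handle the first two by elementary Gaussian calculus, and reduce the DPP term to differentiating $\Dapp$ and $\log\det\Capp$ through $\varphi(k)$ using the same matrix-calculus identities for $|\Lambda^\top\Lambda|^{1/d}$ and $k^\top(\Lambda^\top\Lambda)^{-1}k$. The only organizational difference is that the paper packages the two DPP pieces into separate lemmas (one for $\nabla\Dapp$, one for $\nabla\log\det\Capp$) and works entrywise via $\partial/\partial\Lambda_{ij}$, whereas you phrase the whole DPP contribution as a single chain rule $\sum_k(\partial[\cdots]/\partial\varphi(k))\nabla_\Lambda\varphi(k)$; the computations and the resulting formula are identical.
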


Figure~\ref{fig:ad_vs_grad} in Appendix~\ref{app:analytic-automatic} reports a comparison of the memory requirements and the runtime execution using the AD gradients or our analytical expressions. 
In particular, using the AD gradients requires roughly  100x more memory, which makes a significant difference in practice  since this is the bottleneck of our algorithm.
The runtimes are 10x larger when using AD as well.
The stepsize parameter of the MALA algorithm is tuned running short preliminary chains to get an acceptance rate around $20\%$. In particular, we found that values between $10^{-8}$ and $10^{-10}$ usually give a good mixing of the MCMC chain.

\section{Simulation Studies}\label{sec:simu}

Appendix~\ref{app:simu1} shows several simulated examples aimed at showcasing different aspects of our model. Beyond investigating the effect of truncating the DPP density (Appendix~\ref{app:truncation}) and the usefulness of the analytic expression of the gradients found in Theorem~\ref{teo:grad_lambda} (Appendix~\ref{app:analytic-automatic}), we show the need of assuming the anisotropic DPP, as opposed to an isotropic DPP prior, in the context of latent mixture models in Appendix~\ref{app:comparison_iso_aniso}. Appendix~\ref{app:hyperparams} illustrates 
a procedure to fix hyperparameters.
Moreover, in Appendix~\ref{app:robustness} we demonstrate the robustness of posterior inference to the hyperparameters of the anisotropic DPP prior for the latent cluster centers. 

Appendix \ref{app:comparison-lamb} focuses on the natural competitor of our APPLAM model, namely the Lamb model by \cite{Chandra20}. 
Observe that \cite{Chandra20} already make a compelling argument in favor of latent mixture models compared to more heuristic approaches based on a two-step procedure. 
Therefore, we limit ourselves to showing the robustness of APPLAM to model misspecification and the Lamb model's lack thereof.
Specifically, we consider two simulation settings, where model misspecification occurs either at the latent level, i.e., the latent scores $\eta_i$ are simulated from a mixture of $t$ distributions, or at the likelihood level, i.e., the observations, conditional to all parameters, follow a $t$ distribution.
In both cases, for various choices of the latent dimension $d = \{4, 8\}$, data dimension $p = \{500, 1000\}$ and 100 iid replicates of the datasets, we demonstrate how the Lamb model significantly overestimates the number of clusters, contrary to our APPLAM model. Moreover, we show that APPLAM produces more reliable cluster estimates than Lamb.

In summary, our simulations confirm empirically the intuition that repulsive mixtures are more robust than non-repulsive ones in the presence of model misspecification. Moreover, they also justify the introduction of anisotropic DPPs presented here, as it is clear that forcing isotropic repulsion across cluster centers does not translate in well-separated clusters of data a posteriori. On the contrary, the isotropic models might greatly overestimate the number of clusters.

\section{Joint Species Modeling}

We apply our APPLAM model to data collecting the occurrence of plant species at different sites of the Bauges Natural Regional Park in France.
The data are available from the Alpine Botanical Conservatory (CBNA) and have been previously investigated by \cite{Bystrova21} and \cite{Thuiller18}. 
Specifically, the dataset we analyze refers to $n=1139$ sites and, for each site, the presence-absence of $p=123$ different plant species is reported.
Our goal is to infer the clustering structure of the sites, such that sites belonging to the same cluster show similar patterns concerning the occurrence of the plant species.
Note that the data are binary;
precisely, let $z_i \in \{0, 1\}^p$, $i=1, \ldots, n$,  such that $z_{i,j}= 1$ if the plant species $j$ occurs at the site $i$, $z_{i,j}=0$ otherwise. 
We assume $z_{i,j} = \indicator_{[0,\infty)}(y_{i,j})$ and apply the APPLAM model to the $y_i$'s, assuming the Gaussian-like DPP.
To avoid non-identifiability, we fix the $\Delta_h$'s in \eqref{eq:prior_marks} as the $d \times d$ identity matrices.
Posterior simulation requires minor modifications to the Gibbs sampling algorithm, and we discuss them in Appendix~\ref{sec:app_Gibbsxbinary}.

We rely on model selection to select the hyperparameters. In particular, we examine a total of 144 models where we vary the degree of repulsiveness by tuning $(\rho, c)$, as well as other hyperparameters. See Appendix \ref{app:bauges_hyper_elic} for more details. From Figure~\ref{fig:bauges_nclus_d3} and \ref{fig:bauges_nclus_d4},  it is clear that if $\rho|R|$ is large and/or the strength of repulsion $s$ is mild, the number of estimated clusters increases, which confirms our insights from the simulation studies. In particular, the number of estimated clusters ranges between 3 and 15; see also Tables \ref{tab:bauges_d3} and \ref{tab:bauges_d4}.
We focus only on models for which the posterior mean of the number of clusters is less than eight and select the best one by maximizing the WAIC  index \citep{Watanabe13}, which is standard in model selection. This yields the following choice of hyperparameters: $d=3$, $\rho|R| = 0.1$, $s = 0.9$, $\alpha = 10^{-3}$ and $a_\sigma = 2, b_\sigma = 1$.  Note that $\alpha$ small, as in this case here, implies a \emph{sparse} marginal prior for the mixture weights $\bm w$ in \eqref{eq:prior_marks} \citep{rousseau2011asymptotic}.

We discuss here the estimated clusters, obtained by minimizing the posterior expectation of Binder's loss function \citep{Bin(78)}, under this last hyperparameter selection. The estimated clusters are five,  with cardinalities between 200 and 250.
To interpret the estimated clusters, we look at the patterns of presence-absence of species in each cluster. Specifically, for $j=1, \ldots, p$, let $\Bar{z}_j$ and $\Bar{z}_{c,j}$  be the empirical frequency of the $j$-th species in the whole sample and in the $c$-th cluster, respectively.
For each cluster, we select six species that better represent it by choosing the three species that maximize (resp. minimize) $\delta_{c, j} = (\Bar{z}_{c,j} - \Bar{z}_j)$. In total, we find 20 species that best describe the different clusters. We report the corresponding $\delta_{c,j}$'s in Figure~\ref{fig:relevant_species_by_cluster}. Cluster~3 does not significantly depart from the sample averages, with $|\delta_{3,j}| < 0.1$ for all the selected species, but with an overall prevalence of lower than average species presence.
Cluster~4 features the most extreme deviations from the mean compared to the other clusters, especially higher than average deviations. Indeed, species 27 (the common rock-rose), 46 (the great yellow gentian), 97 (the sweet vernal grass) have a $\delta_{4,j}$ around or higher than 0.2. However, this cluster also records the most extreme lower-than-average deviations, with the presence of species 45 (the common beech) lower than  $0.15$ compared to the associated sample average and species 112 (the silver fir) lower than $0.1$. 
Cluster~5 differs from the other clusters mainly for the higher presence of species 6 (the ivy), 45 (the common beech), 83 (the sweet woodruff), 112 (the silver fir), 117 (the field maple), 121 (a blackberry species). The common beech helps explain the difference between cluster 4 and 5.
Finally, cluster~2 is mainly characterized by the higher presence of species 72 (the Saint Anthony's Turnip), while cluster~1 records a higher presence of species 28 (the common hazel). The bottom right panel of Figure~\ref{fig:relevant_species_by_cluster} reports the posterior expectation of $\Lambda \Lambda^\top$ restricted on the 20 species explaining the five estimated clusters.
A  non-negligible negative correlation value is estimated between species 72 (the Saint Anthony's Turnip) and 117 (the field maple): this can be observed in the empirical proportions of these two species within each of the five estimated clusters.

\begin{figure}[t]
    \centering 
\begin{subfigure}{0.33\textwidth}
  \includegraphics[width=\linewidth]{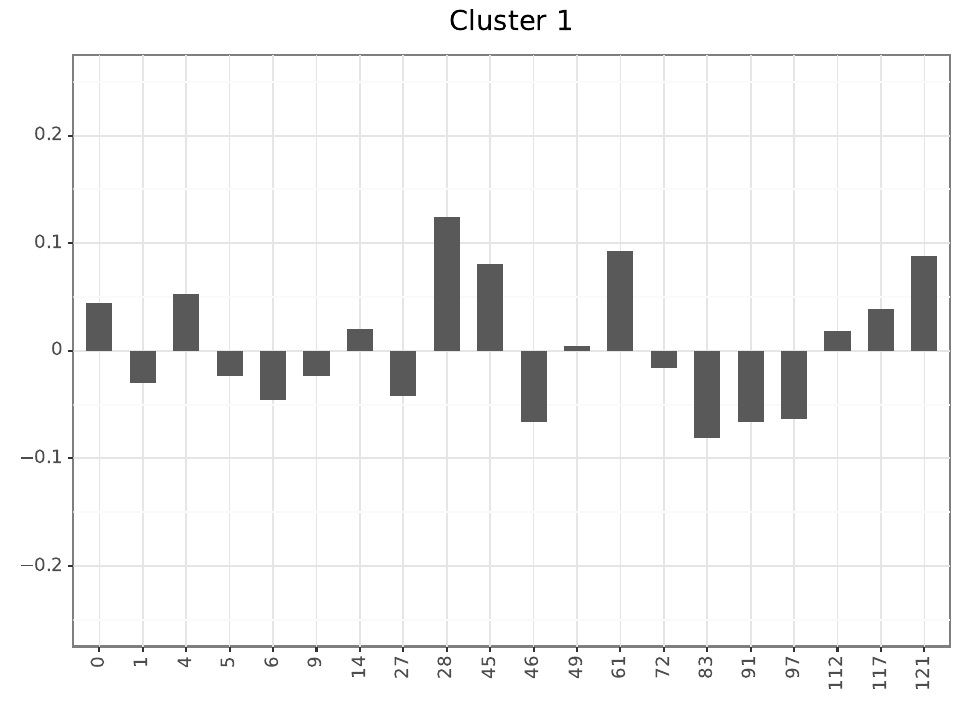}
\end{subfigure}\hfil 
\begin{subfigure}{0.33\textwidth}
  \includegraphics[width=\linewidth]{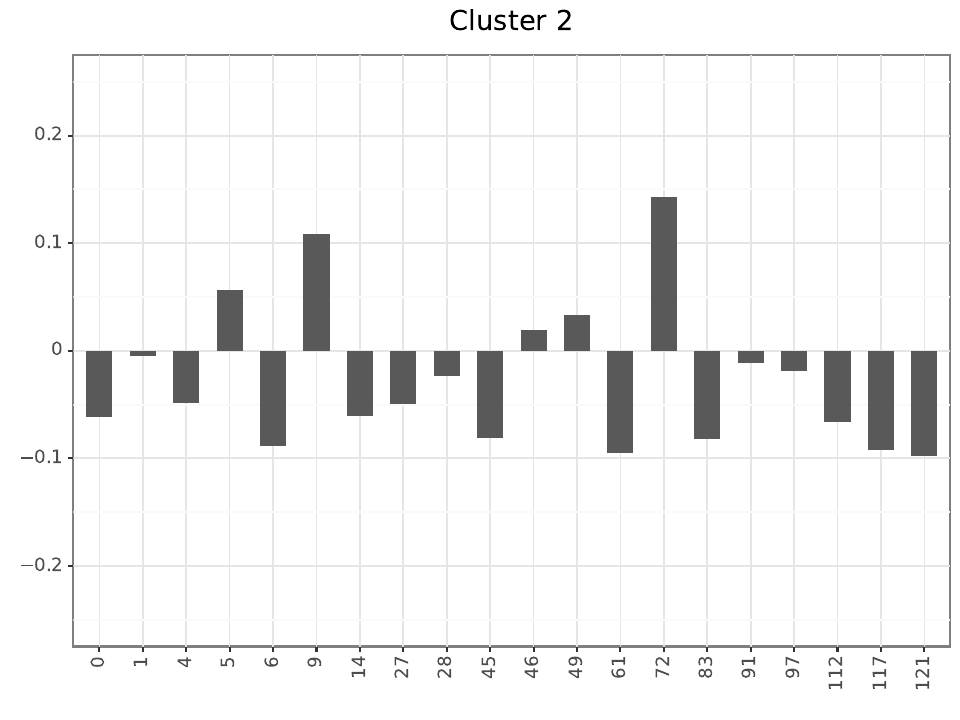}
\end{subfigure}\hfil 
\begin{subfigure}{0.33\textwidth}
  \includegraphics[width=\linewidth]{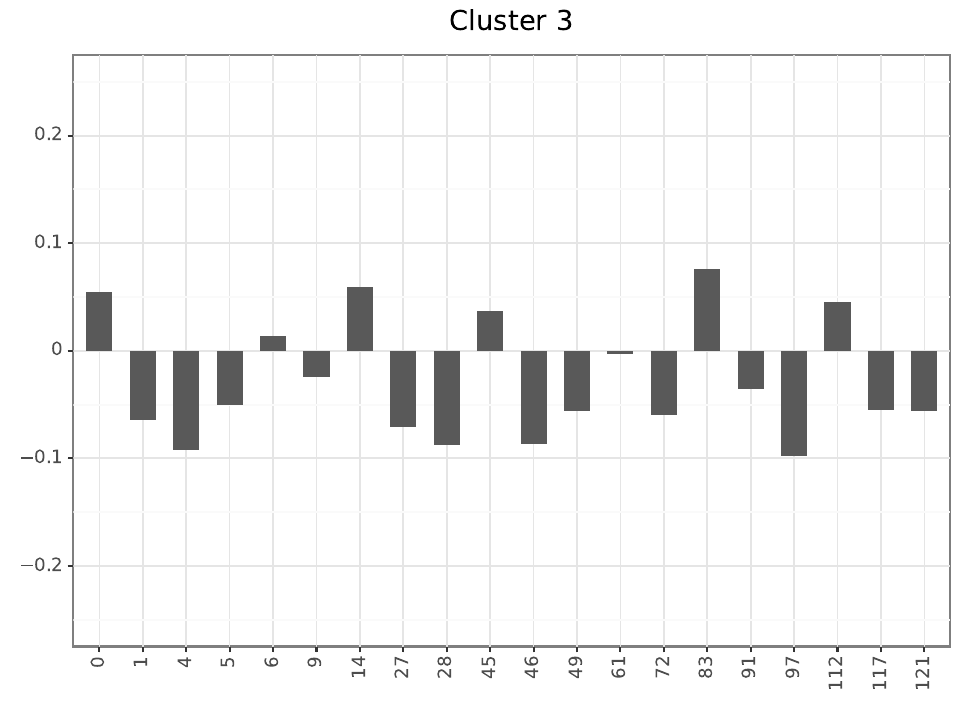}
\end{subfigure}

\medskip
\begin{subfigure}{0.33\textwidth}
  \includegraphics[width=\linewidth]{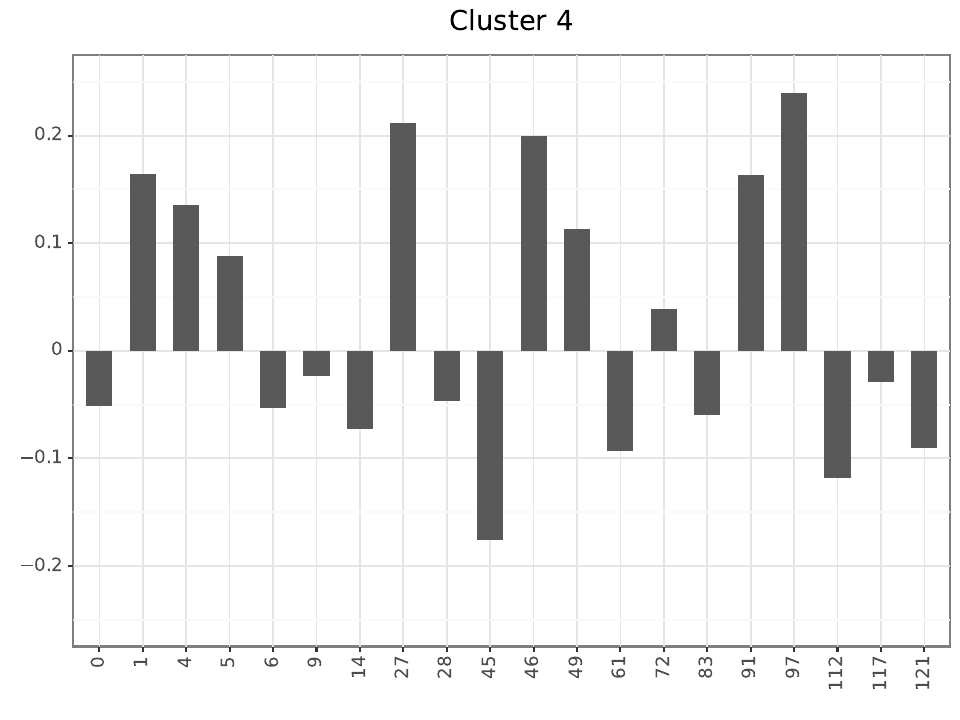}
\end{subfigure}\hfil 
\begin{subfigure}{0.33\textwidth}
  \includegraphics[width=\linewidth]{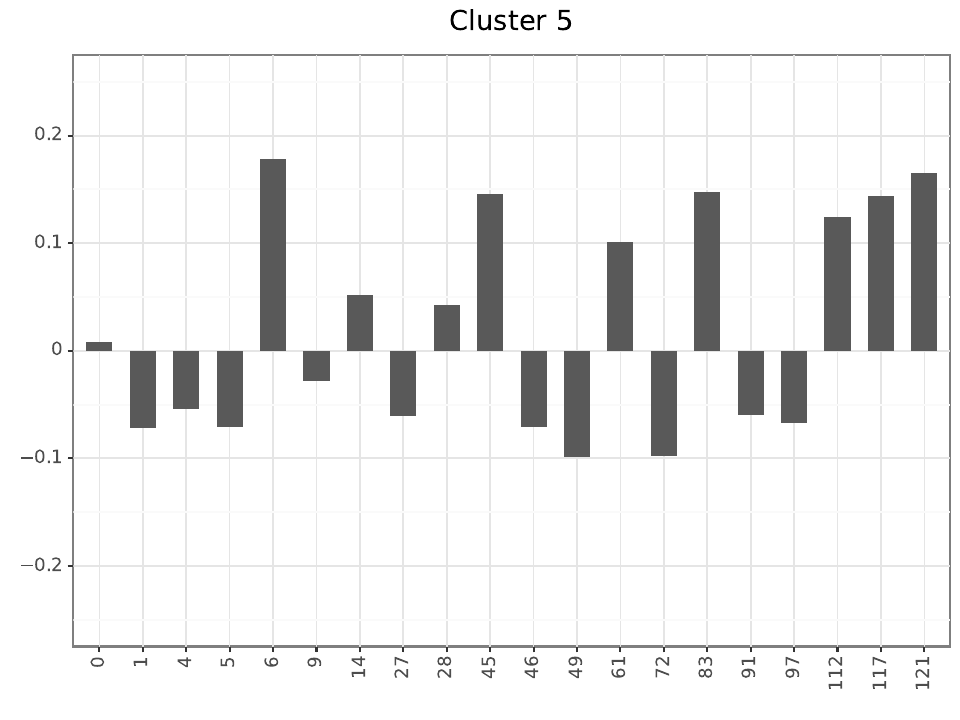}
\end{subfigure}\hfil 
\begin{subfigure}{0.33\textwidth}\centering
  \includegraphics[width=\linewidth]{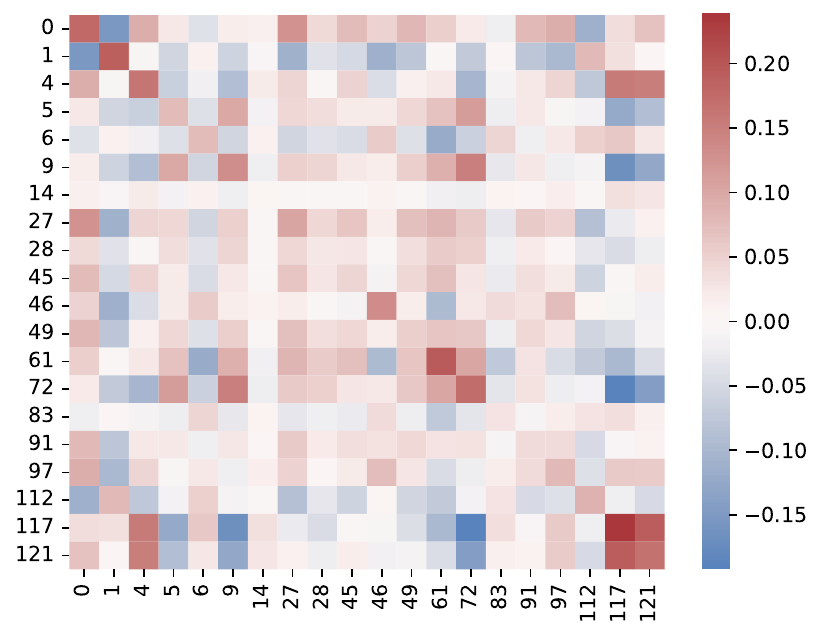}
\end{subfigure}
\caption{Values of $\delta_{c,j}$ for the 20 species identifying the five clusters and (bottom right) posterior estimate of $\Lambda \Lambda^\top$ restricted on the 20 species.
}
\label{fig:relevant_species_by_cluster}
\end{figure}

\section{Discussion}

 Model-based clustering of moderate or large dimensional data is notoriously difficult. 
 We have proposed a model for simultaneous dimensionality reduction and clustering, by
 assuming a mixture model for the latent scores, which are then linked to the observations via a
 Gaussian latent factor model. This approach was recently investigated by \cite{Chandra20}. The authors use a factor-analytic representation and assume a mixture model for the latent factors. However, performance can deteriorate in the presence of model misspecification. 
 Assuming a repulsive point process prior for the component-specific means of the mixture for the latent scores is shown to yield a more robust model that outperforms the standard Lamb on several simulated scenarios.
To favor well-separated clusters of data, the repulsive point process must be anisotropic, and its density should be tractable for efficient posterior inference. We address these issues by proposing a general construction for anisotropic determinantal point processes.

The bottleneck in our sampling algorithm is the spectral approximation of the DPP density, which has a computational cost that scales exponentially in $d$, the dimension of the latent factors. 
It is common practice to set small values for $d$ in latent factor models.
Nonetheless, for moderate or large values of $d$, the approach in \cite{bardenet} would be more efficient, at the price that the parameters in the model are harder to interpret. Another approximation of the DPP density is provided by \cite{poinas}, suitable also for non (hyper-)rectangular domains.
In the case of projection DPPs, i.e., when the eigenvalues of the spectral decomposition are either zero or one, we could sample exactly from the full-conditional of the non-allocated components using the methods in \cite{lav_rubak} instead of employing a birth-death Metropolis-Hastings move.

We have not investigated the estimation of the factor loading matrix $\Lambda$, which could be of interest to explain the correlation structure of the data. Our model inherits the rotational invariance property of classical factor models, but this issue can be dealt with via ex-post procedures for the estimation of $\Lambda$.  See, for instance, \cite{papasta2022} and the references therein.

\section*{Acknowledgements}
M.B. and A.G. acknowledge the support by MUR, grant Dipartimento di Eccellenza 2023-2027.
M.B. received funding from the European Research Council (ERC) under the European Union’s Horizon
2020 research and innovation programme under grant agreement No 817257. A.G. has been partially supported by MUR - Prin 2022 - Grant no. 2022CLTYP4, funded by the European Union –Next Generation EU.
We gratefully acknowledge the DataCloud laboratory (\url{https://datacloud.polimi.it}): some of the numerical experiments have been performed thanks
to the Cloud resources offered by the DataCloud laboratory.

\FloatBarrier

\bibliographystyle{biometrika}
\bibliography{references}

\appendix

\makeatletter
\renewcommand \thesection{S\@arabic\c@section}
\makeatother
\renewcommand{\theequation}{\thesection.\arabic{equation}}
\renewcommand{\thefigure}{S\arabic{figure}}
\renewcommand{\thetable}{S\arabic{table}}

\section{Further notation and proofs}

\subsection{General construction of DPPs}
\label{app:notation_proofs}

The $m$-th factorial measure of $\Phi$ is defined as
\[
\Phi^{(m)}(B_1\times \cdots \times B_m) = \sum_{\mu_1,\ldots,\mu_m \in \Phi}^{\neq} \indicator[\mu_1 \in B_1]  \cdots \indicator[\mu_m \in B_m]
\]
for measurable $B_1, \ldots, B_{m} \subset \R^d$. The summation is intended over all $m$-tuples of pairwise different points in $\Phi$ and it is standard notation.
The $m$-th factorial moment measure is  defined as  $M_{\Phi^{(m)}}(B_1\times \cdots \times B_m) = \E \left[ \Phi^{(m)}(B_1\times \cdots \times B_m) \right]$, where the expectation is taken with respect to $\Phi$.

In order to define a DPP, let $K: \R^d \times \R^d \rightarrow \mathbb{C}$ be a continuous covariance kernel.
Then $\Phi$ is a DPP on $\R^d$ if, for all $m=1,2,\ldots$, its $m$-th factorial moment measure has a density, with respect to the $m$-folded product of the $d$-dimensional Lebesgue measure, $\rho^{(m)}$, defined as
\[
    \rho^{(m)}(\mu_1, \ldots, \mu_m) = \det\{K(\mu_h, \mu_k)\}_{h, k = 1, \ldots, m}, \qquad \mu_1, \ldots, \mu_m \in \R^d.
\]
By Mercer's theorem $K(x, y) = \sum_{j \geq 1} \gamma_j \xi_j(x) \overline{\xi}_j(y)$ where the $\xi_j$'s form an orthonormal basis for $L^2(\R^d; \mathbb C)$ of complex-valued functions and the $\gamma_j$'s are a summable nonnegative sequence. 
Then, existence of a DPP with kernel $K$ is equivalent to $\gamma_j \leq 1$ for all $j$, see \cite{Macchi75}.
When restricted to a compact $R \subset \R^d$, $\Phi$ is still a DPP with kernel $K$ restricted to $R \times R$.
In particular, if $\gamma_j < 1$ for all $j$, $\Phi$ has a density with respect to the unit rate Poisson point process on $R$ given by
\begin{equation*}
    p(\{\mu_1, \ldots, \mu_m\}) = \e^{|R| - D} \det\{C(\mu_h, \mu_k)\}_{h, k = 1, \ldots, m}, \qquad \mu_1, \ldots, \mu_m \in R,
\end{equation*}
where $C(x, y) = \sum_{j \geq 1}\gamma_j / (1 - \gamma_j) \xi_j(x) \overline{\xi}_j(y)$, $|R| = \int_R \dd x$, and $D = - \sum_{j \geq 1} \log(1 - \gamma_j)$. See \cite{Lav15} for a proof of such results.

\subsection{Proof of Theorem \ref{teo:aniso_dpp}}
\begin{proof}
Since $\Lambda$ is full rank, $\Lambda^T \Lambda$ is positive definite and invertible, so that the conditional distribution (\ref{eq:model_Y}) for  $Y$ is well-defined. 
Denoting with $|\Lambda^T \Lambda|$ the determinant of $\Lambda^T \Lambda$, we explicitly compute the marginal density $h(x)$ of $Y$.
We have:
\begin{equation*}
h(x)=\int_0^{+\infty}{p(x\,|\,w)\cdot p(w) dw} = \int_0^{+\infty}{\frac{|\Lambda^T \Lambda|^{\frac{1}{2}}}{(2 \pi)^{\frac{d}{2}} w^{\frac{d}{2}}} \exp \biggl(-\frac{x^T \Lambda^T \Lambda x}{2w}\biggr) \cdot p(w) dw}
\end{equation*}
Therefore, we derive
\begin{equation}
h(x)= \frac{|\Lambda^T \Lambda|^{\frac{1}{2}}}{(2 \pi)^{\frac{d}{2}}} \, \E\biggl[W^{-\frac{d}{2}} \exp\biggl(-\frac{||\Lambda x||^2}{2W}\biggr)\biggr], \qquad x \in \R^d
\label{eq: density_Y}
\end{equation}
Consequently, since $\displaystyle K_0(x)=\rho h(x)/h(0)$, we have
\begin{equation*}
K_0(x)=\frac{\rho}{\E\bigl[W^{-\frac{d}{2}}\bigr]} \, \E\biggl[W^{-\frac{d}{2}} \exp\biggl(-\frac{||\Lambda x||^2}{2W}\biggr)\biggr], \qquad x \in \R^d
\end{equation*}
and, since $\varphi = \calF(K_0)$, we compute
\begin{equation*}
\begin{aligned}
	\varphi(x)&=\int_{\R^d}{\e^{-2\pi i x^T y}\, K_0(y) dy}=\\
	&= \frac{\rho}{h(0)} \int_{\R^d}{\e^{-2\pi i x^T y}\, \frac{|\Lambda^T \Lambda|^{\frac{1}{2}}}{(2 \pi)^{\frac{d}{2}}} \, \E\biggl[W^{-\frac{d}{2}} \exp\biggl(-\frac{||\Lambda y||^2}{2W}\biggr)\biggr] dy}=\\
	&=\frac{\rho \,|\Lambda^T \Lambda|^{\frac{1}{2}}}{h(0)\, (2 \pi)^{\frac{d}{2}}} \int_{\R^d}{\e^{-2\pi i x^T y} \int_0^{\infty}{w^{-\frac{d}{2}} \exp \biggl(-\frac{|| \Lambda y||^2}{2w}\biggr) p(w)dw}\,dy}=\\
	&=\frac{\rho \,|\Lambda^T \Lambda|^{\frac{1}{2}}}{h(0)\, (2 \pi)^{\frac{d}{2}}} \int_0^{\infty}{\int_{\R^d}{w^{-\frac{d}{2}} \exp\biggl(-2\pi i x^T y - \frac{y^T \Lambda^T \Lambda y}{2w}\biggr) dy} \,p(w)dw}=\\
&=\frac{\rho \,|\Lambda^T \Lambda|^{\frac{1}{2}}}{h(0)\, (2 \pi)^{\frac{d}{2}}} \int_0^{\infty}{\int_{\R^d}{w^{-\frac{d}{2}} \exp\biggl(-\frac{1}{2} \biggl[y^T \frac{\Lambda^T \Lambda}{w} y + 4\pi i x^T y \biggr]\biggr) dy}\, p(w)dw}
\end{aligned}
\end{equation*}
The term in the squared brackets above can be written as
\begin{equation*}
\begin{split}
\bigl[...\bigr]=&\bigl(y-(-2\pi i w (\Lambda^T \Lambda)^{-1} x)\bigr)^T \, \frac{\Lambda^T \Lambda}{w} \,\bigl(y-(-2\pi i w (\Lambda^T \Lambda)^{-1} x)\bigr)\\[1ex]
&+ 4\pi^2 x^T w(\Lambda^T \Lambda)^{-1} x \, .  
\end{split}
\end{equation*}
If we plug it in in the last expression of $\varphi(x)$ we have:
\begin{equation*}
\varphi(x)=\frac{\rho}{h(0)} \int_0^{+\infty}{\exp(-2\pi^2 w x^T (\Lambda^T \Lambda)^{-1} x)\, p(w)dw} \, .
\end{equation*}
Summing up, we derive
\begin{equation*}
\varphi(x)=\frac{\rho}{h(0)}\, \E\biggl[\exp\bigl(-2\pi^2 W x^T (\Lambda^T \Lambda)^{-1} x\bigr)\biggr], \qquad x \in \R^d \, .
\end{equation*}
Observe that, since $h(x)$ is the density of a real-valued random variable, using Fourier transform properties, then 
\[
	K_0(x)=\rho \,\frac{h(x)}{h(0)} \in L^1(\R^d), \qquad \varphi = \calF(K_0) \in L^1(\R^d) \, .
\]
Therefore, to guarantee the existence of the anisotropic determinantal point process with kernel $K_0$ and spectral density $\varphi = \calF(K_0)$, we refer to Corollary~(3.3) of \cite{Lav15}: since $K_0 \in L^1(\R^d), \varphi = \calF(K_0)$ and $\varphi \in L^1(\R^d)$, we just need to ensure 
\[
	\varphi(x) \leq 1, \qquad \forall x \in \R^d \, .  
\]
So, we need to assume 
\[
	\max_{x \in \R^d} \varphi(x)= \varphi(0) = \frac{\rho}{h(0)} \leq 1 \, .
\]
This assumption is equivalent to
\begin{equation*}
\rho \leq \rho_{max}
\end{equation*}
where
\begin{equation*}
\rho_{max} = h(0)=\frac{|\Lambda^T \Lambda|^{\frac{1}{2}}}{(2\pi)^{\frac{d}{2}}}\, \E\biggl[ W^{-\frac{d}{2}} \biggr].
\end{equation*}
\end{proof}

\subsection{Proof of Theorem \ref{teo:trans_dpp}}

Let $B := \Lambda \R^d$. Then $\Lambda: \R^d \rightarrow B$ is a bijective map. We perform a change of coordinates on $B$ using the orthonormal basis given by the singular value decomposition of  $\Lambda = U \Omega V^\top$ where $U$ is a $p \times d$ matrix whose columns are an orthonormal basis of $B$, $\Omega$ is a $d \times d$ diagonal matrix and $V$ is a $d \times d$ orthogonal matrix.
Let $\Lambda^* := \Omega V^\top$ which maps $\R^d$ into $B$ expressed with the coordinates given by $U$.

Consider now the transformed points $\tilde \mu_h = \Lambda \mu_h$. These are clearly in a one-to-one relation with the points $\mu^*_h = \Lambda^* \mu_h$, given by $\tilde \mu_h = U \mu^*_h$.
Then, it is sufficient to show that $\Phi^* = \{\mu^*_1, \ldots, \mu^*_m\}$ is an isotropic DPP on $B^* = \Lambda^* \R^d = \R^d$.
Consider a measurable function $h: B^{*k} \rightarrow \R_+$,
\begin{equation*}
    \E \sum_{\bm  {y}_k^* \in \Phi^*}^{\neq} h(y^*_1, \ldots, y^*_k) = \E \sum_{\bm  {x}_k \in \Phi}^{\neq} h(\Lambda^* x_1, \ldots, \Lambda^* x_k),
\end{equation*}
where the summation is intended over all the $k$-tuples of pairwise disjoint points in the support of the point process. 
By the definition of the DPP
\begin{align*}
    \E \sum_{\bm  {y}_k^* \in \Phi^*}^{\neq} h(y^*_1, \ldots, y^*_k)
    &= \E \sum_{\bm{x}_k \in \Phi}^{\neq} h(\Lambda^* x_1, \ldots, \Lambda^* x_k) \\
    &= \int_{(\R^d)^k} \det\{K(x_i, x_j)\}_{i, j = 1, \ldots, k} \, h(\Lambda^* x_1, \ldots, \Lambda^*x_k) \dd x_1 \cdots \dd x_k \\
    &= \int_{B^{*k}} \det\{K((\Lambda^*)^{-1} y^*_i, (\Lambda^*)^{-1} y^*_j)\}_{i, j = 1, \ldots, k} \, h(y^*_1, \ldots, y^*_k) \det(\Lambda^*)^{- k} \dd y^*_1 \cdots \dd y^*_k
\end{align*}
which shows that $\Lambda^* \mu$ is a DPP on $B^*$ with kernel given by $\det(\Lambda^*)^{-1} K((\Lambda^*)^{-1} y^*_i, (\Lambda^*)^{-1} y^*_j)$ and the dominating measure on $B^*$ is the Lebesgue measure. 
Stationarity and isotropy follow by choosing $K \equiv K_0$ as in Theorem~\ref{teo:aniso_dpp}. Indeed,
\begin{align*}
    K((\Lambda^*)^{-1} y^*_i, (\Lambda^*)^{-1} y^*_j) &= K_0((\Lambda^*)^{-1} (y^*_i - y^*_j) )\\
    &= \frac{\rho}{\E[W^{-d/2}]} \E \left[ W^{-d/2} \exp\left(- \frac{\|y^*_i - y^*_j\|^2}{2W}\right)\right]
\end{align*}

The result follows by noting that $U$ is orthogonal (i.e., the map between the $\tilde \mu_j$ and $\mu^*_j$ is an isometric bijection) and that $\det(\Lambda^*) = \det(\Lambda^\top \Lambda)^{1/2}$.

\subsection{Proof of Corollary \ref{cor:gauss_dpp}}
\begin{proof}
Let $c > 0$. In Theorem~\ref{teo:aniso_dpp}, set
\[
 W=|\Lambda^T \Lambda|^{\frac{1}{d}} \cdot c^{-\frac{2}{d}}
\]
Consequently, $W^{-\frac{d}{2}} = |\Lambda^T \Lambda|^{-\frac{1}{2}} \cdot c$, and from  \eqref{eq: density_Y}, we derive
\[
h(x)= \frac{c}{(2\pi)^{\frac{d}{2}}} \cdot \exp\biggl( -\frac{||\Lambda x||^2}{2|\Lambda^T \Lambda|^{\frac{1}{d}} \,c^{-\frac{2}{d}}} \biggr), \qquad x \in \R^d \, .
\]
From  \eqref{eq: kernel_Y} and \eqref{eq: spect_dens_Y}, we have
\begin{equation*}
\begin{split}
&K_0(x)=\rho \cdot \exp \biggl( -\frac{||\Lambda x||^2}{2|\Lambda^T \Lambda|^{\frac{1}{d}} \,c^{-\frac{2}{d}}} \biggr), \qquad x \in \R^d\\[1ex]
&\varphi(x)=\rho \,\frac{(2\pi)^{\frac{d}{2}}}{c} \cdot \exp \biggl( -2\pi^2 |\Lambda^T \Lambda|^{\frac{1}{d}} c^{-\frac{2}{d}} x^T (\Lambda^T \Lambda)^{-1} x \biggr), \qquad x \in \R^d \, .
\end{split}
\end{equation*}
From (\ref{eq: rho_max}), the existence condition requires $\rho \leq \rho_{max}$, with 
\begin{equation*}
\rho_{max} = \frac{c}{(2\pi)^{d/2}} \, .
\end{equation*}
\end{proof}

\subsection{Proof of Theorem~\ref{teo:grad_lambda}}
\begin{proof}
Consider 
\begin{equation}\label{eq_log_full_gauss}
\log p(\Lambda\,|\,\cdots) \propto \log p(\bm y\,|\,\Lambda, \bm\eta, \Sigma)+ \log f^{\text{app}}_{\text{DPP}}(\bm\mu \mid \Lambda) + \log p(\Lambda\,|\,\phi,\tau,\psi)
\end{equation}
We compute the gradient of log-full conditional
density of $\Lambda$ summing up the gradients of the three terms above.
Note that the only term depending on the anisotropic DPP is the second one. Since it is the most complex term, we derive it using the two lemmas below. 
For the first term of \eqref{eq_log_full_gauss},
we have
\begin{equation*}
\begin{aligned}
    \nabla \log p(\bm y\,|\,\Lambda,\bm \eta, \Sigma) &= \nabla \left( - \frac{1}{2} \sum_{i=1}^n (y_i - \Lambda \eta_i)^\top  \Sigma^{-1}(y_i - \Lambda \eta_i) \right)\\
    &=\Sigma^{-1} \cdot \sum\limits_{i=1}^n{(y_i-\Lambda\eta_i)\eta_i^T} \, . 
\end{aligned}
\end{equation*}
The gradient of the last term is
\begin{equation*}
    \nabla \log p(\Lambda\,|\,\phi,\tau,\psi) = - \frac{1}{(\psi \odot \phi^2) \tau^2} \odot \Lambda \, . 
\end{equation*}
The gradient of  the second term of \eqref{eq_log_full_gauss} is,
\begin{align}
     \nabla \log f^{\text{app}}_{\text{DPP}}(\bm\mu \mid \Lambda) &= \nabla \bigl[-D^{\mathrm{app}} - \log(1-\e^{-D^{\mathrm{app}}}) + \log \det[\Capp](T\mu_1,\ldots,T\mu_n)\bigr] \nonumber\\
 &= -\nabla D^{\mathrm{app}} - \nabla \log(1-\e^{-D^{\mathrm{app}}}) + \nabla \log \det[\Capp](T\mu_1,\ldots,T\mu_n) \nonumber\\
 &= -\frac{1}{1-\e^{-D^{\mathrm{app}}}} \nabla D^{\mathrm{app}}+ \nabla \log \det[\Capp](T\mu_1,\ldots,T\mu_n) \label{eq:secondterm}
\end{align}
To handle \eqref{eq:secondterm}, the terms $\nabla D^{\mathrm{app}}$ and $\nabla \log \det[\Capp](T\mu_1,\ldots,T\mu_n)$ are to be computed.

\begin{lemma}
\label{lem:lemma1}
    For the Gaussian-like DPP prior,
    \begin{equation}\label{grad_D_app}
        \nabla D^{\mathrm{app}} = \sum\limits_{k \in \mathbb{Z}_N^d}{\frac{\varphi(k)}{1-\varphi(k)}(-2 \pi^2 c^{-\tfrac2d})g^{(k)}} 
    \end{equation}
    where $\varphi(k)$ refers to \eqref{phi_gauss}.
\end{lemma}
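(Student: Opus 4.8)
The plan is to differentiate $\Dapp = -\sum_{k \in \mathbb Z_N^d} \log(1 - \varphi(k))$ term by term, reducing everything to computing $\nabla_\Lambda \varphi(k)$ for the Gaussian-like spectral density. By the chain rule,
\[
  \nabla \Dapp = \sum_{k \in \mathbb Z_N^d} \frac{1}{1 - \varphi(k)}\, \nabla \varphi(k),
\]
so the whole task is to show that $\nabla \varphi(k) = \varphi(k)\,(-2\pi^2 c^{-2/d})\, g^{(k)}$, which then yields \eqref{grad_D_app} directly. Recall from Corollary~\ref{cor:gauss_dpp} that under the Gaussian-like prior
\[
  \varphi(k) = \rho\, \frac{(2\pi)^{2/d}}{c}\, \exp\!\Bigl(-2\pi^2 |\Lambda^\top\Lambda|^{1/d} c^{-2/d}\, k^\top (\Lambda^\top\Lambda)^{-1} k\Bigr),
\]
so $\log \varphi(k)$ is, up to an additive constant in $\Lambda$, equal to $-2\pi^2 c^{-2/d}\, \psi_k(\Lambda)$ where $\psi_k(\Lambda) := |\Lambda^\top\Lambda|^{1/d}\, k^\top (\Lambda^\top\Lambda)^{-1} k$. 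Since $\nabla \varphi(k) = \varphi(k)\, \nabla \log\varphi(k) = \varphi(k)\,(-2\pi^2 c^{-2/d})\, \nabla \psi_k(\Lambda)$, it remains to identify $\nabla \psi_k(\Lambda)$ with $g^{(k)}$.

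The core computation is therefore the matrix derivative of $\psi_k(\Lambda) = |\Lambda^\top\Lambda|^{1/d}\, k^\top (\Lambda^\top\Lambda)^{-1} k$ with respect to the $p\times d$ matrix $\Lambda$. I would write $A := \Lambda^\top\Lambda$ and use the standard identities $\nabla_\Lambda |A|^{1/d} = \tfrac1d |A|^{1/d}\, 2\Lambda A^{-1}$ (from $\partial \log|A| = \mathrm{tr}(A^{-1}\partial A)$ and $\partial A = \Lambda^\top \partial\Lambda + \partial\Lambda^\top \Lambda$, so that $\nabla_\Lambda \log|A| = 2\Lambda A^{-1}$) together with $\nabla_\Lambda (k^\top A^{-1} k) = -2\Lambda A^{-1} k k^\top A^{-1} = -2\Lambda A^{-1} k (A^{-1}k)^\top$ (using $\partial A^{-1} = -A^{-1}(\partial A)A^{-1}$ and the symmetry of $A$). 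Applying the product rule,
\[
  \nabla_\Lambda \psi_k = \Bigl(\tfrac1d |A|^{1/d}\, 2\Lambda A^{-1}\Bigr) (k^\top A^{-1} k) + |A|^{1/d}\,\bigl(-2\Lambda A^{-1} k (A^{-1}k)^\top\bigr)
  = 2|A|^{1/d}\, \Lambda A^{-1}\Bigl[\tfrac1d\, k^\top A^{-1} k\, I_d - k(A^{-1}k)^\top\Bigr],
\]
which is exactly $g^{(k)}$ as defined in the statement. Substituting back gives $\nabla\varphi(k) = \varphi(k)(-2\pi^2 c^{-2/d}) g^{(k)}$ and hence \eqref{grad_D_app}.

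The only real obstacle is bookkeeping: getting the matrix-calculus identities for $\nabla_\Lambda |A|^{1/d}$ and $\nabla_\Lambda (k^\top A^{-1}k)$ right, with the correct factors of $2$ coming from the symmetric dependence $A = \Lambda^\top\Lambda$, and confirming the layout convention so that the result is a genuine $p\times d$ matrix matching the shape of $\Lambda$. I would double-check the factor structure by verifying the $d=1$ (or scalar $c$) special case and by a dimensional/shape check: $\Lambda A^{-1}$ is $p\times d$, and the bracketed term is $d\times d$, so the product is $p\times d$ as required. Once the two elementary derivatives are pinned down, the rest is a one-line chain-rule argument.
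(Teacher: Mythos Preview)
Your proposal is correct and follows essentially the same route as the paper's proof: both reduce the computation to $\nabla\varphi(k)$ via the chain rule, then split $\psi_k(\Lambda)=|\Lambda^\top\Lambda|^{1/d}\,k^\top(\Lambda^\top\Lambda)^{-1}k$ by the product rule and evaluate the two matrix derivatives separately. The only cosmetic difference is that you derive $\nabla_\Lambda|A|^{1/d}$ and $\nabla_\Lambda(k^\top A^{-1}k)$ from the differentials $\partial\log|A|=\mathrm{tr}(A^{-1}\partial A)$ and $\partial A^{-1}=-A^{-1}(\partial A)A^{-1}$, whereas the paper cites the corresponding Matrix Cookbook formulas; your use of $I_d$ in the bracketed factor is the mathematically correct reading of the paper's $\mathds{1}_d$.
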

\begin{proof}
    Write
    \begin{equation*}
        \nabla D^{\mathrm{app}} = - \sum\limits_{k \in \mathbb{Z}_N^d}{ \nabla \log (1 - \varphi(k))} =\sum\limits_{k \in \mathbb{Z}_N^d}{\frac{1}{1-\varphi(k)} \nabla \varphi(k)}.
    \end{equation*}
    For the Gaussian-like DPP prior, from \eqref{phi_gauss}, observe that, for $k \in \mathbb Z^d$
    \begin{equation*}
    \begin{aligned}
       \nabla \varphi(k) &= \frac{\rho}{c} (2\pi)^{\frac{d}{2}} \,\nabla \exp\left(  -2\pi^2 |\Lambda^T \Lambda|^{\frac{1}{d}} c^{-\frac{2}{d}} k^T (\Lambda^T \Lambda)^{-1} k \right)\\
       &= \varphi(k) \left( -2\pi^2 c^{-\frac{2}{d}} \right) \nabla \left( |\Lambda^T \Lambda|^{\frac{1}{d}} k^T (\Lambda^T \Lambda)^{-1} k \right) \, .
    \end{aligned}
    \end{equation*}
    Note that 
    \begin{equation*}
    \begin{aligned}
        \nabla \left( |\Lambda^T \Lambda|^{\frac{1}{d}} k^T (\Lambda^T \Lambda)^{-1} k \right) =& \nabla \left( |\Lambda^T \Lambda|^{\frac{1}{d}} \right) k^T (\Lambda^T \Lambda)^{-1} k +\\
        & + |\Lambda^T \Lambda|^{\frac{1}{d}} \nabla \left( k^T (\Lambda^T \Lambda)^{-1} k \right) = \\
        = & \frac{1}{d}|\Lambda^T \Lambda|^{\frac{1}{d}-1} 2|\Lambda^T \Lambda|\Lambda(\Lambda^T \Lambda)^{-1} k^T (\Lambda^T \Lambda)^{-1} k + \\
        & + |\Lambda^T \Lambda|^{\frac{1}{d}} \nabla \left( k^T (\Lambda^T \Lambda)^{-1} k \right)
    \end{aligned}
    \end{equation*}
    where, in the last step, formula (53) of \cite{matrixcook} is used. Then,
    \begin{equation*}
        \begin{aligned}
            \nabla \left( k^T (\Lambda^T \Lambda)^{-1} k \right) &= \nabla \mathrm{tr}\left( k^T (\Lambda^T \Lambda)^{-1} k \right) =\\
            &= \nabla \mathrm{tr}\left( (\Lambda^T \Lambda)^{-1} k k^T \right) = -\Lambda (\Lambda^T \Lambda)^{-1} (2k k^T) (\Lambda^T \Lambda)^{-1}
        \end{aligned}
    \end{equation*}
    where, in the last step, we have applied formula (125) of \cite{matrixcook}. For the Gaussian-like DPP prior, this leads to
    \begin{equation*}
    \nabla \varphi(k) = \varphi(k)(-4 \pi^2 c^{-\tfrac2d})|\Lambda^T \Lambda|^{\tfrac1d} \Lambda (\Lambda^T \Lambda)^{-1} 
    \Bigl[ \frac1d k^T (\Lambda^T \Lambda)^{-1} k \mathds{1}_d - k k^T (\Lambda^T \Lambda)^{-1}\Bigr]
    \end{equation*}
    where $\mathds{1}_d$ is $d\times d$ matrix of 1's.
    This concludes the proof of Lemma~\ref{lem:lemma1}.
\end{proof}

\begin{lemma}
\label{lem:lemma_2}
    For the Gaussian-like DPP prior,
    \begin{equation}\label{grad_ldcapp}
        \nabla \log \det[\Capp] = (-2 \pi^2 c^{-\tfrac2d})\sum\limits_{k \in \mathbb{Z}_N^d}{\frac{\varphi(k)}{(1-\varphi(k))^2} g^{(k)} v_k^T (\Capp)^{-1} u_k}
    \end{equation}
    where $\varphi(k)$ refers to \eqref{phi_gauss}.
\end{lemma}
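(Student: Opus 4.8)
The plan is to extract from the defining expression \eqref{eq:approx_dens} the rank-one structure of the matrix $\Capp := \Capp(T\mu_1,\ldots,T\mu_m)$. Writing out a generic entry and splitting the exponential, $\Capp(T\mu_h,T\mu_j) = \sum_{k\in\mathbb Z_N^d}\frac{\varphi(k)}{1-\varphi(k)}\,e^{2\pi i\langle k,T\mu_h\rangle}\,e^{-2\pi i\langle k,T\mu_j\rangle}$, so that with the vectors $u_k,v_k$ of Theorem~\ref{teo:grad_lambda} one has
\[
\Capp \;=\; \sum_{k\in\mathbb Z_N^d}\frac{\varphi(k)}{1-\varphi(k)}\,u_k v_k^\top .
\]
The key observation I would stress is that $u_k$ and $v_k$ do \emph{not} depend on $\Lambda$: they involve only the fixed affine map $T$ and the points $\mu_j$. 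Hence the entire $\Lambda$-dependence of $\Capp$ sits in the scalar spectral coefficients $\varphi(k)/(1-\varphi(k))$, with $\varphi(k)$ as in \eqref{phi_gauss}.

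Next I would differentiate $\log\det\Capp$ via Jacobi's formula. This is legitimate because $\Capp = \sum_k \tfrac{\varphi(k)}{1-\varphi(k)}u_k u_k^{\ast}$ with $\varphi(k)\in[0,1)$ is Hermitian positive semidefinite, and is in fact positive definite (hence invertible) on the support of the approximating DPP, where it appears as the Gram-type matrix of the positive DPP density; consequently $\det\Capp>0$ is real and $\log\det\Capp$ is smooth. Then for every entry $\Lambda_{ij}$,
\[
\derij\log\det\Capp \;=\; \mathrm{tr}\!\left(\Capp^{-1}\,\derij\Capp\right).
\]
Differentiating the rank-one expansion, only the coefficients move, and since $\derij\!\left(\tfrac{\varphi(k)}{1-\varphi(k)}\right) = \tfrac{\derij\varphi(k)}{(1-\varphi(k))^2}$, linearity of the trace together with the elementary identity $\mathrm{tr}(\Capp^{-1}u_k v_k^\top) = v_k^\top\Capp^{-1}u_k$ gives
\[
\derij\log\det\Capp \;=\; \sum_{k\in\mathbb Z_N^d}\frac{\derij\varphi(k)}{(1-\varphi(k))^2}\;v_k^\top\Capp^{-1}u_k .
\]

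Finally I would plug in the gradient of $\varphi(k)$ already computed in the proof of Lemma~\ref{lem:lemma1}: equation \eqref{eq_comp_term_two} is precisely the statement that $\nabla\varphi(k) = (-2\pi^2 c^{-2/d})\,\varphi(k)\,g^{(k)}$, with $g^{(k)}$ the $p\times d$ matrix of Theorem~\ref{teo:grad_lambda} (the factor $2$ in $g^{(k)}$ absorbing the $-4\pi^2$ there). Substituting this in the previous display and recollecting the common scalar $v_k^\top\Capp^{-1}u_k$ together with the matrices $g^{(k)}$ summed over $k\in\mathbb Z_N^d$ yields exactly \eqref{grad_ldcapp}. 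I do not expect a genuine obstacle here; the calculation is routine, and the only two points that require a little care are (i) checking that $u_k,v_k$ carry no $\Lambda$-dependence, so the product rule collapses to differentiating a single scalar coefficient, and (ii) justifying the use of Jacobi's formula for the complex Hermitian matrix $\Capp$ (invertibility and reality of $\log\det\Capp$) — both handled by the positive-(semi)definite rank-one expansion above.
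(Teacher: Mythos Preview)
Your proposal is correct and follows essentially the same route as the paper: express $\Capp$ as $\sum_k \tfrac{\varphi(k)}{1-\varphi(k)} u_k v_k^\top$, apply Jacobi's formula $\derij\log\det\Capp = \mathrm{tr}\bigl((\Capp)^{-1}\,\derij\Capp\bigr)$, differentiate only the scalar coefficients since $u_k,v_k$ are $\Lambda$-free, use $\mathrm{tr}(\Capp^{-1}u_kv_k^\top)=v_k^\top\Capp^{-1}u_k$, and substitute $\nabla\varphi(k)$ from \eqref{eq_comp_term_two}. Your additional remarks on invertibility of $\Capp$ and the $\Lambda$-independence of $u_k,v_k$ make explicit points the paper leaves implicit, but the argument is otherwise identical.
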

\begin{proof}
We have
\begin{align}
        \derij \log \det[\Capp] &= \mathrm{tr}\left( \evalat[\Big]{\left(\frac{\partial}{\partial U} \log \det U\right)}{U=\Capp} \cdot \derij \Capp \right) \label{first_line_grad_ldc}\\
        &=  \mathrm{tr}\left( \frac{1}{\det[\Capp]} \evalat[\Big]{\left( \frac{\partial}{\partial U} \det U\right)}{U=\Capp} \cdot \derij \Capp \right) \nonumber\\
        &= \mathrm{tr}\left( (\Capp)^{-1} \cdot \derij \Capp \right) \label{last_line_grad_ldc}
    \end{align}
    where formula (137) of \cite{matrixcook} is used to get \eqref{first_line_grad_ldc} and formula (49) of \cite{matrixcook} for \eqref{last_line_grad_ldc}. Now, note that 
    \begin{equation*}
        \Capp = \sum\limits_{k \in \mathbb{Z}_N^d}{\frac{\varphi(k)}{1-\varphi(k)}u_k v_k^T}
    \end{equation*}
    where $u_k, v_k$ are defined in the statement. It follows that, for the Gaussian-like DPP prior,
    \begin{equation*}
    \derij \Capp = \sum\limits_{k \in \mathbb{Z}_N^d}{\frac{\varphi(k)}{(1-\varphi(k))^2}(-2 \pi^2 c^{-\tfrac2d}) g^{(k)}_{ij} u_k v^T_k}
    \end{equation*}
    where $g^{(k)}_{ij} = (g^{(k)})_{ij}$.
    Then, back to \eqref{last_line_grad_ldc}, for  the Gaussian-like DPP prior, we have
    \begin{align}
        \derij \log \det[\Capp] &= \mathrm{tr}\left( (\Capp)^{-1} \cdot \derij \Capp \right) \nonumber\\
        &= \mathrm{tr}\left( \sum\limits_{k \in \mathbb{Z}_N^d}{\frac{\varphi(k)}{(1-\varphi(k))^2}(-2 \pi^2 c^{-\tfrac2d}) g^{(k)}_{ij} (\Capp)^{-1} u_k v^T_k}\right) \nonumber\\
        &= \sum\limits_{k \in \mathbb{Z}_N^d}{\frac{\varphi(k)}{(1-\varphi(k))^2}(-2 \pi^2 c^{-\tfrac2d}) g^{(k)}_{ij} \,\mathrm{tr}\left( (\Capp)^{-1} u_k v^T_k\right)}\nonumber\\
        &= (-2 \pi^2 c^{-\tfrac2d})\sum\limits_{k \in \mathbb{Z}_N^d}{\frac{\varphi(k)}{(1-\varphi(k))^2} g^{(k)}_{ij} v_k^T (\Capp)^{-1} u_k}\, . \nonumber
    \end{align}
which concludes the proof of Lemma~\ref{lem:lemma_2}.
\end{proof}

Coming back to the proof of Theorem~\ref{teo:grad_lambda},  in the case of the Gaussian-like DPP prior, from \eqref{grad_D_app} and \eqref{grad_ldcapp}, Equation (\ref{eq:secondterm}) results into
\begin{equation*}
    \nabla \log f^{\text{app}}_{\text{DPP}}(\bm\mu \mid \Lambda) = (2 \pi^2 c^{-\tfrac2d})\sum\limits_{k \in \mathbb{Z}_N^d}{g^{(k)} \frac{\varphi(k)}{(1-\varphi(k))^2} \Bigl[ \frac{1-\varphi(k)}{1-\e^{-\Dapp}} -  v_k^T (\Capp)^{-1} u_k \Bigr] }
\end{equation*}
which concludes the proof of Theorem~\ref{teo:grad_lambda}.

\end{proof}

\section{Measure-Theoretic Details}\label{sec:app_measure}

From the discussion in the main text, we have that $\bm y \in \R^{p \times n}$, $\bm \eta \in \R^{d \times n}$, $\Sigma \in \R_+^p$, $\Lambda \in \R^{p \times d}$, $\bm \psi \in \R_+^{p \times d}$, $\bm \phi \in \mathbb S^{p \times d - 1}$ (the $pd -1$ dimensional simplex), and $\tau \in \R_+$.
Moreover, we consider $\bm \mu$ as a random point configuration, which takes values in $\Omega = \cup_{m=0}^\infty \Omega_m$ where $\Omega_m$ denotes the space of (pairwise distinct) $m$-uples of $\R^d$. We endow each $\Omega_m$ with the smallest $\sigma$-algebra which makes the following mapping measurable
\[
    (\mu_1, \ldots, \mu_m) \mapsto \{\mu_1, \ldots, \mu_m\} \, ,
\]
where on the left-hand side we see $\mu_1, \ldots, \mu_m$ as an ordered vector in $R^m$ and on the right-hand side as an unordered collection of points in $R$, where $R \subset \R^{d}$ is the hypersquare where $\mu$ is defined.
The $\sigma$-algebra on $\Omega$ is then the smallest $\sigma$-algebra containing the union of all the $\sigma$-algebras on each $\Omega_m$.
Then, it follows that 
$(\bm \mu, \bm s, \bm \Delta, \bm c) \in \cup_{m=0}^\infty \left\{ \Omega_m \times \R_+^m \times \mathcal{SP}_d^m \times \{1, \ldots, m\}^n \right\}$, where $\mathcal{SP}_d$ denotes the space of $d \times d$ symmetric and positive matrices.

Consider now sets $\mathfrak Y \subset \R^{p \times n}$, $\mathfrak N \subset R^{d \times n}$, $\Xi \subset \R_+^p$, $\Uplambda \subset \R^{p \times d}$,  $\Uppsi \subset \R_+^{p \times d}$, $\Upphi \subset \mathbb S^{p \times d - 1}$ $\mathfrak T \subset \R_+$, $\mathfrak O_m \subset \Omega_m$, $\mathfrak S_m \subset \R_+^m$, $\mathfrak D_m \subset \mathcal{SP}_d^m$, $\mathfrak C_m \subset \{1, \ldots, m\}^n$. The dominating measure for the joint distribution of data and parameters is
\begin{equation*}
    \begin{aligned}
        & \nu\left(\mathfrak Y \times \mathfrak N \times \times \Xi \times  \Uplambda \times \Uppsi \times  \Upphi \times \mathfrak T \times \cup_{m \geq 0} \left\{\mathfrak O_m \times \mathfrak S_m \times \mathfrak D_m \times \mathfrak C_m\right\}  \right) = \\
        & \quad \int_{\mathfrak Y} \dd \bm y \times  \int_{\mathfrak n} \dd \bm \eta \times 
        \int_{\Xi} \dd \bm \Sigma \times \int_{\Uplambda} \dd \Lambda \times \int_{\Uppsi} \dd \bm \psi \times \int_{\Upphi} \dd \bm \phi \times \int_{\mathfrak T} \dd \tau  \, \times \\
        & \qquad \qquad \times \sum_{m=0}^\infty \frac{\e^{- |R|}}{m!} \int_{\mathfrak O_m} \dd \mu_m \times \int_{\mathfrak S_m} \dd s_m \times \int_{\mathfrak D_m} \dd \Delta_m \times \sum_{c_1, \ldots, c_n = 1}^M \indicator[\bm c \in \mathfrak C_m] \, .
    \end{aligned}
\end{equation*}

The density of data and parameters with respect to $\nu$ is given by:
\begin{align*}
    & p(\bm y, \bm c, \bm \eta, \bm \mu, \bm S, \bm \Delta, \Sigma, \Lambda, \bm \psi, \bm \phi, \tau) = \\
    & \qquad \frac{1}{T^n}  \left[\prod_{i=1}^n\calN_p(y_i \mid \Lambda \eta_i, \Sigma)  \right] \left[ \prod_{h=1}^k (S_h^{(a)})^{n_h} \text{Ga}(S^{(a)}_h \mid \alpha, 1) \text{IW}(\Delta^{(a)}_h \mid \nu_0, \Psi_0) \prod_{i: c_i = h} \calN_d(\eta_i \mid \mu^{(a)}_h, \Delta^{(a)}_h) \right] \\
    & \qquad \left[\prod_{h=1}^\ell \text{Ga}(S^{(na)}_h \mid \alpha, 1) \text{IW}(\Delta^{(na)}_h \mid \nu_0, \Psi_0) \right] f^{\text{app}}_{\text{DPP}}(\bm \mu^{(a)} \cup \bm \mu^{(na)} \mid \rho, \Lambda, K_0; R) \\
    & \qquad \left[\prod_{j=1}^p \text{inv-Ga}(\sigma^2_j \mid a_\sigma, b_\sigma) \prod_{h=1}^d \calN(\lambda_{jh} \mid 0, \psi_{jh} \phi^2_{jh} \tau) \text{Exp}(\Psi_{jw} \mid 1/2) \right] \text{Dir}(vec(\phi) \mid a) \text{Ga}(\tau \mid pda, 1/2)
\end{align*}

We now introduce the auxiliary variable $u$ such that $u \mid T \sim \mbox{Ga}(n, t)$ and consider the extended parameter space including $u \in \R_+$. 
Moreover, conditional to $\bm c$ we split $\bm \mu = \mu^{(a)} \cup \mu^{(na)}$, $\bm S = [\bm S^{(a)}, \bm S^{(na)}]$ and $\bm \Delta = [\bm \Delta^{(a)}, \bm \Delta^{(na)}]$ into allocated and non-allocated components (denoted with the $(a)$ and $(na)$ superscript respectively).
The dominating measure $\nu^\prime$ on the extended space can be straightforwardly derived. See, for instance, Equation (17) in \cite{Ber22}.
The joint density of data and parameters with respect to $\nu^\prime$ is then
\begin{align*}
    & p(\bm y, \bm c, \bm \eta, \bm \mu^{(a)}, \bm \mu^{(na)}, \bm S^{(a)}, \bm S^{(na)} \bm \Delta^{(a)}, \bm \Delta^{(na)}, \Sigma, \Lambda, \bm \psi, \bm \phi, \tau, u)  \\
    & \qquad =\frac{u^{n-1}}{\Gamma(n)}  \left[\prod_{i=1}^n\calN_p(y_i \mid \Lambda \eta_i, \Sigma)  \right] \Bigg[ \prod_{h=1}^k \e^{-u S_h^{(a)}} (S_h^{(a)})^{n_h} \text{Ga}(S^{(a)}_h \mid \alpha, 1) \text{IW}(\Delta^{(a)}_h \mid \nu_0, \Psi_0) \\
    & \qquad \qquad \times \prod_{i: c_i = h} \calN_d(\eta_i \mid \mu^{(a)}_h, \Delta^{(a)}_h) \Bigg] 
    \left[\prod_{h=1}^\ell \e^{-u S_h^{(na)}} \text{Ga}(S^{(na)}_h \mid \alpha, 1) \text{IW}(\Delta^{(na)}_h \mid \nu_0, \Psi_0) \right] \\
    & \qquad f^{\text{app}}_{\text{DPP}}(\bm \mu^{(a)} \cup \bm \mu^{(na)} \mid \rho, \Lambda, K_0; R) \prod_{j=1}^p \Bigg[ \text{inv-Ga}(\sigma^2_j \mid a_\sigma, b_\sigma) \prod_{h=1}^d \calN(\lambda_{jh} \mid 0, \psi_{jh} \phi^2_{jh} \tau) \\
    & \qquad \qquad \times \prod_{h=1}^d \text{Exp}(\Psi_{jd} \mid 1/2) \Bigg] \text{Dir}(vec(\phi) \mid a) \text{Ga}(\tau \mid pda, 1/2) \, .
\end{align*}

\section{The Anisotropic Whittle-Mat\'ern DPP}\label{sec:app_whmat}

\begin{corollary}\label{cor:matern_dpp}
Using the same notation of Theorem~\ref{teo:aniso_dpp}, let 
\[
    W \sim \mathrm{Gamma}\biggl(\nu+\frac{d}{2}, \frac{1}{2|\Lambda^T \Lambda|^{\frac{1}{d}}\alpha^2}\biggr), \qquad \nu, \alpha > 0,
\]
where $\Lambda$ is fixed. Then the kernel $K_0$, its Fourier transform $\varphi = \mathcal{F}(K_0)$ and $\rho_{\max}$ 
follow here:
\begin{align}
K_0(x) & = \rho\, \frac{2^{1-\nu}}{\Gamma(\nu)} \norm{\frac{\Lambda x}{\alpha \, |\Lambda^T \Lambda|^{\frac{1}{2d}}}}^{\,\nu} K_\nu \left( \norm{\frac{\Lambda x}{\alpha \,|\Lambda^T \Lambda|^{\frac{1}{2d}}}} \right), \qquad x \in \R^d \nonumber\\
\varphi(x) & =\rho\, \frac{ \Gamma(\nu+\frac{d}{2})}{\Gamma(\nu)} \,\frac{(2\sqrt{\pi} \alpha )^d}{\bigl(1+4\pi^2 \alpha^2 |\Lambda^T \Lambda|^{\frac{1}{d}}\, x^T {(\Lambda^T \Lambda)}^{-1} x\bigr)^{\nu+\frac{d}{2}} }\, , \qquad x \in \R^d \label{phi_whittle}\\
\rho_{\max} &= \frac{\Gamma(\nu)}{\Gamma(\nu+\frac{d}{2}) \,(2\sqrt{\pi} \alpha )^d \nonumber}
\end{align} 
where $K_{\nu}$ is the modified Bessel function of the second kind.
\end{corollary}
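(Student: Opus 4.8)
The plan is to derive Corollary~\ref{cor:matern_dpp} as a direct specialization of Theorem~\ref{teo:aniso_dpp}: with $W$ taken to be the stated $\mathrm{Gamma}$ variable, the quantities $\rho_{\max}$, $K_0$ and $\varphi = \calF(K_0)$ are prescribed by \eqref{eq: rho_max}, \eqref{eq: kernel_Y} and \eqref{eq: spect_dens_Y}, so the whole task reduces to evaluating three expectations of functions of a Gamma random variable. First I would fix notation by writing $W \sim \mathrm{Gamma}(\nu + d/2, \beta)$ with rate $\beta = (2|\Lambda^\top\Lambda|^{1/d}\alpha^2)^{-1}$, so that $W$ has density proportional to $w^{\nu + d/2 - 1} e^{-\beta w}$ on $(0,\infty)$ and is strictly positive almost surely. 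One then checks that the negative moment $\E[W^{-d/2}]$ is finite precisely because the shape parameter $\nu + d/2$ exceeds $d/2$; this is where the hypothesis $\nu > 0$ enters, and it is what makes the construction in Theorem~\ref{teo:aniso_dpp} applicable.

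The three computations would be performed in the following order. \textbf{(i) The negative moment and maximal intensity.} From the Gamma moment formula, $\E[W^{-d/2}] = \beta^{d/2}\,\Gamma(\nu)/\Gamma(\nu + d/2)$; substituting $\beta$ and plugging into \eqref{eq: rho_max}, the factors $|\Lambda^\top\Lambda|^{1/2}$ cancel and the powers of $2$ and $\pi$ collapse into $(2\sqrt{\pi}\,\alpha)^{-d}$, giving $\rho_{\max} = \Gamma(\nu)\,/\,(\Gamma(\nu + d/2)(2\sqrt{\pi}\,\alpha)^d)$. \textbf{(ii) The spectral density.} By \eqref{eq: spect_dens_Y}, $\varphi(x) = (\rho/h(0))\,\E[\exp(-t W)]$ with $t = 2\pi^2 x^\top(\Lambda^\top\Lambda)^{-1}x \ge 0$; the Laplace transform of a Gamma law gives $\E[e^{-tW}] = (1 + t/\beta)^{-(\nu + d/2)}$, and since $t/\beta = 4\pi^2\alpha^2|\Lambda^\top\Lambda|^{1/d} x^\top(\Lambda^\top\Lambda)^{-1}x$ and $h(0) = \rho_{\max}$, this reproduces the stated $\varphi$. \textbf{(iii) The kernel.} By \eqref{eq: kernel_Y} one must evaluate $\E[W^{-d/2}\exp(-\|\Lambda x\|^2/(2W))]$; writing this out as $\frac{\beta^{\nu + d/2}}{\Gamma(\nu + d/2)}\int_0^\infty w^{\nu - 1} e^{-\beta w - \|\Lambda x\|^2/(2w)}\,\dd w$, it is, up to a constant, the mass of a generalized inverse Gaussian density, hence expressible through the modified Bessel function $K_\nu$ via the classical identity $\int_0^\infty w^{\nu-1}e^{-aw - b/w}\,\dd w = 2(b/a)^{\nu/2}K_\nu(2\sqrt{ab})$, valid for $a,b>0$. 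Taking $a = \beta$ and $b = \|\Lambda x\|^2/2$ gives $2\sqrt{ab} = \|\Lambda x\|/(\alpha\,|\Lambda^\top\Lambda|^{1/(2d)})$, which is exactly the argument of $K_\nu$ in the statement; dividing by $\E[W^{-d/2}]$ from step (i) and collecting the fractional powers of $2$, $\alpha$ and $|\Lambda^\top\Lambda|$ then yields the claimed closed form for $K_0$ (the case $x = 0$ being recovered either directly from $K_0 = \rho h(x)/h(0)$ or as the limit $z^\nu K_\nu(z) \to 2^{\nu-1}\Gamma(\nu)$).

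I expect step (iii) to be the main obstacle — not because it is conceptually deep, but because it requires recognizing the integral as a Bessel-type representation and then carefully tracking the several fractional exponents of $2$, $\alpha$ and $\det(\Lambda^\top\Lambda)$ so that they collapse to the compact prefactor $\rho\,\frac{2^{1-\nu}}{\Gamma(\nu)}$ together with the single scalar rescaling of the argument. Steps (i) and (ii) are routine Gamma-moment calculations. Finally, I would note that the existence of the DPP for $\rho \le \rho_{\max}$ — and of the density with respect to the unit-rate Poisson process when $\rho < \rho_{\max}$ — is inherited verbatim from Theorem~\ref{teo:aniso_dpp}, since the only requirement there, $\varphi(0) = \rho/h(0) \le 1$, is exactly $\rho \le \rho_{\max}$.
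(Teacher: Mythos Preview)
Your proposal is correct and follows essentially the same approach as the paper: both specialize Theorem~\ref{teo:aniso_dpp} by evaluating the relevant Gamma expectations, use the Laplace transform of the Gamma law for $\varphi$, and identify the kernel integral with the generalized inverse Gaussian / Bessel-$K_\nu$ representation (the paper frames this as recognizing a GIG density, you cite the equivalent integral identity $\int_0^\infty w^{\nu-1}e^{-aw-b/w}\,\dd w = 2(b/a)^{\nu/2}K_\nu(2\sqrt{ab})$). The only cosmetic difference is that the paper first computes $h(x)$ and then obtains $h(0)$ via the limit $z^\nu K_\nu(z)\to 2^{\nu-1}\Gamma(\nu)$, whereas you compute $\E[W^{-d/2}]$ directly as a Gamma negative moment; both routes are equivalent.
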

\begin{proof}
Let $\nu>0$ and $\alpha>0$. In Theorem~\ref{teo:aniso_dpp}, set
\[
W \sim \mathrm{Gamma}\Biggl(\nu+\frac{d}{2}, \frac{1}{2|\Lambda^T \Lambda|^{\frac{1}{d}}\alpha^2}\,\Biggr) \, .
\]
Applying \eqref{eq: density_Y}, we explicitly compute $h(x)$ as 
\begin{equation*}
\begin{split}
h(x)&=\frac{|\Lambda^T \Lambda|^{\frac{1}{2}}}{(2 \pi)^{\frac{d}{2}}} \int_0^{+\infty}{\Biggl[w^{-\frac{d}{2}} \exp\Biggl( -\frac{||\Lambda x||^2}{2w} \Biggr) \Biggl( \frac{1}{2|\Lambda^T \Lambda|^{\frac{1}{d}}\alpha^2} \Biggr)^{\nu+\frac{d}{2}} \frac{w^{\nu+\frac{d}{2} -1}}{\Gamma\bigl(\nu+\frac{d}{2}\bigr)}} \\[1ex]
&\hspace{7.7cm}\times \exp \Biggl( -\frac{w}{2|\Lambda^T \Lambda|^{\frac{1}{d}}\alpha^2} \Biggr)\Biggr] \,dw \\[1ex]
&=\frac{|\Lambda^T \Lambda|^{-\frac{\nu}{d}}}{(2 \pi)^{\frac{d}{2}} (2\alpha^2)^{\nu+\frac{d}{2}} \Gamma\bigl(\nu+\frac{d}{2}\bigr)} \int_0^{+\infty}{w^{\nu-1} \exp\Biggl[ -\frac{1}{2}\Biggl(\frac{1}{|\Lambda^T \Lambda|^{\frac{1}{d}}\alpha^2} w + \frac{||\Lambda x||^2}{w} \Biggr) \Biggr] \,dw} \, .
\end{split}
\end{equation*}
In this paper, by generalized inverse Gaussian distribution (GIG) on $[0,+\infty)$ with parameters $(p, a, b)$ we mean an absolutely continuous distribution with density 
\begin{equation}
f(x)=\frac{(a/b)^\frac{p}{2}}{2K_p(\,\sqrt{ab}\,)} x^{p-1} \exp\biggl(-\frac{1}{2}\biggl[ax+\frac{b}{x}\biggr]\biggr),\; x>0
\label{eq: GIG_density}
\end{equation}
where $p \in \R$, $a>0$ and $b>0$. In the above expression $K_p$ is the modified Bessel function of the second kind. We indicate such a distribution with $\mathrm{giG}(p,a,b)$.

Coming back to the computation of $h(x)$, we see in the last integral that the integrand is the density of a generalized inverse Gaussian distribution with parameters
\[
	p=\nu,\qquad a=\frac{1}{|\Lambda^T \Lambda|^{\frac{1}{d}}\alpha^2}, \qquad b=||\Lambda x||^2.
\]
Therefore, 
\begin{equation*}
\begin{aligned}
    h(x)&=\frac{|\Lambda^T \Lambda|^{-\frac{\nu}{d}}}{(2 \pi)^{\frac{d}{2}} (2\alpha^2)^{\nu+\frac{d}{2}} \,\Gamma\bigl(\nu+\frac{d}{2}\bigr)} \cdot \biggl(|\Lambda^T \Lambda|^{\frac{1}{d}}\alpha^2 \,||\Lambda x||^2\biggr)^{\frac{\nu}{2}} \cdot 2\,K_\nu \Biggl(\frac{||\Lambda x||}{|\Lambda^T \Lambda|^{\frac{1}{2d}}\, \alpha}\Biggr)\\
    &= \frac{1}{(\sqrt{\pi} \,\alpha )^d \,\,2^{\nu +d-1} \,\,\Gamma\bigl(\nu +\frac{d}{2}\bigr)} \cdot \norm{\frac{\Lambda x}{\alpha\, |\Lambda^T \Lambda|^{\frac{1}{2d}}}}^{\,\nu} \cdot K_\nu \Biggl( \,\norm{\frac{\Lambda x}{\alpha\, |\Lambda^T \Lambda|^{\frac{1}{2d}}}}\, \Biggr) \, .
\end{aligned}
\end{equation*}
Note that, as $x \rightarrow 0$, then $x^{\nu} K_{\nu}(x) \rightarrow 2^{\nu -1} \Gamma(\nu)$. Consequently, the kernel $K_0$ is
\begin{equation*}
K_0(x)=\rho\, \frac{2^{1-\nu}}{\Gamma(\nu)} \cdot \norm{\frac{\Lambda x}{\alpha \, |\Lambda^T \Lambda|^{\frac{1}{2d}}}}^{\,\nu} \cdot K_\nu \Biggl(\, \norm{\frac{\Lambda x}{\alpha \,|\Lambda^T \Lambda|^{\frac{1}{2d}}}} \,\Biggr) \, .
\end{equation*}
To derive the spectral density $\varphi = \calF(K_0)$, from \eqref{eq: spect_dens_Y} we have
\begin{equation*}
\begin{aligned}
\varphi(x)&=\frac{\rho}{h(0)} \int_0^{\infty}{\exp\bigl(-2\pi^2 w x^T (\Lambda^T \Lambda)^{-1} x\bigr) \biggl( \frac{1}{2|\Lambda^T \Lambda|^{\frac{1}{d}}\alpha^2} \biggr)^{\nu+\frac{d}{2}} \frac{w^{\nu+\frac{d}{2} -1}}{\Gamma\bigl(\nu+\frac{d}{2}\bigr)} \exp \biggl( -\frac{w}{2|\Lambda^T \Lambda|^{\frac{1}{d}}\alpha^2} \biggr) \,dw} \\
&=\frac{\rho}{h(0)\,\bigl(2|\Lambda^T \Lambda|^{\frac{1}{d}} \alpha^2\bigr)^{\nu +\frac{d}{2}}} \int_0^{\infty}{\frac{w^{\nu +\frac{d}{2}-1}}{\Gamma\bigl(\nu+\frac{d}{2}\bigr)}\, \exp\Biggl[-\Biggl(2\pi^2 x^T (\Lambda^T \Lambda)^{-1} x + \frac{1}{2|\Lambda^T \Lambda|^{\frac{1}{d}}\alpha^2}\Biggr) \,w\Biggr] \,dw}\\
&=\frac{\rho}{h(0)\,\bigl(2|\Lambda^T \Lambda|^{\frac{1}{d}} \alpha^2\bigr)^{\nu +\frac{d}{2}}} \cdot \frac{1}{\biggl(2\pi^2 x^T {(\Lambda^T \Lambda)}^{-1} x + \frac{1}{2|\Lambda^T \Lambda|^{1/d}\,\alpha^2}\biggr)^{\nu+\frac{d}{2}}}\\
&= \frac{\rho}{h(0)} \cdot \frac{1}{\bigl(1+4\pi^2 \alpha^2 |\Lambda^T \Lambda|^{\frac{1}{d}}\, x^T {(\Lambda^T \Lambda)}^{-1} x\bigr)^{\nu+\frac{d}{2}}} \, , 
\end{aligned}
\end{equation*}
so that the spectral density $\varphi = \calF(K_0)$ is
\begin{equation*}
\varphi(x)=\rho\, \frac{ \Gamma\bigl(\nu+\frac{d}{2}\bigr)}{\Gamma(\nu)} \,\frac{(2\sqrt{\pi} \alpha )^d}{\bigl(1+4\pi^2 \alpha^2 |\Lambda^T \Lambda|^{\frac{1}{d}}\, x^T {(\Lambda^T \Lambda)}^{-1} x\bigr)^{\nu+\frac{d}{2}} } \, .
\end{equation*}
\vspace{4ex}
To sum up, we report the kernel $K_0$ and the spectral density $\varphi$ of this model
\begin{equation*}
\begin{split}
&K_0(x)=\rho\, \frac{2^{1-\nu}}{\Gamma(\nu)} \cdot \norm{\frac{\Lambda x}{\alpha \, |\Lambda^T \Lambda|^{\frac{1}{2d}}}}^{\,\nu} \cdot K_\nu \Biggl(\, \norm{\frac{\Lambda x}{\alpha \,|\Lambda^T \Lambda|^{\frac{1}{2d}}}} \,\Biggr)\\[2ex]
&\varphi(x)=\rho\, \frac{ \Gamma\bigl(\nu+\frac{d}{2}\bigr)}{\Gamma(\nu)} \,\frac{(2\sqrt{\pi} \alpha )^d}{\bigl(1+4\pi^2 \alpha^2 |\Lambda^T \Lambda|^{\frac{1}{d}}\, x^T {(\Lambda^T \Lambda)}^{-1} x\bigr)^{\nu+\frac{d}{2}} } \, .
\end{split}
\end{equation*}
The existence condition, from Equation (\ref{eq: rho_max}), requires $\rho \leq \rho_{max}$, with 
\begin{equation*}
\rho_{max}=\frac{\Gamma(\nu)}{\Gamma\bigl(\nu+\frac{d}{2}\bigr) \,(2\sqrt{\pi} \alpha )^d } \, .
\end{equation*}
\end{proof}

We now show how to compute analytically the gradient of the full conditional for $\Lambda$ also in the case of Whittle-Mat\'ern DPP prior.
\begin{theorem}\label{teo:grad_lambda_whmat}
Under the Whittle-Mat\'ern-like DPP prior, the gradient of the log-full conditional density of $\Lambda$ equals
\begin{equation*}
    \begin{aligned}
        \nabla \log p(\Lambda\,|\,\cdots) & = \Sigma^{-1} \sum_{i=1}^n{(y_i-\Lambda\eta_i)\eta_i^\top} +
        \\
        & \qquad + 4\pi^2 \alpha^2 \left(\nu + \frac{d}{2}\right)\sum\limits_{k \in \mathbb{Z}_N^d}{\frac{g^{(k)}}{a^{(k)}} \frac{\varphi(k)}{(1-\varphi(k))^2} \Bigl[ \frac{1-\varphi(k)}{1-\e^{-D^{\mathrm{app}}}} -  v_k^T \left( \Capp\right)^{-1} u_k \Bigr] } + \\
        & \qquad - \frac{1}{(\psi \odot \phi^2) \tau^2} \odot \Lambda 
    \end{aligned}
\end{equation*}
where $\varphi(k)$  is defined in (\ref{phi_whittle}), $\odot$ denotes the elementwise (Hadamard) product, 
\[
    g^{(k)} = 2|\Lambda^T \Lambda|^{\tfrac1d} \Lambda (\Lambda^T \Lambda)^{-1} \Bigl[ \frac1d k^T (\Lambda^T \Lambda)^{-1} k \mathds{1}_d - k ((\Lambda^T \Lambda)^{-1} k)^T \Bigr],
\]
$u_k$ and $v_k$ are $m$-dimensional column vectors for each $k \in \mathbb{Z}^d$ with entries
\begin{equation*}
    (u_k)_j = \e^{2\pi i k^T T\mu_j} , \quad (v_k)_j = \e^{-2\pi i k^T T\mu_j} , \quad j=1,\ldots,m
\end{equation*}
$\Capp := \Capp(T\mu_1,\ldots,T\mu_n)$, and 
$a^{(k)} = 1+ 4\pi^2\alpha^2 |\Lambda^T \Lambda|^{\frac1d} k^T (\Lambda^T \Lambda)^{-1} k$. 
\end{theorem}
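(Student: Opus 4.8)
The plan is to reproduce, essentially line by line, the argument already used for Theorem~\ref{teo:grad_lambda}. Start from the decomposition \eqref{eq_log_full_gauss} of $\log p(\Lambda\,|\,\cdots)$ into the likelihood term $\log p(\bm y\,|\,\Lambda,\bm\eta,\Sigma)$, the DPP term $\log f^{\text{app}}_{\text{DPP}}(\bm\mu\mid\Lambda)$, and the Dirichlet--Laplace prior term $\log p(\Lambda\,|\,\phi,\tau,\psi)$. The first and the third gradients do not see the point process, hence are unchanged: they equal $\Sigma^{-1}\sum_{i=1}^n (y_i-\Lambda\eta_i)\eta_i^\top$ and $-\frac{1}{(\psi\odot\phi^2)\tau^2}\odot\Lambda$, respectively, exactly as in the Gaussian-like case. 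So the only genuinely new computation concerns $\nabla\log f^{\text{app}}_{\text{DPP}}(\bm\mu\mid\Lambda)$, which splits as in \eqref{eq:secondterm} into $-\frac{1}{1-e^{-\Dapp}}\nabla\Dapp$ plus $\nabla\log\det[\Capp](T\mu_1,\ldots,T\mu_n)$; both halves are governed by $\nabla\varphi(k)$, now with $\varphi$ the Whittle--Mat\'ern spectral density \eqref{phi_whittle}.

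The heart of the proof is the analogue of Lemma~\ref{lem:lemma1} for \eqref{phi_whittle}. Writing $a^{(k)} = 1+4\pi^2\alpha^2|\Lambda^T\Lambda|^{1/d}k^T(\Lambda^T\Lambda)^{-1}k$, one has $\varphi(k) = \mathrm{const}\cdot(a^{(k)})^{-(\nu+d/2)}$ with a constant not depending on $\Lambda$, so the chain rule gives $\nabla\varphi(k) = -(\nu+\tfrac d2)\,\varphi(k)\,(a^{(k)})^{-1}\nabla a^{(k)}$, and $\nabla a^{(k)} = 4\pi^2\alpha^2\,\nabla\bigl(|\Lambda^T\Lambda|^{1/d}k^T(\Lambda^T\Lambda)^{-1}k\bigr)$. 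The remaining matrix derivative is precisely the one already evaluated inside the proof of Lemma~\ref{lem:lemma1} — via formula (53) of \cite{matrixcook} for $\nabla|\Lambda^T\Lambda|^{1/d}$ and formula (125) of \cite{matrixcook} for $\nabla\bigl(k^T(\Lambda^T\Lambda)^{-1}k\bigr)$ — and it equals $g^{(k)}$. Hence $\nabla\varphi(k) = -4\pi^2\alpha^2(\nu+\tfrac d2)\,\varphi(k)\,g^{(k)}/a^{(k)}$, which plugged into $\nabla\Dapp = \sum_{k\in\mathbb Z_N^d}(1-\varphi(k))^{-1}\nabla\varphi(k)$ yields the Whittle--Mat\'ern counterpart of \eqref{grad_D_app}.

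For the determinantal term I would repeat the steps of Lemma~\ref{lem:lemma_2}: using $\Capp = \sum_k\frac{\varphi(k)}{1-\varphi(k)}u_k v_k^\top$, the identity $\frac{\partial}{\partial\Lambda_{ij}}\frac{\varphi(k)}{1-\varphi(k)} = \frac{1}{(1-\varphi(k))^2}\frac{\partial}{\partial\Lambda_{ij}}\varphi(k)$, and formulas (137) and (49) of \cite{matrixcook} for $\frac{\partial}{\partial\Lambda_{ij}}\log\det[\Capp] = \mathrm{tr}\bigl((\Capp)^{-1}\frac{\partial}{\partial\Lambda_{ij}}\Capp\bigr)$, one obtains $\nabla\log\det[\Capp] = -4\pi^2\alpha^2(\nu+\tfrac d2)\sum_k\frac{\varphi(k)}{(1-\varphi(k))^2}\frac{g^{(k)}}{a^{(k)}}\,v_k^\top(\Capp)^{-1}u_k$. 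Substituting both expressions into \eqref{eq:secondterm} and factoring out the common $\frac{g^{(k)}}{a^{(k)}}\frac{\varphi(k)}{(1-\varphi(k))^2}$ gives the stated formula, with the bracket $\bigl[\frac{1-\varphi(k)}{1-e^{-\Dapp}} - v_k^\top(\Capp)^{-1}u_k\bigr]$ assembled exactly as at the end of the proof of Theorem~\ref{teo:grad_lambda}. I expect no real obstacle: the anisotropy-bearing matrix $g^{(k)}$ is common to both kernel families, only the scalar spectral profile changes, and the one thing to watch is the bookkeeping of signs and of the factor $(\nu+d/2)/a^{(k)}$ produced by differentiating the power.
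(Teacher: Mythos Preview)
Your proposal is correct and follows essentially the same approach as the paper: the paper also reduces to the proof of Theorem~\ref{teo:grad_lambda}, replaces Lemmas~\ref{lem:lemma1} and~\ref{lem:lemma_2} by their Whittle--Mat\'ern analogues via the chain-rule computation $\nabla\varphi(k) = -4\pi^2\alpha^2(\nu+\tfrac d2)\,\varphi(k)\,g^{(k)}/a^{(k)}$ (using the same matrix-cookbook formulas (53), (125), (137), (49)), and then substitutes into \eqref{eq:secondterm} exactly as you describe.
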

\begin{proof}

The proof follows closely the one of Theorem~\ref{teo:grad_lambda}. Instead of Lemma \ref{lem:lemma1} and Lemma \ref{lem:lemma_2}, we use the two following results

\begin{lemma}
\label{lem:lemma1_whmat}
  For the Whittle-Mat\'ern-like DPP prior,
    \begin{equation}\label{grad_D_app_wm}
        \nabla D^{\mathrm{app}} = \sum\limits_{k \in \mathbb{Z}_N^d}{\frac{\varphi(k)}{1-\varphi(k)}4\pi^2 \alpha^2\left(-\nu - \frac{d}{2}\right)\frac{g^{(k)}}{a^{(k)}}} 
    \end{equation}
    where $\varphi(k)$ refers to (\ref{phi_whittle}).
\end{lemma}
\begin{proof}
    Write
    \begin{equation*}
        \nabla D^{\mathrm{app}} = - \sum\limits_{k \in \mathbb{Z}_N^d}{ \nabla \log (1 - \varphi(k))} =\sum\limits_{k \in \mathbb{Z}_N^d}{\frac{1}{1-\varphi(k)} \nabla \varphi(k)}.
    \end{equation*}
     For the Whittle-Mat\'ern-like DPP prior, from \eqref{phi_whittle}, observe that, for $k \in \mathbb Z^d$
    \begin{equation*}
    \begin{aligned}
       \nabla \varphi(k) &= \rho\, \frac{ \Gamma(\nu+\frac{d}{2})}{\Gamma(\nu)} \,(2\sqrt{\pi} \alpha )^d \nabla \left(\bigl(1+4\pi^2 \alpha^2 |\Lambda^T \Lambda|^{\frac{1}{d}}\, k^T {(\Lambda^T \Lambda)}^{-1} k\bigr)^{-\nu-\frac{d}{2}}\right)\\
       &= \frac{\varphi(k)}{a^{(k)}} 4\pi^2\alpha^2\left(-\nu-\frac{d}{2}\right)  \nabla \left( |\Lambda^T \Lambda|^{\frac{1}{d}} k^T (\Lambda^T \Lambda)^{-1} k \right) \, .
    \end{aligned}
    \end{equation*}
    Note that 
    \begin{equation*}
    \begin{aligned}
        \nabla \left( |\Lambda^T \Lambda|^{\frac{1}{d}} k^T (\Lambda^T \Lambda)^{-1} k \right) =& \nabla \left( |\Lambda^T \Lambda|^{\frac{1}{d}} \right) k^T (\Lambda^T \Lambda)^{-1} k +\\
        & + |\Lambda^T \Lambda|^{\frac{1}{d}} \nabla \left( k^T (\Lambda^T \Lambda)^{-1} k \right) = \\
        = & \frac{1}{d}|\Lambda^T \Lambda|^{\frac{1}{d}-1} 2|\Lambda^T \Lambda|\Lambda(\Lambda^T \Lambda)^{-1} k^T (\Lambda^T \Lambda)^{-1} k + \\
        & + |\Lambda^T \Lambda|^{\frac{1}{d}} \nabla \left( k^T (\Lambda^T \Lambda)^{-1} k \right)
    \end{aligned}
    \end{equation*}
    where, in the last step, formula (53) of \cite{matrixcook} is used. Then,
    \begin{equation*}
        \begin{aligned}
            \nabla \left( k^T (\Lambda^T \Lambda)^{-1} k \right) &= \nabla \mathrm{tr}\left( k^T (\Lambda^T \Lambda)^{-1} k \right) =\\
            &= \nabla \mathrm{tr}\left( (\Lambda^T \Lambda)^{-1} k k^T \right) = -\Lambda (\Lambda^T \Lambda)^{-1} (2k k^T) (\Lambda^T \Lambda)^{-1}
        \end{aligned}
    \end{equation*}
    where, in the last step, {we have applied} formula (125) of \cite{matrixcook}.
    For the Whittle-Mat\'ern-like DPP prior, it follows
    \begin{equation*}
    \nabla \varphi(k) = \frac{\varphi(k)}{a^{(k)}}\, 8\pi^2\alpha^2\left(-\nu-\frac{d}{2}\right)|\Lambda^T \Lambda|^{\tfrac1d} \Lambda (\Lambda^T \Lambda)^{-1} 
    \Bigl[ \frac1d k^T (\Lambda^T \Lambda)^{-1} k \mathds{1}_d - k k^T (\Lambda^T \Lambda)^{-1}\Bigr]
    \end{equation*}
    where $\mathds{1}_d$ is $d\times d$ matrix of 1's.
    This concludes the proof of Lemma~\ref{lem:lemma1_whmat}.
\end{proof}

\begin{lemma}
\label{lem:lemma_2_whmat}
For the Whittle-Mat\'ern-like DPP prior,
    \begin{equation}\label{grad_ldcapp_wm}
        \nabla \log \det[\Capp] = 4\pi^2\alpha^2\left(-\nu-\frac{d}{2}\right) \sum\limits_{k \in \mathbb{Z}_N^d}{\frac{\varphi(k)}{(1-\varphi(k))^2} \frac{g^{(k)}}{a^{(k)}} v_k^T (\Capp)^{-1} u_k}
    \end{equation}
    where $\varphi(k)$ refers to (\ref{phi_whittle}).
\end{lemma}
\begin{proof}
Proceeding as in the proof of Lemma \ref{lem:lemma_2_whmat}, we have 
\begin{align}
        \derij \log \det[\Capp] &= \mathrm{tr}\left( \evalat[\Big]{\left(\frac{\partial}{\partial U} \log \det U\right)}{U=\Capp} \cdot \derij \Capp \right) \label{first_line_grad_ldc_wm}\\
        &=  \mathrm{tr}\left( \frac{1}{\det[\Capp]} \evalat[\Big]{\left( \frac{\partial}{\partial U} \det U\right)}{U=\Capp} \cdot \derij \Capp \right) \nonumber\\
        &= \mathrm{tr}\left( (\Capp)^{-1} \cdot \derij \Capp \right) \label{last_line_grad_ldc_wm}
    \end{align}
    where formula (137) of \cite{matrixcook} is used to get \eqref{first_line_grad_ldc_wm} and formula (49) of \cite{matrixcook} for \eqref{last_line_grad_ldc_wm}. Now, note that 
    \begin{equation*}
        \Capp = \sum\limits_{k \in \mathbb{Z}_N^d}{\frac{\varphi(k)}{1-\varphi(k)}u_k v_k^T}
    \end{equation*}
    where $u_k, v_k$ are defined in the statement. It follows that,for the Whittle-Mat\'ern-like DPP prior,
    \begin{equation*}
    \derij \Capp = \sum\limits_{k \in \mathbb{Z}_N^d}{\frac{\varphi(k)}{(1-\varphi(k))^2}4\pi^2\alpha^2\left(-\nu-\frac{d}{2}\right) \frac{g^{(k)}_{ij}}{a^{(k)}} u_k v^T_k} \, . 
    \end{equation*}
    Then, back to Equation \eqref{last_line_grad_ldc_wm}, for the Whittle-Mat\'ern-like DPP prior,
\begin{equation*}
        \derij \log \det[\Capp] = 4\pi^2\alpha^2\left(-\nu-\frac{d}{2}\right)\sum\limits_{k \in \mathbb{Z}_N^d}{\frac{\varphi(k)}{(1-\varphi(k))^2} \frac{g^{(k)}_{ij}}{a^{(k)}} v_k^T (\Capp)^{-1} u_k}
\end{equation*}
which concludes the proof of Lemma~\ref{lem:lemma_2_whmat}
\end{proof}

Coming back to the proof of Theorem~\ref{teo:grad_lambda_whmat},  in the case of the Whittle-Mat\'ern-like DPP prior, from  \eqref{grad_D_app_wm} and \eqref{grad_ldcapp_wm}, Equation \eqref{eq:secondterm} results into
\begin{equation*}
    \nabla \log f^{\text{app}}_{\text{DPP}}(\bm\mu \mid \Lambda) = 4\pi^2\alpha^2\left(\nu+\frac{d}{2}\right)\sum\limits_{k \in \mathbb{Z}_N^d}{\frac{g^{(k)}}{a^{(k)}} \frac{\varphi(k)}{(1-\varphi(k))^2} \Bigl[ \frac{1-\varphi(k)}{1-\e^{-\Dapp}} -  v_k^T (\Capp)^{-1} u_k \Bigr] }
\end{equation*}
which concludes the proof of Theorem~\ref{teo:grad_lambda_whmat}.
\end{proof}

\section{Further details on the MCMC algorithm}\label{sec:app_mcmc}

\subsection{The Gibbs Sampler}
\label{sec:app_Gibbs}

Here below we summarize the steps of our Gibbs sampler, though the calculation
of the full conditionals is omitted when it is straightforward. We denote by $X|\cdots$ the full conditional of the block $X$ of parameters, i.e., the conditional distribution of $X$, given all the other parameters. 

\begin{enumerate}
	\item \textit{Update of} $(\psi,\tau,\phi)$. Following \cite{bhattacharya2015dirichlet} sample
	\begin{align*}
	   \psi_{jh}\,|\,\lambda_{jh},\phi_{jh},\tau & \ind \mathrm{giG}\biggl(\frac{1}{2},1,\frac{\lambda_{jh}^2}{\phi_{jh}^2\tau^2}\biggr) \qquad  \\
	  \tau\,|\,\phi,\Lambda & \sim \mathrm{giG}\Biggl(p\cdot d\cdot (a-1),1,2\sum_{\substack{j=1:p\\h=1:d}}{\frac{|\lambda_{jh}|}{\phi_{jh}}}\Biggr) \\
	  \phi_{jh} = \frac{T_{jh}}{T}, \quad T_{jh} &\ind \mathrm{giG}(\,a-1,1,2|\lambda_{jh}|\,), \quad \displaystyle T:=\sum_{j, h}{T_{jh}}, 
	\end{align*}
	for $j=1,\ldots,p$, $h=1,\ldots,d$, where $\mathrm{giG}$ denotes the generalized inverse-Gaussian distribution (see \eqref{eq: GIG_density}).
	
    \item \textit{Update of} $\Lambda$. Sample from the full conditional density
    \begin{equation*}
        p(\Lambda\,|\,\cdots) \propto p(\bm y\,|\,\Lambda,\bm \eta, \Sigma) p(\Lambda\,|\,\phi,\tau,\psi) f^{\text{app}}_{\text{DPP}}(\bm \mu \mid \rho, \Lambda, K_0; R)
    \end{equation*}
    using a Metropolis-Hastings step; see Section \ref{sec:mala_lambda} for further details.
	
	\item \textit{Update of} $\Sigma$. Sample each $\sigma^2_j$ independently from
	\[
	  \sigma_j^2\,|\,y^{(j)},\bm\eta,\lambda^{(j)} \ind \mathrm{inv-Gamma}\biggl(\frac{n}{2}+a_\sigma, \frac{1}{2} \sum_{i=1}^n {\biggl(y^{(j)}_i- {\lambda^{(j)}}^T\eta_i\biggr)^2} +b_\sigma \biggr)
	\]
	where ${\lambda^{(j)}}^\top$ is the $j$-th row of $\Lambda$ and $y^{(j)} = (y_{1j},\ldots,y_{nj})^\top$.
	
	\item \textit{Update of the non-allocated variables} $(\bm\mu^{(na)}, \bm s^{(na)}, \bm\Delta^{(na)})$. Following \cite{Ber22}, we disintegrate the joint full conditional of the non-allocated variables as 
	\[  
	    p(\bm\mu^{(na)}, \bm s^{(na)}, \bm\Delta^{(na)} \mid \text{rest}) = p(\bm\mu^{(na)} \mid \text{rest}) p(\bm s^{(na)} \mid \bm\mu^{(na)}, \text{rest}) p(\bm \Delta^{(na)} \mid \bm\mu^{(na)}, \text{rest}),
	\]
	where ``rest'' identifies all the variables except for $(\bm\mu^{(na)}, \bm s^{(na)}, \bm\Delta^{(na)})$.
	Then $\bm\mu^{(na)} \mid \text{rest}$ is a Gibbs point process with density
	\[
	    p(\{\mu^{(na)}_1, \ldots, \mu^{(na)}_\ell \} \mid \text{rest}) \propto f^{\text{app}}_{\text{DPP}}(\{\mu^{(na)}_1, \ldots, \mu^{(na)}_\ell \}  \cup \bm \mu^{(a)} \mid \rho, \Lambda, K_0; R) \psi(u)^{\ell}
	\]
	where $\psi(u) = \E[\e^{- u S}]$. We employ the birth-death Metropolis-Hastings algorithm in \cite{geyer1994simulation} to sample from this point process density.
	Given $\bm \mu^{(na)}$ it is straightforward  to show
	\begin{align*}
	    \Delta^{(na)}_1, \ldots, \Delta^{(na)}_\ell \mid \cdots  &\iid \mathrm{IW}_d(\nu_0, \Psi_0) \\
	    S^{(na)}_1, \ldots, S^{(na)}_\ell \mid \cdots &\iid \text{Gamma}(\alpha, 1 + u)
	\end{align*}
	
	\item \textit{Update of the allocated variables} $(\bm\mu^{(a)}, \bm s^{(a)}, \bm\Delta^{(a)})$. Let $k$ be the number of unique values in the allocation vector $\bm c$ and assume that the active components are the first $k$.  We can sample the allocated variables using a Gibbs scan. It is trivial to show that
	\begin{align*}
	    	S^{(a)}_h\,|\,\cdots &\ind \mathrm{Gamma}(\alpha + n_h, 1 + u)\\
	    	\Delta^{(a)}_h\,|\,\cdots &\ind \mathrm{IW}_d\left(\nu_0 + n_h, \Psi_0+ \sum_{i: c_i=h}{(\eta_i-\mu^{(a)}_h)(\eta_i-\mu^{(a)}_h)^\top}\, \right).
	\end{align*}
	The full conditional of $\bm \mu^{(a)}$ is proportional to
	\[
        p(\bm\mu^{(a)}\,|\,\cdots) \propto f^{\text{app}}_{\text{DPP}}(\bm\mu^{(a)} \cup \bm\mu^{(na)} \mid \rho, \Lambda, K_0; R)  \prod_{h=1}^k{\prod_{i:c_i=h}{\calN_d(\eta_i\,|\,\mu_h^{(a)}, \Delta_h^{(a)})}}
    \]
    and we use a Metropolis-Hastings step to sample from it.
	
    \item \textit{Update of the latent allocation variables} $\bm c$. We found it useful to marginalize over the $\eta_i$'s to get better mixing chains. Hence, we can sample each $c_i$ independently from a discrete distribution over $\{1, \ldots, k+\ell\}$ with weights $\omega_{ih}$:
    \begin{align*}
        \omega_{ih} &\propto S_h^{(a)} \mathcal{N}_p(y_i \mid \Lambda \mu_h^{(a)}, \Sigma + \Lambda \Delta_h^{(a)} \Lambda^\top), \quad & h=1, \ldots, k \\
            \omega_{i k + h} & \propto S_h^{(na)} \mathcal{N}_p(y_i \mid \Lambda  \mu_h^{(na)}, \Sigma + \Lambda \Delta_h^{(na)} \Lambda^\top), \quad & h=1, \ldots, \ell
    \end{align*}
    Each evaluation of the $p$-dimensional Gaussian density would require $\calO(p^3)$ operations if care is not taken. However, we take advantage from the special structure of the covariance matrix. Using Woodbury's matrix identity, we have that
    \[
        \left(\Sigma + \Lambda \Delta \Lambda^\top\right)^{-1} = \Sigma^{-1} - \Sigma^{-1} \Lambda \left(\Delta^{-1} + \Lambda^\top \Sigma^{-1} \Lambda \right)^{-1} \Lambda^\top \Sigma^{-1},
    \]
    and hence we need now to compute the inverse of a $d \times d$ matrix. Therefore, evaluating the quadratic form in the exponential requires only $\calO(p)$ computational cost. 
    Moreover, using the matrix determinant lemma, the determinant of the covariance matrix can be computed  as
    \[
        \det(\Sigma + \Lambda \Delta \Lambda^\top) = \det(\Delta^{-1} + \Lambda^\top \Sigma^{-1} \Lambda) \det(\Delta) \det(\Sigma) .
    \]
    This is computed without additional cost by caching operations from the previous matrix inversion.

	\item \textit{Update of the ancillary variable} $u \sim \text{Gamma}(n, T)$.

	\item \textit{Update of the latent scores} $\bm\eta$. For $i=1,\ldots,n$, sample each $\eta_i$ independently from 
\[
\eta_i \mid \cdots \ind \calN_d(m_i\,,\,S_i)
\]
where
\[
    S_i = \biggl(\Lambda^T \Sigma^{-1} \Lambda + (\Delta_{c_i}^{(a)})^{-1} \biggr)^{-1}, \quad m_i = S_i \biggl( \Lambda^T \Sigma^{-1} y_i + (\Delta_{c_i}^{(a)})^{-1} \mu_{c_i}^{(a)} \biggr)
\]
This follows by standard calculations.

\end{enumerate}

\subsection{The Gibbs Sampler for binary data}
\label{sec:app_Gibbsxbinary}
The Gibbs sampler here requires an additional step to deal with the binary observations $z_{i,j}$. Specifically, the Gibbs updates described for the original APPLAM model, i.e., steps 1-8, except for the parameters here fixed, can be applied, since all the involved parameters are independent of $\bm z = (z_1,\ldots,z_n)$, conditionally to $\bm y = (y_1,\ldots,y_n)$. Then, the update of the $y_i$'s is done as follows: 
\[
 y_{i,j} \mid z_{i,j}, \cdots \ind \begin{cases}
 \calN(y_{i,j} \mid (\Lambda\eta_i)_j, 1) \indicator_{[0,\infty)}(y_{i,j}), \quad &\text{ if } z_{i,j}=1;\\
 \calN(y_{i,j} \mid (\Lambda\eta_i)_j, 1) \indicator_{(-\infty,0)}(y_{i,j}), \quad &\text{ if } z_{i,j}=0,
 \end{cases}
\]
for $i=1,\ldots,n; j=1,\ldots,p$, where $(\Lambda\eta_i)_j$ indicates the $j$-th element of the $p$-dimensional vector $\Lambda\eta_i$. That is, if $z_{i,j}=1$ (resp. $z_{i,j}=0$) the full-conditional distribution of $y_{i,j}$ follows a truncated Gaussian distribution with support $[0,\infty)$ (resp. support $(-\infty,0)$).


\section{Exploring the APPLAM model via numerical simulations}\label{app:simu1}

\subsection{Truncation level in the approximation of the DPP density}\label{app:truncation}

Here we provide a numerical illustration of the magnitude of the error introduced by the approximation procedure described in Section~\ref{sec:dpp_dens_approx}. Such an approximation procedure is standard in the literature on DPPs; see, e.g., \cite{Lav15}. 
The goodness of the approximation is clearly controlled by the truncation level $N$ of the series expansions in \eqref{eq:approx_dens}. We discuss the impact of such a truncation in the evaluation of both $D^{app}$ and $C^{app}$; note that the approximation error depends on the convergence rate to 0 of the $\varphi(k)$'s,  whose general expression is reported in \eqref{eq: spect_dens_Y}. 
We consider two illustrative situations, the first corresponding to $d=2$, the second taking $d=4$. For all the DPPs considered in the following, we set the intensity such that $\rho |R| = 1$ and $c$ such that $\rho = \rho_{\max} / 2$. 
For both cases ($d=2, 4$), we consider the Gaussian-like DPP as in Corollary~\ref{cor:gauss_dpp} when $\Lambda$ is either fixed and equal to $I_d$ (isotropic case), or when $\Lambda$ is random and distributed either according to the prior \eqref{eq:prior_lambda} or to the posterior distribution (for the anisotropic model used in the simulated scenario of Appendix~\ref{app:comparison_iso_aniso} ($d=2$) and the simulation study A of Appendix~\ref{app:comparison-lamb} ($d=4$)).
When $\Lambda$ is random, for all the quantities displayed in the next figures we report medians and  $95\%$ credible intervals computed over 100 draws from the random distribution of $\Lambda$.

Figure~\ref{fig:trunc_d2_D_phi_app} (left) refers to the case $d=2$ and shows values of $\varphi(k) = \varphi(k_1,0)$, i.e., how quickly $\varphi(k)$ converges to 0 along the first component. 
Similarly, Figure~\ref{fig:trunc_d2_D_phi_app} (right) show the values of $D^{app}$ as a function of $N$, while Figure~\ref{fig:trunc_d2_C_app} displays $C^{app}$ for different values of $N$. It is clear that the extremely fast decay of the $\varphi(k)$'s implies that the choice of $N$, as long as $N \geq 3$, has very limited impact in the approximation of $D^{app}$ and $C^{app}$.

\begin{figure}[!h]
    \centering 
 \includegraphics[width=0.495\linewidth]{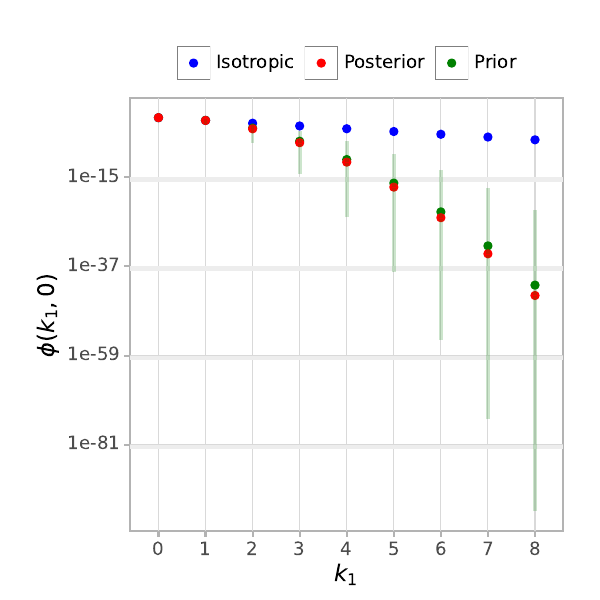}
  \includegraphics[width=0.495\linewidth]{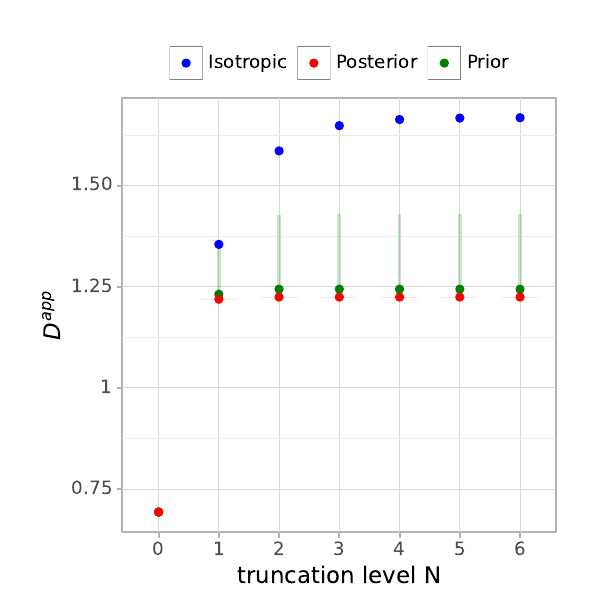}

\caption{Case $d=2$. Values of $\varphi(k_1,0)$ for increasing values of $k_1$ (left);  values of $D^{app}$ as a function of the truncation level $N$ (right). In the case $\Lambda$ is random, either from the prior or the posterior distribution, the plots report medians and $95\%$ credible intervals (in green) of the addressed quantities, obtained from $100$ samples of $\Lambda$.}
\label{fig:trunc_d2_D_phi_app}
\end{figure}

\begin{figure}[!h]
    \centering 
 \includegraphics[width=\linewidth]{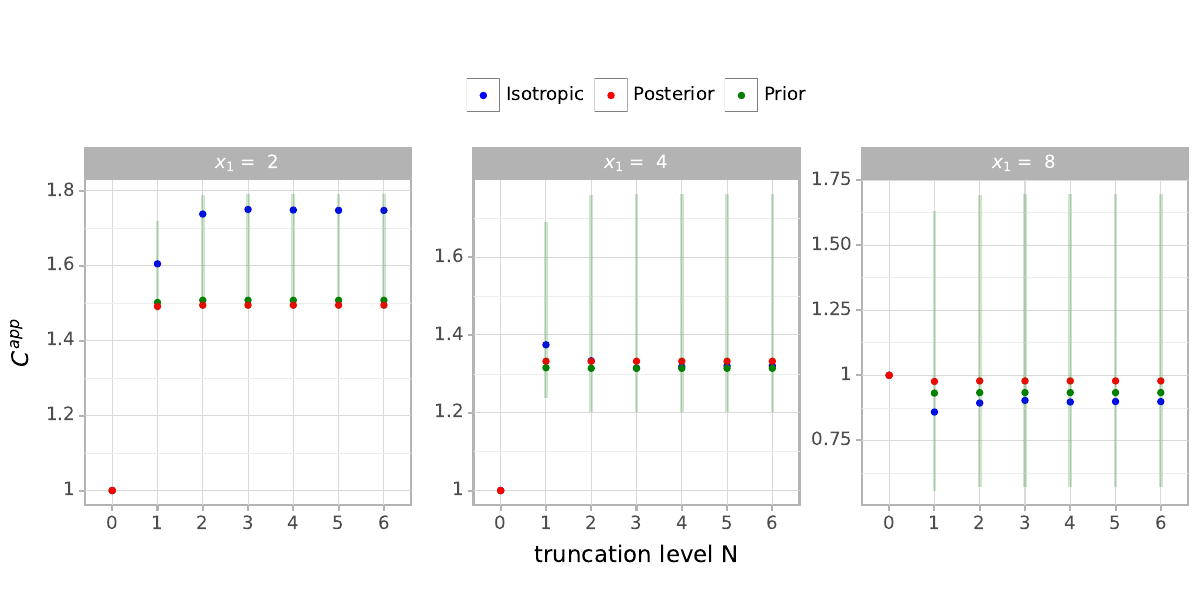}
  
\caption{Case $d=2$. Values of $C^{app}((x_1,0), (0,0))$ as a function of the truncation levels $N$ and for increasing $x_1$. In the case $\Lambda$ is random, either from the prior or the posterior distribution, the plots report medians and $95\%$ credible intervals (in green) of the addressed quantities, obtained from $100$ samples of $\Lambda$.}
\label{fig:trunc_d2_C_app}
\end{figure}

We repeat the same analysis for the case $d=4$, $\Lambda = I_4$ and $\Lambda$ random and distributed either as the prior or the posterior obtained by the MCMC in one of the examples considered in simulation study A of Appendix~\ref{app:comparison-lamb}. See Figures~\ref{fig:trunc_d4_D_phi_app} and \ref{fig:trunc_d4_C_app}, which clearly show that the choice of $N$ has a very limited impact in the approximation of $D^{app}$ and $C^{app}$.

\begin{figure}[!h]
    \centering 
 \includegraphics[width=0.495\linewidth]{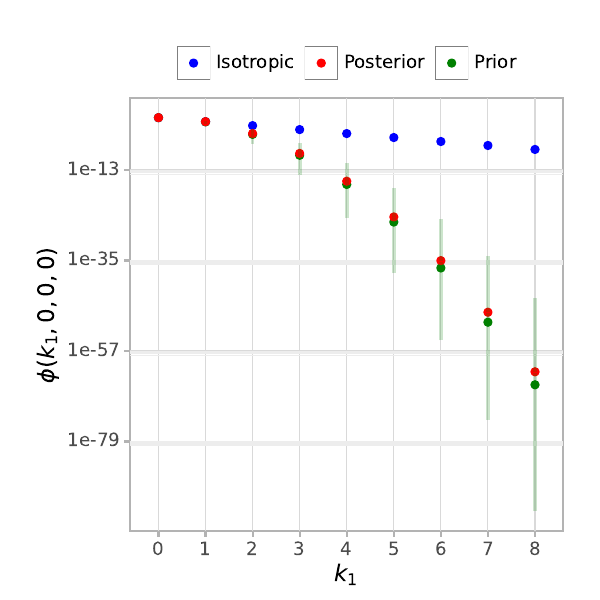}
  \includegraphics[width=0.495\linewidth]{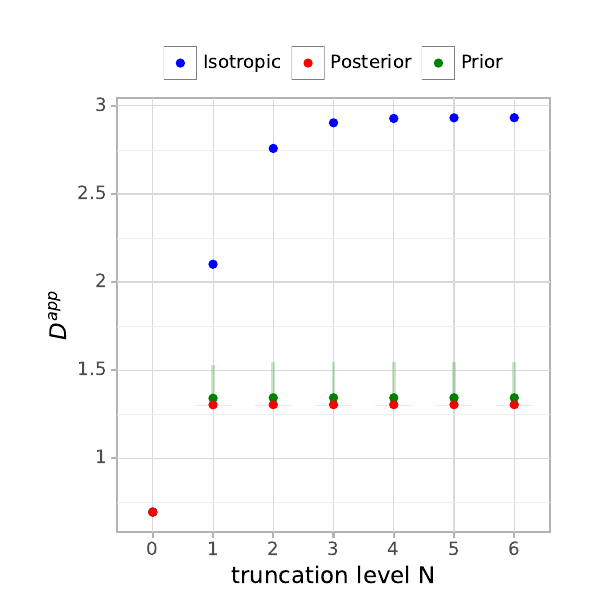}

\caption{Case $d=4$. Values of $\varphi(k_1,0,0,0)$ for increasing values of $k_1$ (left); values of $D^{app}$ as a function of the truncation levels $N$ (right). In the case $\Lambda$ is random, either from the prior or the posterior distribution, the plots report medians and $95\%$ credible intervals (in green) of the addressed quantities, obtained from $100$ samples of $\Lambda$.}
\label{fig:trunc_d4_D_phi_app}
\end{figure}

\begin{figure}[!h]
    \centering 
 \includegraphics[width=\linewidth]{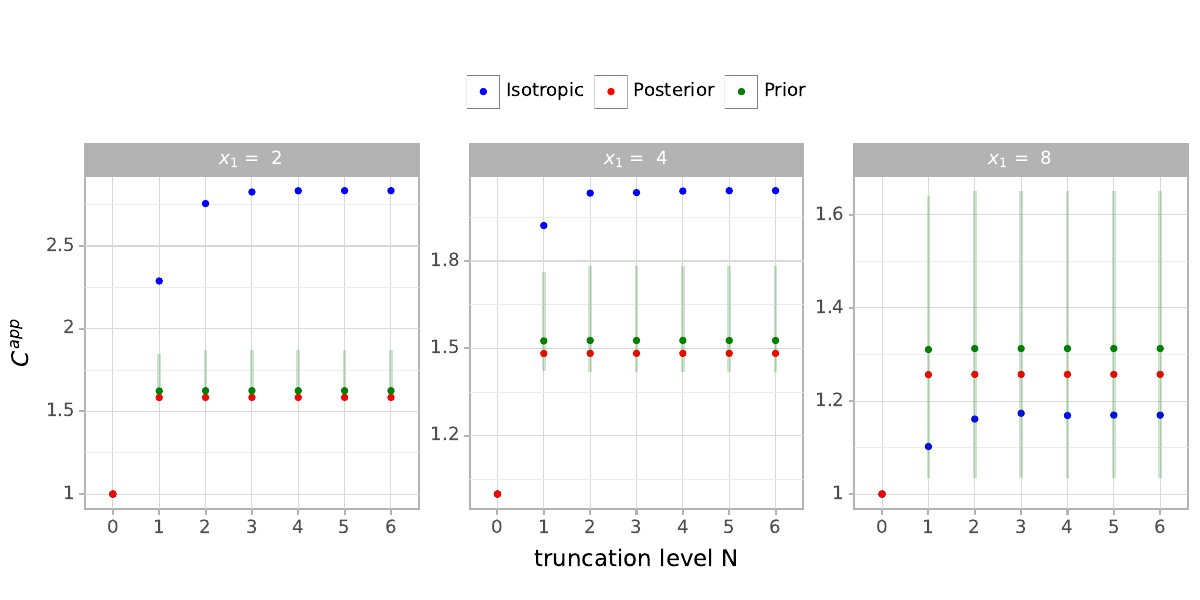}
  
\caption{Case $d=4$. Values of $C^{app}((x_1,0,0,0), (0,0,0,0))$, for different truncation levels $N$ and increasing $x_1$. In the case $\Lambda$ is random, either from the prior or the posterior distribution, the plots report medians and $95\%$ credible intervals (in green) of the addressed quantities, obtained from $100$ samples of $\Lambda$.}
\label{fig:trunc_d4_C_app}
\end{figure}

\FloatBarrier

\subsection{Analytical approach vs Automatic Differentiation}\label{app:analytic-automatic}

Figure~\ref{fig:ad_vs_grad} reports the comparison, in terms of memory usage (measured in Bytes) and execution time per iteration (measured in seconds), between the AD approach and the analytical approach in sampling the high-dimensional matrix of loadings $\Lambda$. Fixing all the other model parameters, we set $N=4$ and $6$ component centers $\mu_h$'s; see also \eqref{eq:approx_dens}. Note that the value of $N$ impacts both the memory usage and the execution time.  We compare the performance in sampling only the matrix $\Lambda$. We use $100$ data points simulated from a $p$-dimensional Gaussian distribution, for $p=100, 200$. 

\begin{figure}[!htb]
    \centering
    \includegraphics[width=0.8\textwidth]{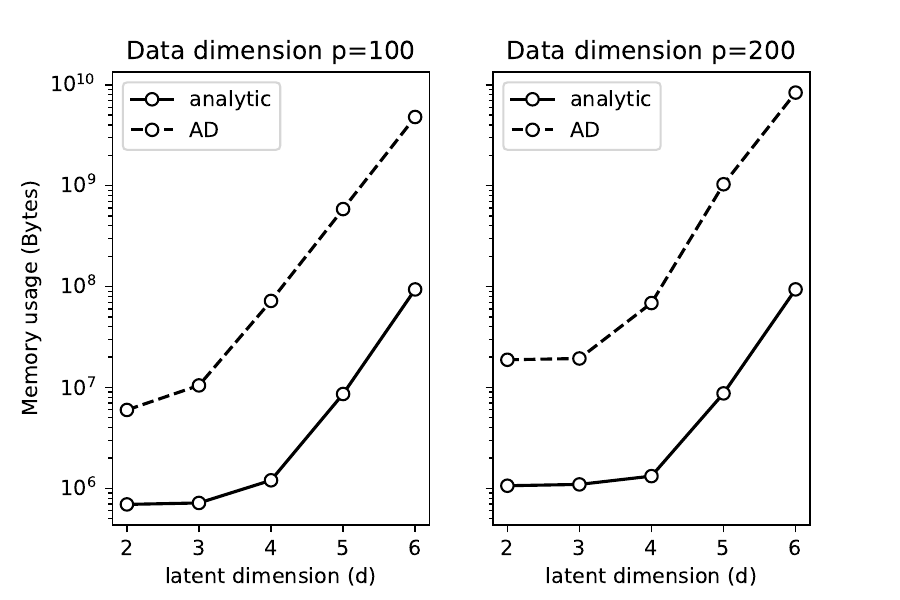}
    \includegraphics[width=0.8\textwidth]{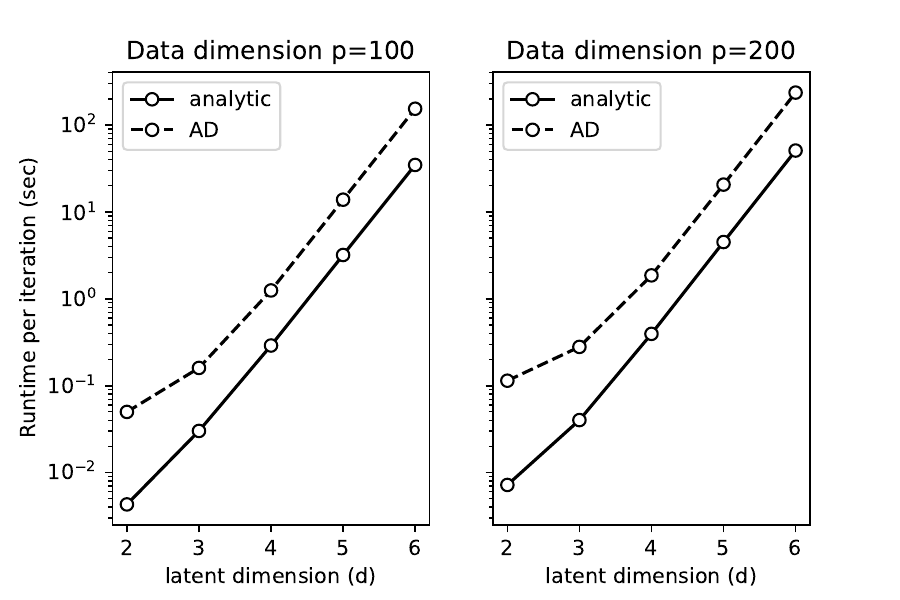}
    \caption{Memory requirement (top row) and run-time execution per iteration of MCMC with $n=100$ samples when the data-dimension is $p=100$ (left plot) and $p=200$ (right plot) as the latent dimension $d$ varies.}
    \label{fig:ad_vs_grad}
\end{figure}

\subsection{Comparison between anisotropic and isotropic processes}
\label{app:comparison_iso_aniso}

We highlight here why the anisotropic  DPP in \eqref{eq:prior_mu} is preferable to the isotropic DPP  in the latent factor model to perform clustering. 
We compare the APPLAM model based on the Gaussian-like DPP in Corollary \ref{cor:gauss_dpp} and its isotropic modification, where the anisotropic point process in \eqref{eq:prior_mu} is replaced by the isotropic Gaussian DPP defined by
\[
K_0(x)  = \rho \exp \biggl( -\frac{|| x||^2}{2 \,c^{-\frac{2}{d}}} \biggr), \qquad x \in \R^d.
\]
For such an isotropic Gaussian DPP, the Fourier transform $\varphi(x)$ of $K_0$ is equal to 
\begin{align*}
\varphi(x) &= \rho \,\frac{(2\pi)^{d/2}}{c} \exp \left( -2\pi^2  c^{-\frac{2}{d}} || x||^2 \right), \qquad x \in \R^d  
\end{align*}
for $\rho<\rho_{\max}$, with 
$\rho_{\max} = c (2\pi)^{-d/2}$.
This is the Gaussian (or squared exponential) point process in Equation (3.8) of \cite{Lav15}, with $\alpha = \sqrt{2} c^{-1/d}$.

\vspace{1ex}
\noindent
\textbf{Data generation.} We consider $p=100$ and $d=2$. We simulate $500$ iid $\eta_i \in \R^2$ from each of $m=2$ Gaussian densities with means $\mu_1 = (-2, 0 )$ and $\mu_2 = (2 ,0)$, and covariance matrices equal to $0.5 I_2$. Then we define $\Lambda = V \Lambda'$, where 
\begin{equation}
    \Lambda' = \begin{pmatrix}
    \lambda_1 & 0 \\
    0 & \lambda_2
\end{pmatrix}
\label{eq:lam-comparison}
\end{equation}
with $\lambda_1 = 3, \lambda_2 = 1/3$, and the columns of $V$ form an orthonormal basis for a $d$-dimensional linear subspace of $\R^p$. Finally, we simulate
\begin{equation*}
    y_i\,|\,\eta_i,\Lambda, \Sigma \ind \calN_p(y_i\,|\,\Lambda\eta_i,\, \Sigma)\,\qquad i=1,\ldots,n
\end{equation*}
where $n=1000$ and $\Sigma = 10 I_p$.

\vspace{1ex}
\noindent
\textbf{Model hyperparameters.} For both models (the APPLAM and the \emph{isotropic} model), we fix $\Sigma$ in \eqref{eq:lamb_general} as the $I_p$  and each $\Delta_h=I_d$.
For both DPPs, we fix $R = [-10,10]^2$. 
To match the two priors, we fix the values of $(\rho, c)$ as follows. We consider the same intensity for the two processes, i.e., $\rho|R|  = 0.1$; this implies the same prior mean for the number of components of the mixtures. We set two different values for $c$ to guarantee that the pair correlation function \eqref{eq:pair_cor} is the same along the $x$-axis of the latent space. This choice is made having full knowledge of the data generating process, given that the latent clusters are separated along the $x$-axis (see the values $\mu_1$ and $\mu_2$), to match as closely as possible the two priors.
This requires $c_{iso} = c_{aniso} \cdot \lambda_1/\lambda_2$, where $\lambda_1$ and $\lambda_2$ are defined below \eqref{eq:lam-comparison}. Then, we set $c_{aniso}$ so that $\rho = 0.9\rho_{max}$.
Following \cite{Chandra20}, we set $a = 0.5$ in \eqref{eq:prior_lambda}. 
For the prior distribution of the weights $\bm w$ in both repulsive mixtures, we assume $\alpha=10^{-3}$ in \eqref{eq:prior_jumps}. Note that $\alpha$ small, as in this case here, implies a marginal prior for $\bm w$ in \eqref{eq:prior_marks} mostly concentrated on the boundary of the simplex in $\R^{m-1}$. 
\begin{figure}[t]
    \centering    
    \includegraphics[width = 0.49\linewidth]{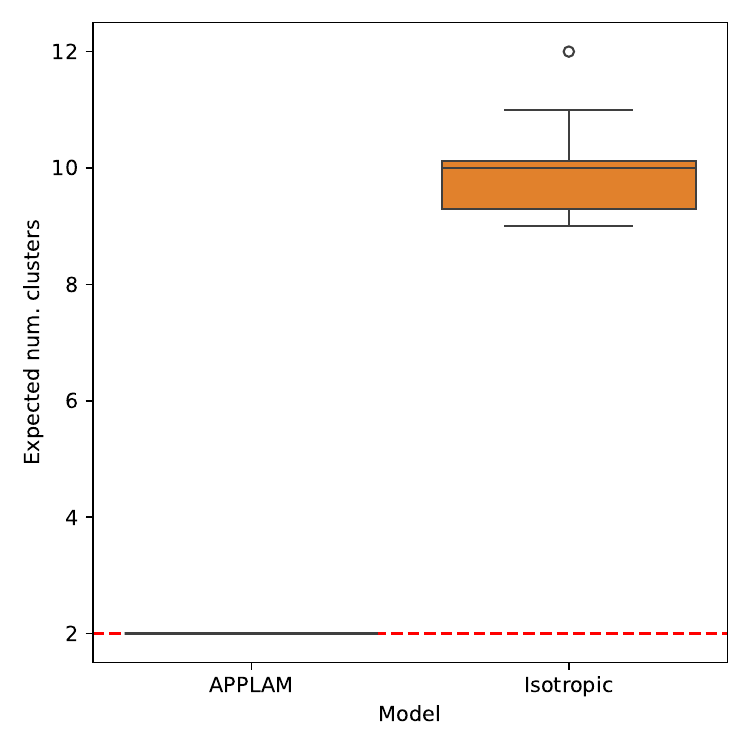}    \includegraphics[width = 0.49\linewidth]{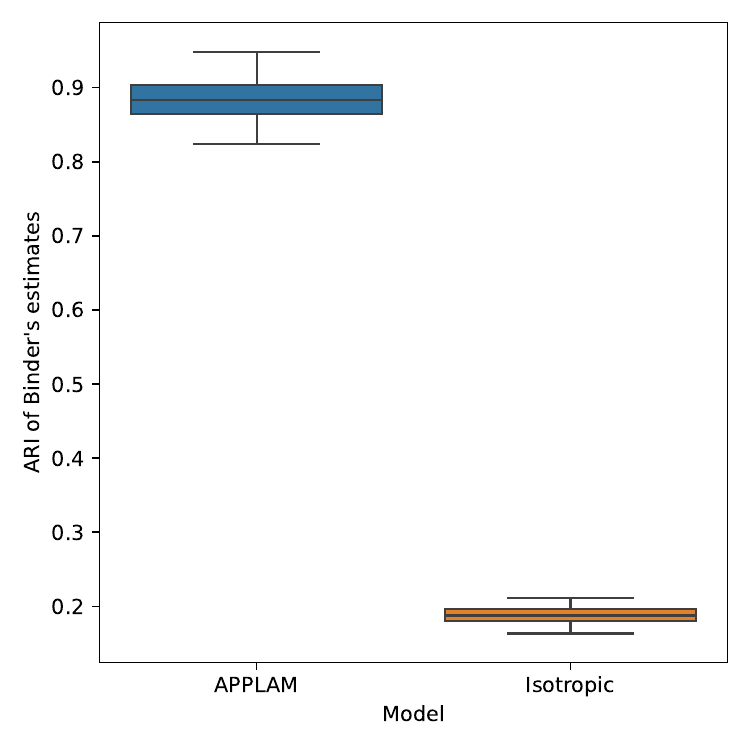}
    \caption{Boxplots of the posterior means of the number of clusters (left); the dotted red line indicates the true number of clusters. Boxplots of the ARI  between the estimated clusters (via Binder's loss function) and the true ones. Each boxplot refers to the output of the 100 replicated datasets.}
    \label{fig:aniso_iso_comp}
\end{figure}

\vspace{1ex}
\noindent
\textbf{Comparison of the estimated clusters.}
To compare the performance of the two models in estimating the clusters, we report the inference produced on 100 replicated datasets, drawn independently according to the procedure previously described. In particular, for each dataset, we repeatedly draw the latent factors $\eta_i$'s, the orthonormal matrix $V$ and then the datapoints. Figure~\ref{fig:aniso_iso_comp} (left) displays the boxplot (over the 100 replicated datasets)  of the posterior mean of the number of clusters. 
Figure~\ref{fig:aniso_iso_comp} (right) reports the boxplots of the values of the adjusted Rand index (ARI) between the cluster estimate and the true partition of the datasets. The cluster estimate has been obtained as the one minimizing the posterior expectation of Binder's loss \citep{Bin(78)}. 
The adjusted Rand index \citep{hubert1985comparing} between two partitions of a dataset may assume negative values, and
it is bounded above by 1, which corresponds to the case of perfect agreement between the two. 
It is clear that our APPLAM model accurately identifies the true clusters of the data (ARIs values close to 1) across all 100 replicated datasets, while the isotropic modification fails.

\subsection{Empirics for hyperparameter selection}\label{app:hyperparams}
 
We describe here a data-dependent method to select appropriate ranges for the values of hyperparameters $\rho$ and $c$ in Corollary~\ref{cor:gauss_dpp}; see Section~\ref{sec:elicitation} for an introductory discussion. These data-dependent estimates of $(\rho,c)$ should be interpreted as suggesting the order of magnitude of the values of $\rho$ and $c$ we should fix, instead of being sharp values of the hyperparameters. 
In particular, we adapt to our setting the arguments of Section 6.1 in \cite{Ber22}, where the authors propose to elicit hyperparameters of a Strauss process prior by controlling ($i$) the \emph{repulsion range} of the process and ($ii$) the strength of repulsiveness induced by the prior. 
Point ($i$) is driven by the assumption that repulsiveness should be induced to eliminate redundant clusters, but it should not severely affect density estimation.
Point ($ii$) instead derives from the argument that the strength of repulsiveness should not be treated as an absolute value but rather as a function of the sample size. See \cite{Ber22} for further details.

In our setting, we exploit ($i$) above to select an appropriate range for the parameter $\rho_{\max}$ in Corollary~\ref{cor:gauss_dpp}. Since $\rho_{\max} = c (2\pi)^{-d/2}$, this is equivalent to selecting an appropriate range for $c$.
It is easiest to work directly with the process $\tilde \Phi = \{\Lambda \mu\colon \mu \in \Phi\}$ defined in Theorem~\ref{teo:trans_dpp}. In particular, $\tilde \Phi$ is defined on (an appropriate subset of) $\R^d$ and is isotropic.
The main idea is to first select a suitable distance $\hat r$ such that we expect any two cluster centers sampled from $\tilde \Phi$, namely $\Lambda \mu_h$ and $\Lambda \mu_k$, to be such that $\|\Lambda \mu_h - \Lambda \mu_k\| > \hat r$. 
To this end, we follow the same argument as in \cite{Ber22} and compute the pairwise distances $d_{i, j}$ between all datapoints $y_i$ and $y_j$. If the sample size is large, we could also consider a subsample of data.
Then, we compute a kernel density estimate of the $d_{i, j}$'s and select $\hat r$ as the smallest local minimum of such a density.  Intuitively, we interpret the kernel density from 0 to $\hat r$  as the density of the distances between datapoints within the same cluster. Choosing $\hat r$ as we do, we guarantee that we do not force separation between locations of different clusters. See Figure~\ref{fig:r_estim} for a graphic display.

Now let $g(\cdot)$ be the pair correlation function of $\tilde \Phi$. Since the process is isotropic, then $g \equiv g(r)$ with $r \in [0, +\infty)$, taking values in $[0, 1)$. 
Having fixed such a value $\hat r$, we aim at choosing $c$ such that $g(\hat r)$ assumes a large value, e.g., $0.99$.  From \cite{Lav15}, $\hat r$ then takes the interpretation of ``range of repulsiveness'', such that two points whose distance is greater than $\hat r$ are effectively independent.
The expression for $g(r)$ is
\begin{equation*}
     g(r) = 1 -\exp\left(-\frac{r^2}{2 |\Lambda^T \Lambda|^{1/d} c^{-2/d} }\right),
\end{equation*}
which depends on $\Lambda$, that is random. Once we estimate $\Lambda$ in the expression above, we can obtain $\hat c$ as a solution of $g(\hat r)=0.99$.
We could assume an a priori plug-in estimate for $\Lambda$ using, e.g., its prior mean. However, this would fail if the posterior distribution is far from the prior.
Instead, our argument is the following. First of all, we note that the term $|\Lambda^T \Lambda|^{1/d}$ is the change of volume between the hypersquare $R$ where the DPP $\Phi$ is defined, and the set $\tilde R = \Lambda R$ where $\tilde \Phi$ is defined.
To get a rough estimate of $|\tilde R|$, we consider again the pairwise distances of the data and take the maximum value, say $r_{\max}$. We approximate $|\tilde R| \approx (r_{\max})^d$. This is analogous to what is proposed in \cite{bianchini2018determinantal} and \cite{Ber22}, where the DPP is defined on the smallest hypersquare containing all the data.  Solving $|\tilde R| = |\Lambda^T \Lambda|^{1/d} |R|$, and as $R = [-10, 10]^d$, we obtain $|\Lambda^T \Lambda|^{1/d} \approx (r_{\max}/ 20)^d$.
Hence, we fix
\[
    \hat c =  (\log 10)^{d/2}\left(\frac{ r_{\max} }{10 \hat r} \right)^d,
\]
and, from Corollary~\ref{cor:gauss_dpp}, $\hat \rho_{\max} = \hat c (2 \pi)^{-d/2}$.

This argument, applied to the datasets analyzed later in Appendix~\ref{app:comparison-lamb}, gives a range for $\hat \rho_{\max} \times |R|$ between $0.5$ and $20.3$, which shows that we should focus on DPPs with (very) small intensity.

Having obtained an estimate for $\hat \rho_{\max}$, we turn to point ($ii$) above, namely, the strength of repulsiveness.
Since the DPP exists for $\rho \leq \rho_{\max}$ (and has a density if the inequality is strict) it is straightforward to set $\rho = s \rho_{\max}$, $s \in (0, 1)$ as done in \cite{Lav15, bianchini2018determinantal, Ber22}. In such a way, the parameter $s$ takes the interpretation of the strength of repulsiveness of the DPP.
We argue that $s$ should be an increasing function of the sample size. Indeed, the point of using a repulsive prior is that if the cardinality of a cluster is small, the repulsiveness should prevail on the likelihood in the cluster, so that the probability of having a small cluster nearby a large one should be small. However, the definition of ``large'' and ``small'' clusters depends on the sample size; e.g., a cluster of size 10 is large in a dataset of $n=30$ individuals, but very small in a dataset with $n=1000$.

In the next subsection, we perform a simulation study to assess the model robustness to its hyperparameters, namely, $\rho$ and $s$. Our simulations confirm the insights obtained here, that is, the intensities should be fairly small. In particular, we suggest the following. First, having fixed $R = [-10, 10]^d$, select a value for $\rho$ such that $\rho |R|$ is small, i.e., smaller than 5 or possibly smaller than 1.
Having fixed $\rho$, one needs only to select a value for the strength for repulsiveness $s$, which should increase with the sample size. This can be done by performing a robustness analysis for different values of $s \in \{0.5, 0.75, 0.9\}$ or fixing $s$ according to users' preferences.

\begin{figure}
    \centering
    \includegraphics[width=0.6\linewidth]{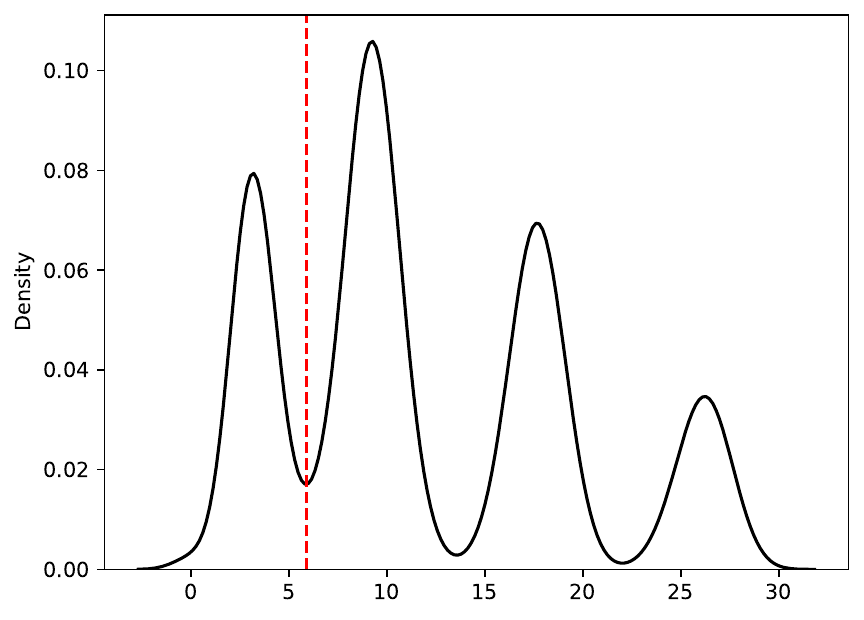}
    \caption{Schematic depiction of the estimation of $\hat{r}$ from the data. The black line is the kernel density estimate of the pairwise distances between datapoints in one of the simulated datasets analyzed in Appendix \ref{app:truncation}. The red dashed line is the value of $\hat r$.}
    \label{fig:r_estim}
\end{figure}

\subsection{Robustness to Model's Hyperparameters}\label{app:robustness}

Here we show the robustness of the APPLAM model with respect to its hyperparameters. To this end, we consider the synthetic data generation mechanism used for the simulation study~A in Appendix~\ref{app:sythetic_mechanism}. Specifically, we set the latent dimension $d=4$ and the data dimension $p=500$. We simulate $n_0 \in \{50, 100, 250\}$ iid latent factors $\eta_i \in \R^d$ from each of $m = 4$ Gaussian kernels with means $\mu_1 = (7.5,\ldots,7.5)\in\R^d$, $\mu_2 = (2.5,\ldots,2.5)$, $\mu_2 = (-2.5,\ldots,-2.5)$
and $\mu_4 = (-7.5,\ldots,-7.5)$, and identity covariance matrices. Then we independently generate the $n = n_0 \times 4 \in \{200, 400, 1000\}$ $p$-dimensional data from the multivariate Student $t$ distributions $t_p(\Lambda\eta_i,10I_4,3)$ (refer to Appendix \ref{app:sythetic_mechanism} for the parametrization of the multivariate Student $t$ distribution), where $\Lambda$ is drawn (once per dataset) from the uniform distribution on the set of $p \times d$ matrices whose columns form an orthonormal basis for a $d$-dimensional linear sub-space of $\R^{p}$.

We explore a fine grid of prior hyperparameters. In particular, we consider $\rho|R| \in \{ 0.05, 0.1, 0.2, 0.5, 1, 5\}$, with $c$ such that $\rho = s\rho_{max}$ (see Corollary~\ref{cor:gauss_dpp}) and $s \in \{ 0.5, 0.75, 0.9 \}$. Remember that 
$s$  is the strength of repulsion for a fixed $\rho$.
 We set $\alpha = 10^{-3}$ in \eqref{eq:prior_jumps} (the marginal prior for $\bm w$ is \textit{sparse}), $a_\sigma = 1, b_\sigma = 0.3$ in \eqref{eq:prior_sigma} and $a = 0.5$ in \eqref{eq:prior_lambda} as suggested in \cite{Chandra20}.
Every time we fit the APPLAM model, we run $2500$ burn-in iterations and $10^4$ iterations with thinning equal to $2$, for a final sample size equal to 5000. Figure~\ref{fig:robustness_nclus_simA} displays the posterior mean of the number of clusters under different choices of the hyperparameters and sample sizes. Figure~\ref{fig:robustness_ari_simA}, instead, shows the values of the adjusted Rand index (ARI) between the cluster estimate and the true partition of the datasets. The cluster estimate has been obtained as the one minimizing the posterior expectation of Binder's loss \citep{Bin(78)}. 
Observe that, except for extreme hyperparameters' values, the posterior inference is robust, though a more significant agreement (larger values of ARI) between the estimated clusters and the true ones is obtained for values of $\rho|R|$ between 0.1 and 0.5. Figure~\ref{fig:robustness_nclus_simA} also points out that, when $\rho |R|$ is large and $s$ is small, the number of estimated clusters increases. Conversely, larger values of $s$ and smaller values of $\rho|R|$ impose stronger repulsiveness, yielding a smaller number of estimated clusters.  This is exactly what we expected from the characteristics of our prior, as discussed in Appendix \ref{app:elicitation}.

\begin{figure}[t]
    \centering
    \includegraphics[width=\linewidth]{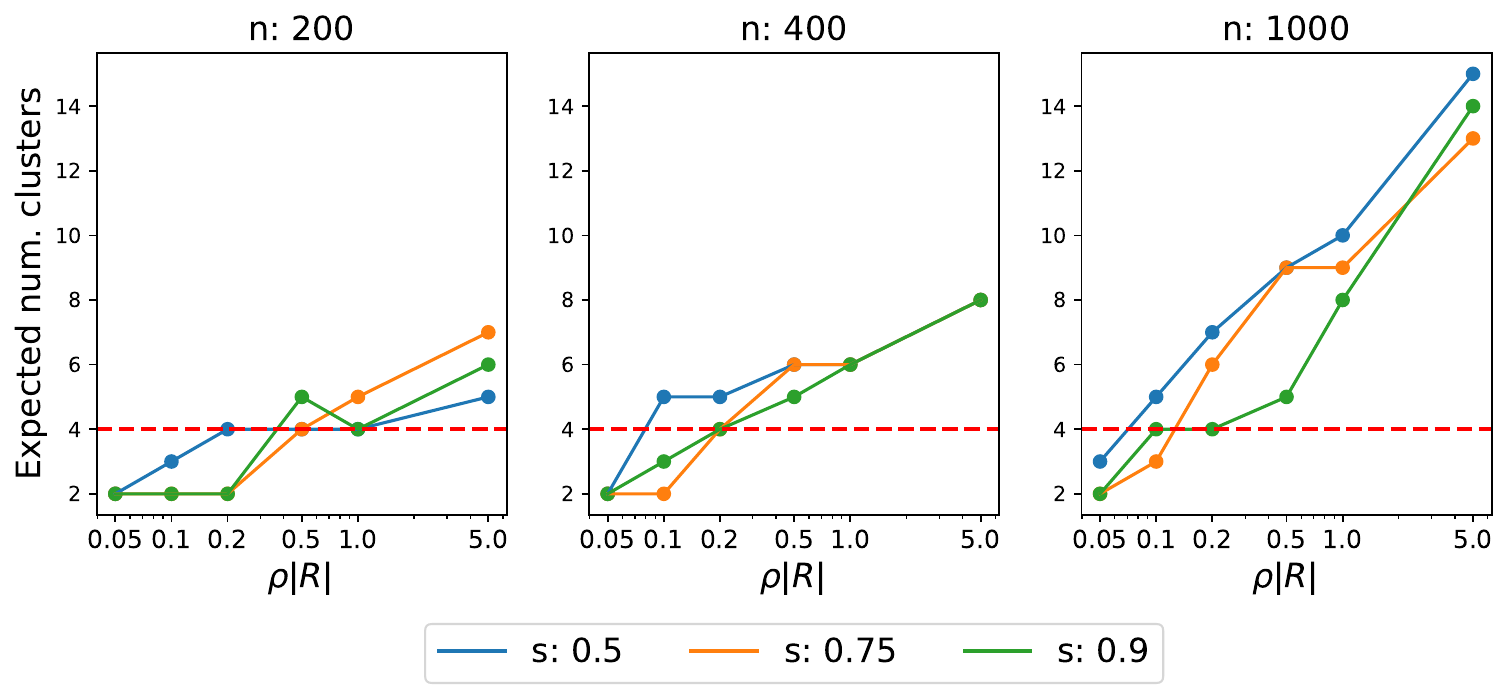}
    
    \caption{Posterior mean of the number of clusters for different hyperparameters and sample sizes.}
    \label{fig:robustness_nclus_simA}
\end{figure}

\begin{figure}[t]
    \centering
    \includegraphics[width=\linewidth]{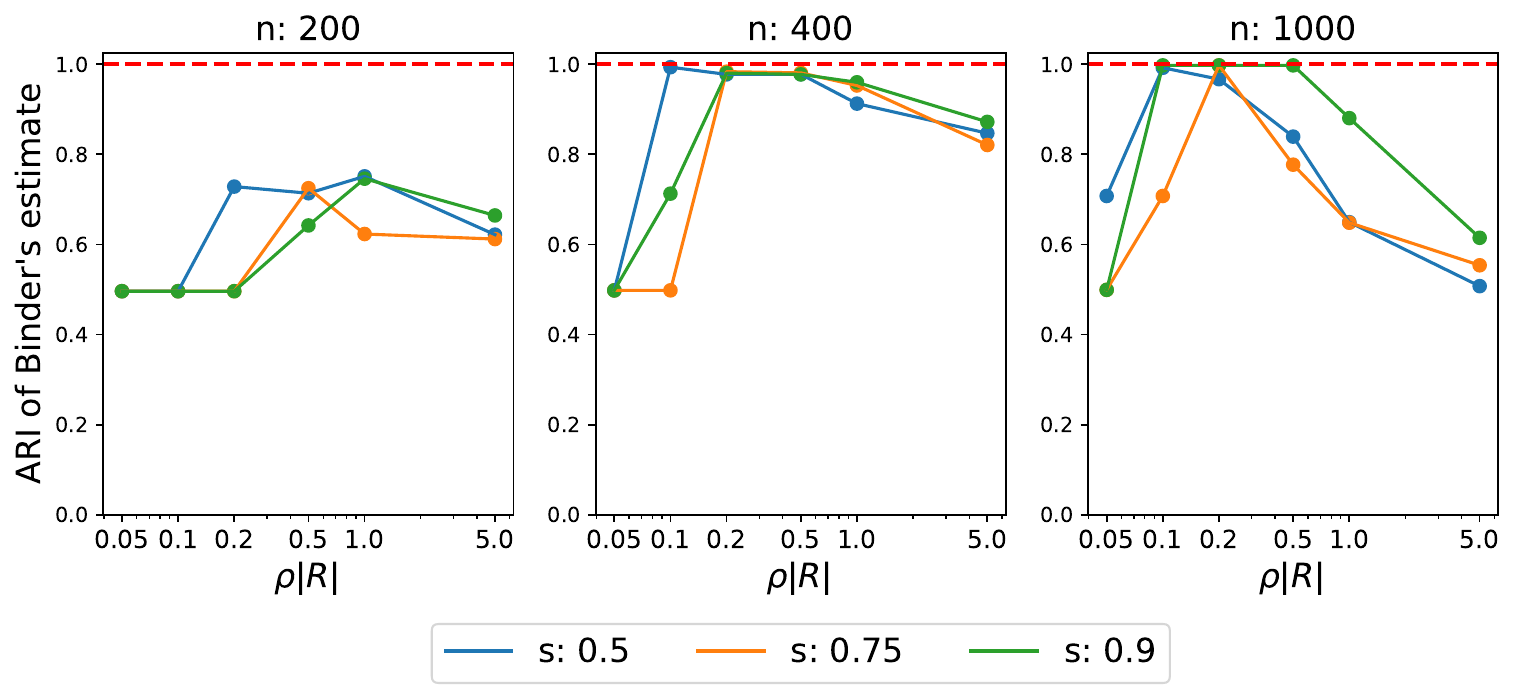}
    
    \caption{Adjusted rand index (ARI) for Binder's loss estimated clusters under different hyperparameters and sample sizes.}
    \label{fig:robustness_ari_simA}
\end{figure}

\FloatBarrier

\section{Comparison between APPLAM and LAMB on simulated datasets}\label{app:comparison-lamb}

\subsection{Synthetic data generation}\label{app:sythetic_mechanism}

We illustrate the comparison between the APPLAM model based on the Gaussian-like DPP in Corollary~\ref{cor:gauss_dpp}
and the Lamb model of \cite{Chandra20}. This assumes likelihood \eqref{eq:lamb_general} and a Dirichlet process mixture with total mass $\alpha_{DP}$ and a Normal-inverse Wishart as a base measure. We consider
two different sets of simulated data, referred to as simulation studies A and B. Both settings imply misspecification of the two models we assume for clustering. 
Let $t_p(q,\Sigma,\nu)$ denote the multivariate $p$-dimensional Student $t$ distribution with location $q$, scale matrix $\Sigma$ and degrees of freedom $\nu$.

In simulation study A we simulate $50$ iid latent factors $\eta_i \in \R^d$  from each of $m=4$ Gaussian kernels with means $\mu_h$, $h=1,\ldots,m$, and identity covariance matrices. Then we simulate
\begin{equation*}
    y_i\,|\,\eta_i,\Lambda, \Sigma \ind t_p(y_i\,|\,\Lambda\eta_i,\, \Sigma,\, 3),\,\qquad i=1,\ldots,n
\end{equation*}
where $\Sigma = 10 I_p$.  
We fix $\mu_1 = (7.5,\ldots,7.5)\in\R^d$, $\mu_2 = (2.5,\ldots,2.5)$, $\mu_2 =( -2.5,\ldots,-2.5)$
and $\mu_4 = (-7.5,\ldots,-7.5)$.  The matrix of factor loadings $\Lambda$ is fixed as one draw from the uniform distribution on the set of $p \times d$ matrices whose columns form an orthonormal basis for a $d$-dimensional linear sub-space of $\R^p$. 
For simulation study B,  50 iid latent factors are generated from each of $m=4$ multivariate Student $t$ distributions $t_d(\mu_h,I_d,3)$, $h=1,\ldots,m$,  with $\mu_h$'s fixed as in simulation study A. In both simulation settings, then 
$n=50\times 4=200$ datapoints are independently generated from the likelihood in (\ref{eq:lamb_general}), with  $\Sigma = 0.1 I_p$ and $\Lambda$ generated as in simulation study A. For the simulation setting A, we consider $p=\{500, 1000\}$ and $d=\{4,8\}$, while we report $p=\{500, 1000\}$ and $d= 4$ for the setting B. For each simulation setting, we sample 100 independent datasets and average posterior estimates over them (see Appendix \ref{app:comparison_simulated}). 

\subsection{Hyperparameters elicitation and MCMC details}\label{app:elicitation}

The elicitation for the common hyperparameters in APPLAM and Lamb follows the default choices of \cite{Chandra20}.  
Specifically, we set $a_\sigma= 1, b_\sigma =0.3 $ in (\ref{eq:prior_sigma}) and $a= 0.5$ in (\ref{eq:prior_lambda}). 
Moreover, Lamb assumes a Dirichlet process location-scale mixture of Gaussian densities for the latent factors. In particular, the Normal-inverseWishart distribution is taken as the base measure, with location $\mu_0$, scale $k$, covariance matrix $\Psi_0$ and degrees of freedom $\nu_0$. The default choice sets $\mu_0$ as the null vector, $k=0.001$, $\Psi_0=\delta \, I_d$, with $\delta = 20$ and $\nu_0 = d + 50$. 
Coherently, for APPLAM we set $\Psi_0 = \delta \, I_d$, $\delta = 20$ and $\nu_0= d + 50$ in (\ref{eq:prior_marks}).  We also set $\alpha=10^{-3}$ in (\ref{eq:prior_jumps}) to assume a \textit{sparse} marginal prior for $\bm w$.

It is difficult to match the prior for the number of clusters in APPLAM and Lamb: while for Lamb it is determined by $\alpha_{DP}$, for APPLAM it is induced by the prior for the number of components $m$, which is controlled by $(\rho, c)$. In particular, $\rho |R|$ is the prior expected number of components if we do not condition on $m \geq 1$.
Therefore, we limit ourselves to evaluate empirically the robustness of the cluster estimates to the choice of these parameters in our simulations.
In particular, we consider $\alpha_{DP}=\{0.1,0.5,1\}$, which corresponds to the prior expected number of clusters of $1.57, 3.63$, and $5.88$, respectively.
For our model instead,  we follow the arguments in Appendix~\ref{app:robustness} and we test different degrees of repulsiveness by setting $\rho |R| \in \{0.5,1,5\}$ and $c = c_{\rho} = 2 \rho (2 \pi)^{d/2}$, i.e., $s=0.5$.  

Any time we fit the APPLAM model, we run $2000$ burn-in iterations and $5000$ iterations with thinning equal to $2$, for a final sample size of 2500. For each run of the Lamb model instead, we simulate $10^6$ burn-in iterations and $5\times 10^4$ iterations with thinning equal to $10$. The poor mixing of the Lamb algorithm in our simulations (see Figure~\ref{fig:lamb_noburn}) demands a very long burn-in phase.

\begin{figure}[ht]
    \centering
    \includegraphics[width=0.8\textwidth]{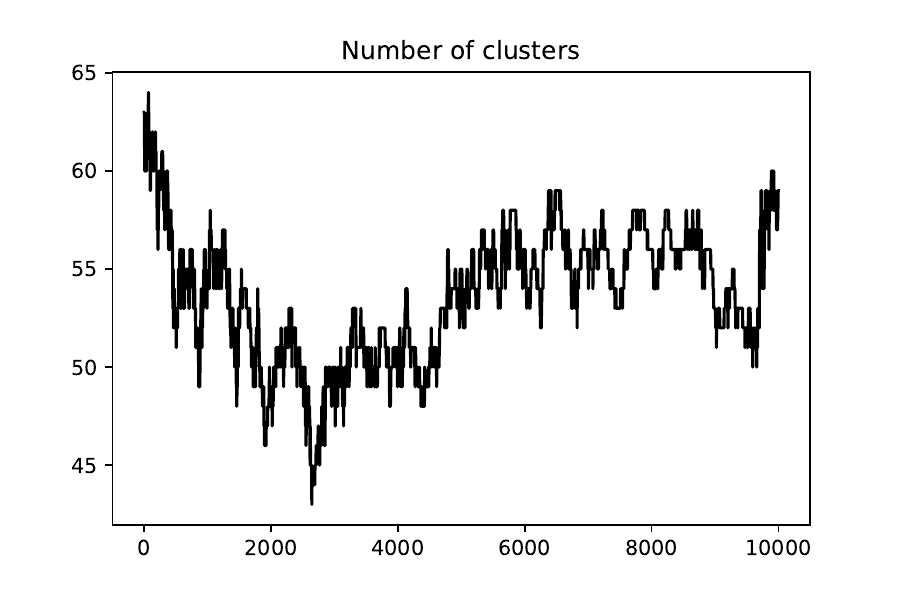}
    \caption{Traceplot of the number of clusters under the MCMC algorithm of the Lamb model for $10^5$ iterations for simulation study B, $p=100$, $d=5$, $\alpha_{DP}=0.5$.}
    \label{fig:lamb_noburn}
\end{figure}

\subsection{Comparison between the two models}
\label{app:comparison_simulated}

We focus here on the cluster estimates of the two models.
We consider the expected number of clusters and a posterior summary statistic of the random partition of the subjects induced by the allocation variables $c_i$'s. As before, the cluster estimates are obtained by minimizing the posterior mean of
Binder's loss function \citep{Bin(78)}, and compute the adjusted rand index (ARI) between the estimated partition and the true one, for both models.
Values of ARI equal to one correspond to perfect agreement between the two partitions. Instead, if ARI is equal to zero, the estimated partition is equivalent to a random labelling of the data.
Figures~\ref{fig:simuA_d4}- \ref{fig:simuA_d8} for simulation study A, Figure~\ref{fig:simuB} for simulation study B,  report 
boxplots of the posterior expectation of the number of clusters (top rows) and of the ARI between the Binder's loss estimated clusters and the true ones (bottom rows) over the 100 datasets.
Tables \ref{table:simuA_d4}-\ref{table:simuA_d8}-\ref{table:simuB} show numerical values, aggregated over the independent datasets, of the posterior mean and posterior modes of the number of clusters, and the ARI's values as before.

The figures generally show that the Lamb model tends to overestimate the true number of clusters ($m=4$).
On the other hand, APPLAM appears rather robust to the choice of $\rho|R|$: 
not only is the estimated number of clusters always moderate (which is good in terms of easier interpretability), but the associated posterior estimates spread around the correct number of clusters. Moreover, APPLAM provides better cluster estimates than Lamb overall in terms of ARI values between the estimated partition and the true one.

\begin{figure}[t]
    \centering
    \includegraphics[width=\linewidth]{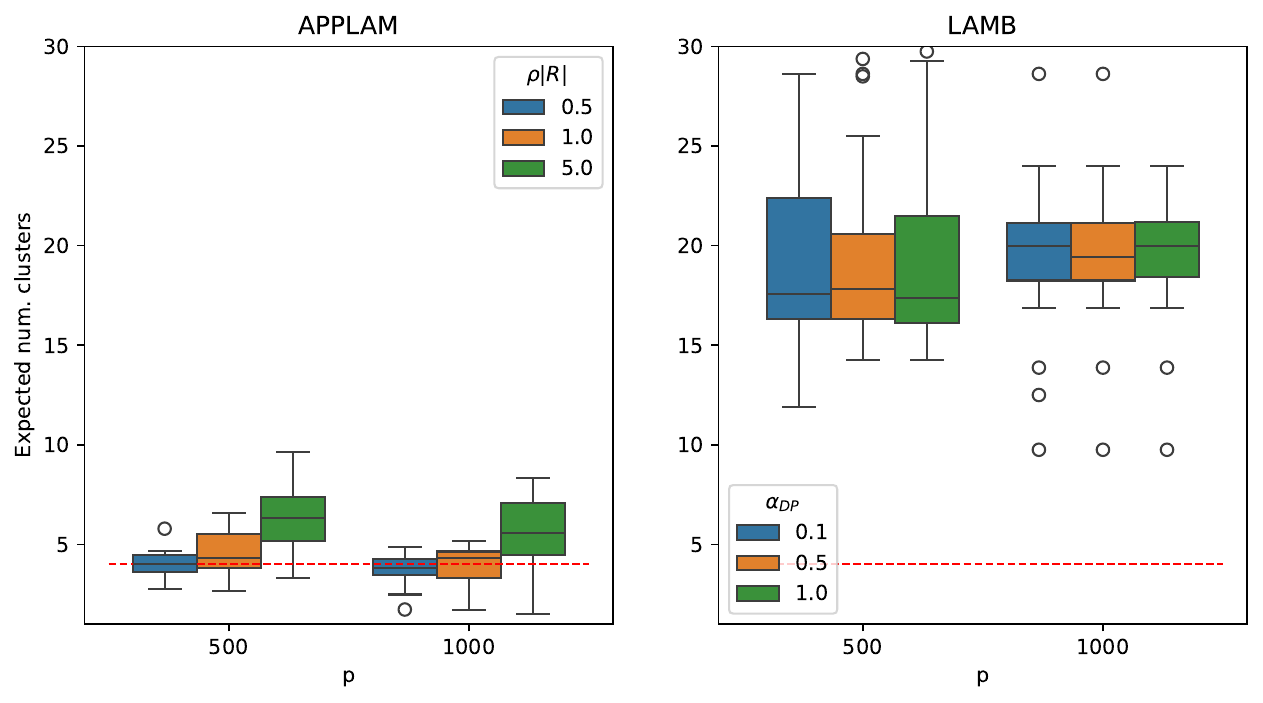}

    \includegraphics[width=\linewidth]{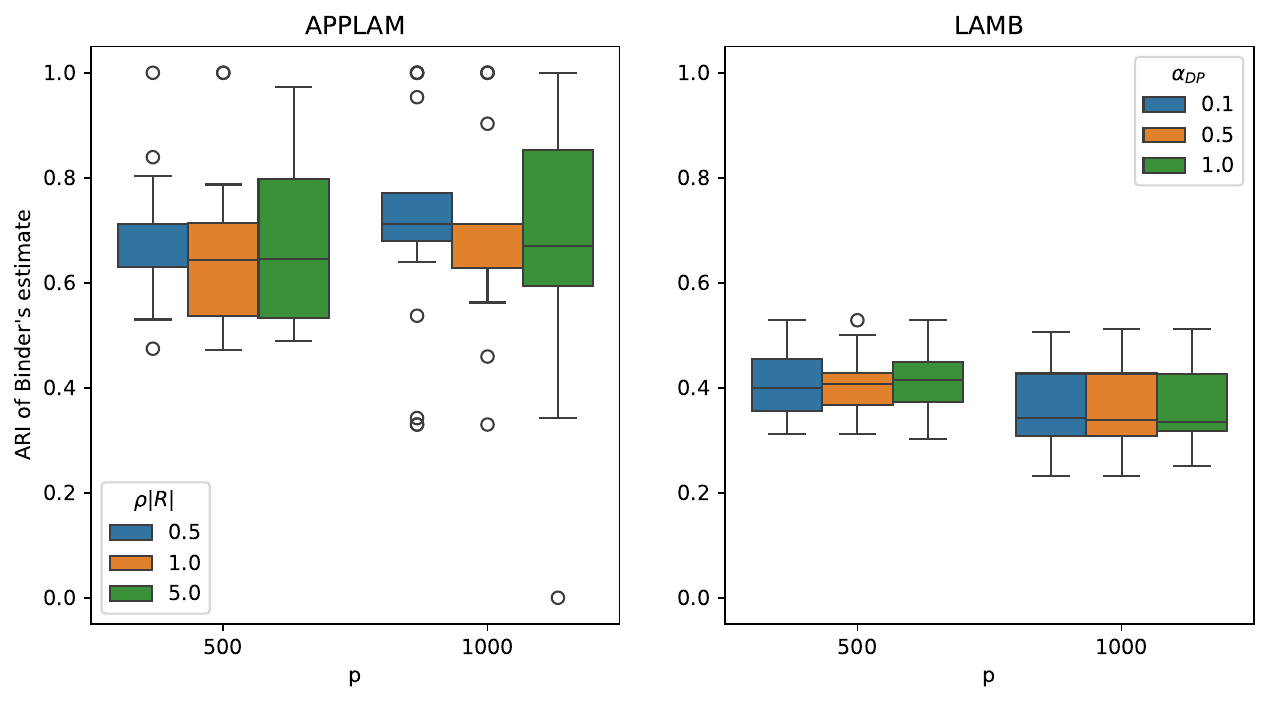}
    \caption{Simulation A, case $d=4$: boxplots of the posterior expectation of the number of clusters (top row) and boxplots of the adjusted rand index (ARI) for the Binder's estimated clusterings (bottom row). Each boxplot refers to the output of the 100 replicated datasets.}
    \label{fig:simuA_d4}
\end{figure}

\begin{table}[h]
\centering
\begin{tabular}{lllrrr}
\toprule
     &      &     &  Mean nclus &  Mode nclus &  ARI Binder's \\
$p$ & Model & Parameter &                &                 &               \\
\midrule
500  & APPLAM & 0.5 &           4.02 &            3.90 &          0.69 \\
     &      & 1.0 &           4.60 &            4.55 &          0.65 \\
     &      & 5.0 &           6.23 &            6.30 &          0.67 \\
     & Lamb & 0.1 &          19.36 &           19.05 &          0.41 \\
     &      & 0.5 &          19.48 &           19.00 &          0.41 \\
     &      & 1.0 &          19.64 &           18.80 &          0.41 \\
1000 & APPLAM & 0.5 &           3.70 &            3.75 &          0.71 \\
     &      & 1.0 &           4.02 &            4.05 &          0.71 \\
     &      & 5.0 &           5.57 &            5.60 &          0.67 \\
     & Lamb & 0.1 &          19.53 &           18.00 &          0.37 \\
     &      & 0.5 &          19.58 &           17.85 &          0.37 \\
     &      & 1.0 &          19.47 &           18.05 &          0.37 \\
\bottomrule
\end{tabular}
\caption{Simulation A, case $d=4$:  posterior mean of the number of clusters (Mean nclus), posterior mode of the number of clusters (Mode nclus) and the ARIs between Binder's loss cluster estimates (ARI Binder's) and the true one, averaged over 100 simulated datasets.  \emph{Parameter} refers to $\rho|R|$ for the APPLAM model and $\alpha_{DP}$ for the Lamb model.}
\label{table:simuA_d4}
\end{table}

\begin{figure}[t]
    \centering
    \includegraphics[width=\linewidth]{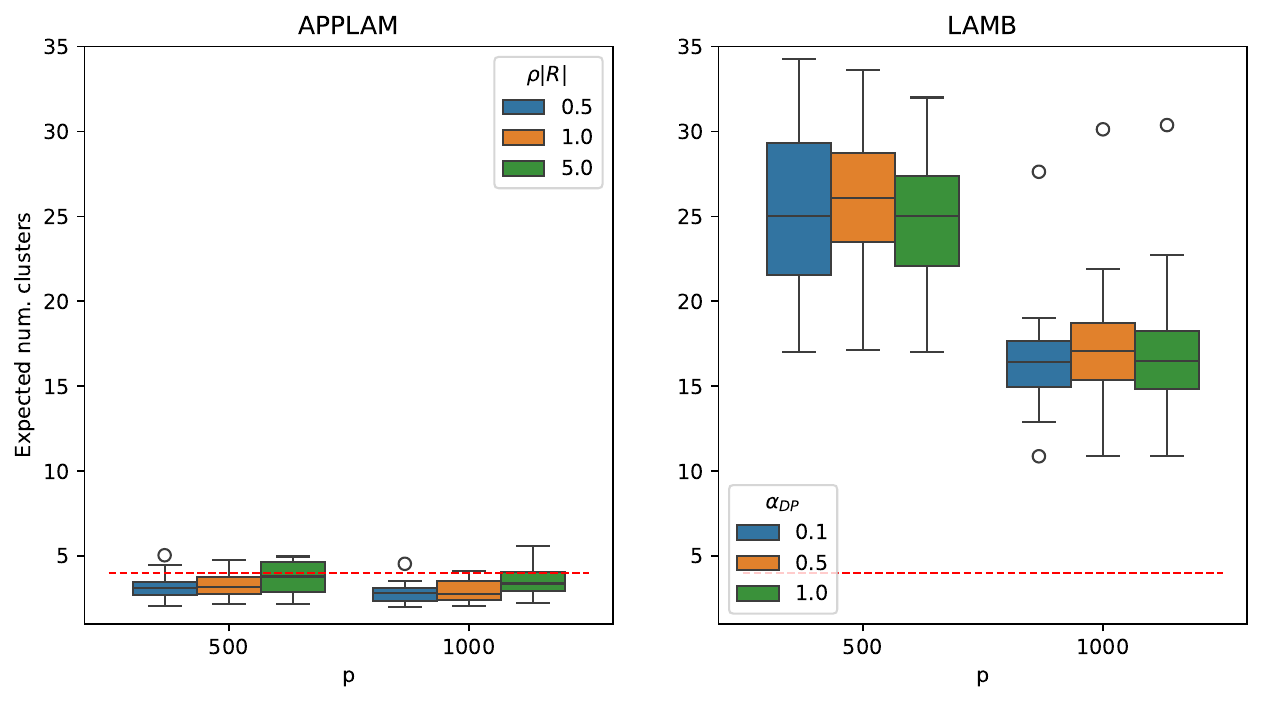}

    \includegraphics[width=\linewidth]{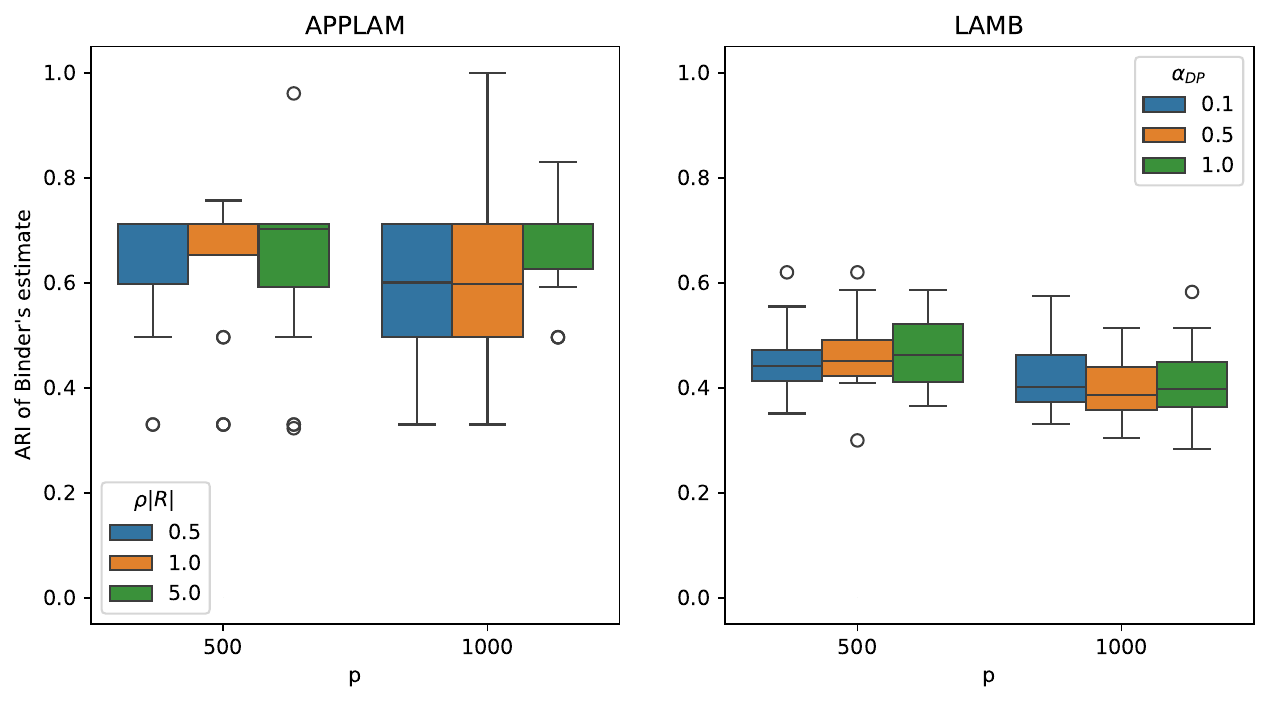}
    \caption{Simulation A, case $d=8$: boxplots of the posterior expectation of the number of clusters (top row) and boxplots of the adjusted rand index (ARI) for Binder's estimated clusters (bottom row). Each boxplot refers to the output of the 100 replicated datasets.}
    \label{fig:simuA_d8}
\end{figure}

\begin{table}[h]
\centering
\begin{tabular}{lllrrr}
\toprule
     &      &     &  Mean nclus &  Mode nclus & ARI Binder's \\
$p$ & Model & Parameter &                &                 &               \\
\midrule
500  & APPLAM & 0.5 &           3.21 &            3.20 &          0.64 \\
     &      & 1.0 &           3.29 &            3.05 &          0.63 \\
     &      & 5.0 &           3.71 &            3.60 &          0.63 \\
     & Lamb & 0.1 &          25.28 &           24.15 &          0.45 \\
     &      & 0.5 &          25.79 &           25.10 &          0.46 \\
     &      & 1.0 &          24.64 &           22.65 &          0.47 \\
1000 & APPLAM & 0.5 &           2.82 &            2.85 &          0.60 \\
     &      & 1.0 &           2.92 &            2.90 &          0.60 \\
     &      & 5.0 &           3.50 &            3.45 &          0.67 \\
     & Lamb & 0.1 &          16.62 &           15.35 &          0.42 \\
     &      & 0.5 &          17.49 &           16.05 &          0.39 \\
     &      & 1.0 &          17.22 &           16.25 &          0.41 \\
\bottomrule
\end{tabular}
\caption{Simulation A, case $d=8$: posterior mean of the number of clusters (Mean nclus), posterior mode of the number of clusters (Mode nclus) and the ARIs between Binder's loss cluster estimates (ARI Binder's) and the true one, averaged over 100 simulated datasets. \emph{Parameter} refers to $\rho|R|$ for the APPLAM model and $\alpha_{DP}$ for the Lamb model.}
\label{table:simuA_d8}
\end{table}

\begin{figure}[t]
    \centering
    \includegraphics[width=\linewidth]{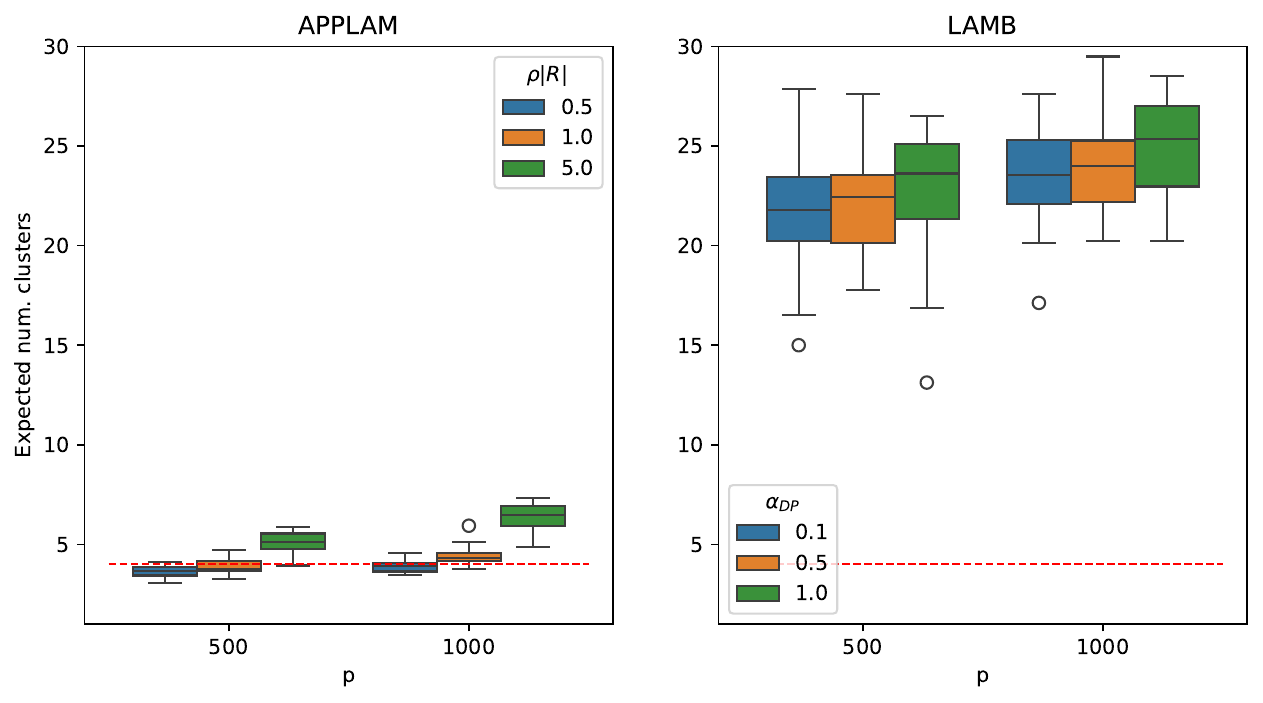}

    \includegraphics[width=\linewidth]{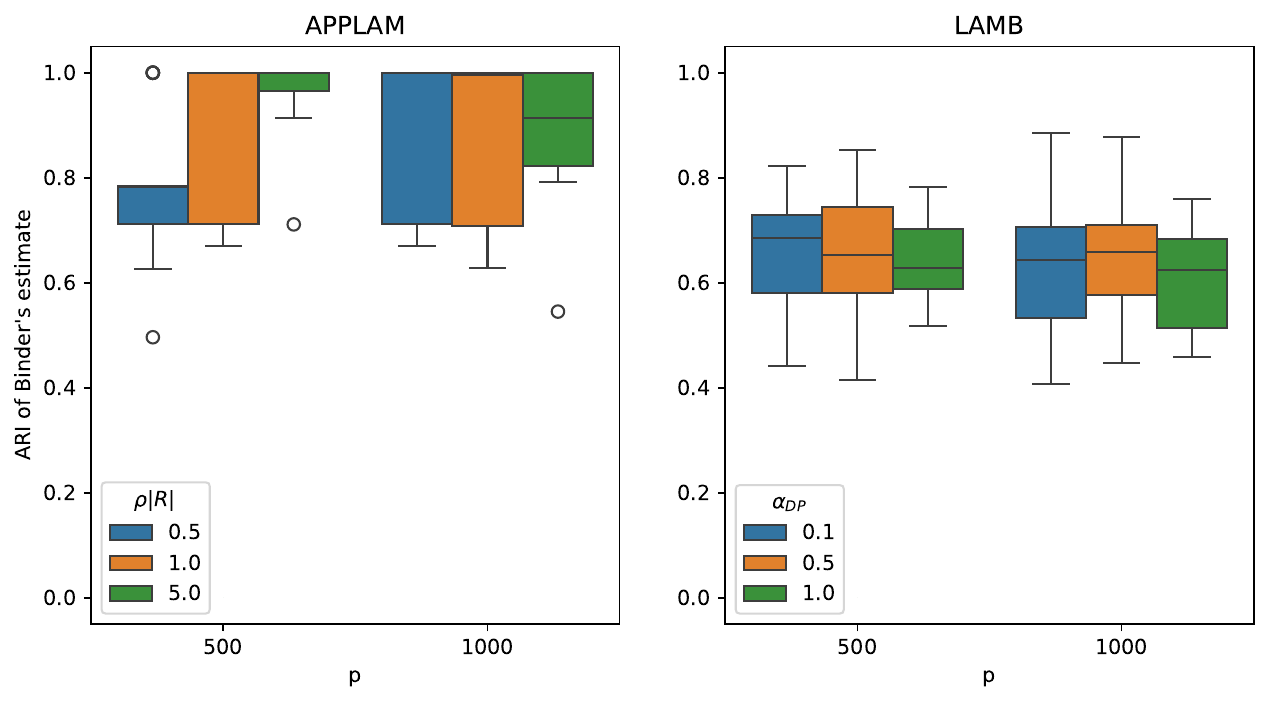}
    \caption{Simulation B: boxplots of the posterior expectation of the number of clusters (top row) and boxplots of the adjusted rand index (ARI) for Binder's estimated clusters (bottom row). Each boxplot refers to the output of the 100 replicated datasets.}
    \label{fig:simuB}
\end{figure}

\begin{table}[h]
\centering
\begin{tabular}{lllrrr}
\toprule
     &      &     &  Mean nclus &  Mode nclus &  ARI Binder's \\
$p$ & Model & Parameter &                &                 &               \\
\midrule
500  & APPLAM & 0.5 &           3.61 &            3.50 &          0.77 \\
     &      & 1.0 &           3.91 &            3.95 &          0.88 \\
     &      & 5.0 &           5.11 &            5.00 &          0.97 \\
     & Lamb & 0.1 &          21.71 &           20.10 &          0.66 \\
     &      & 0.5 &          21.96 &           20.70 &          0.65 \\
     &      & 1.0 &          22.72 &           21.35 &          0.64 \\
1000 & APPLAM & 0.5 &           3.89 &            3.85 &          0.79 \\
     &      & 1.0 &           4.44 &            4.35 &          0.88 \\
     &      & 5.0 &           6.37 &            6.30 &          0.90 \\
     & Lamb & 0.1 &          23.40 &           23.20 &          0.63 \\
     &      & 0.5 &          24.02 &           22.45 &          0.65 \\
     &      & 1.0 &          24.86 &           23.25 &          0.60 \\
\bottomrule
\end{tabular}
\caption{Simulation B: posterior mean of the number of clusters (Mean nclus), posterior mode of the number of clusters (Mode nclus) and the ARIs between Binder's loss cluster estimates (ARI Binder's) and the true one, averaged over 100 simulated datasets. \emph{Parameter} refers to $\rho|R|$ for the APPLAM model and $\alpha_{DP}$ for the Lamb model.}
\label{table:simuB}
\end{table}

\FloatBarrier

\section{Details on the Bauges dataset}\label{app:bauges}

\subsection{Hyperparameter elicitation}\label{app:bauges_hyper_elic}
We examine a broad range of prior hyperparameters. In particular, we consider $d \in \{3, 4\}$, $\rho|R| \in \{0.1, 0.5, 3, 5\}$, while fixing $c$ such that $\rho = s \rho_{\max}$ (see Corollary~\ref{cor:gauss_dpp}), with $s \in \{0.5, 0.75, 0.9 \}$. Moreover, we consider $\alpha \in \{10^{-3}, 10^{-2}, 10^{-1}\}$ in \eqref{eq:prior_jumps} and two choices for the hyperparameters of $\Sigma$ in \eqref{eq:prior_sigma}, specifically $a_\sigma = 3, b_\sigma = 2$ and $a_\sigma = 2, b_\sigma = 1$. Hyperparameter $a$ in \eqref{eq:prior_lambda} is fixed to $0.5$ as suggested in \cite{Chandra20}.  
We have discussed in the previous sections (in particular in Appendices \ref{app:hyperparams}-\ref{app:robustness}) that $\rho |R|$ should be small (e.g. not larger than 5), while $s\in(0,1)$ takes the interpretation of the strength repulsiveness
of the DPP. Hyperparameter $\alpha$ controls the \textit{sparseness} of the prior for the weights $\bm w$ of the mixture distribution of the latent parameters $\eta_i$'s; the smaller is $\alpha$, the more \textit{sparse} is this prior. The values fixed for hyperparameters $(a_\sigma,b_\sigma)$ are standard choices for the marginal prior of a variance parameter: both correspond to the prior mean of the $\sigma_j^2$'s equal to 1, but the prior variance is equal to 1 in the first case and is infinite in the second case. Assuming 3 and 4 ad the values of $d$, the dimension of the latent factors $\eta_i$'s, amounts to impose dimensionality reduction (here $p = 123$). We do not report here posterior inference with $d=2$ since it led to much poorer estimates in terms of WAIC.
We fit the resulting 144 models separately by running the MCMC algorithm for $3000$ burn-in iterations and $5000$ iterations with thinning equal to 5.

\subsection{Inference obtained via APPLAM}

Figure~\ref{fig:waics_bauges} reports the WAIC \citep{Watanabe13} values for all the models we fitted. It is evident that, in general, models with $d=3$ have a better fit (larger WAIC indexes) than those with $d=4$. Tables \ref{tab:bauges_d3} and \ref{tab:bauges_d4} display all the numerical values for the WAIC, together with the number of estimated clusters obtained by minimizing the posterior expected value of Binder's loss.

\begin{figure}[ht]
    \centering
    \includegraphics[width=0.8\textwidth]{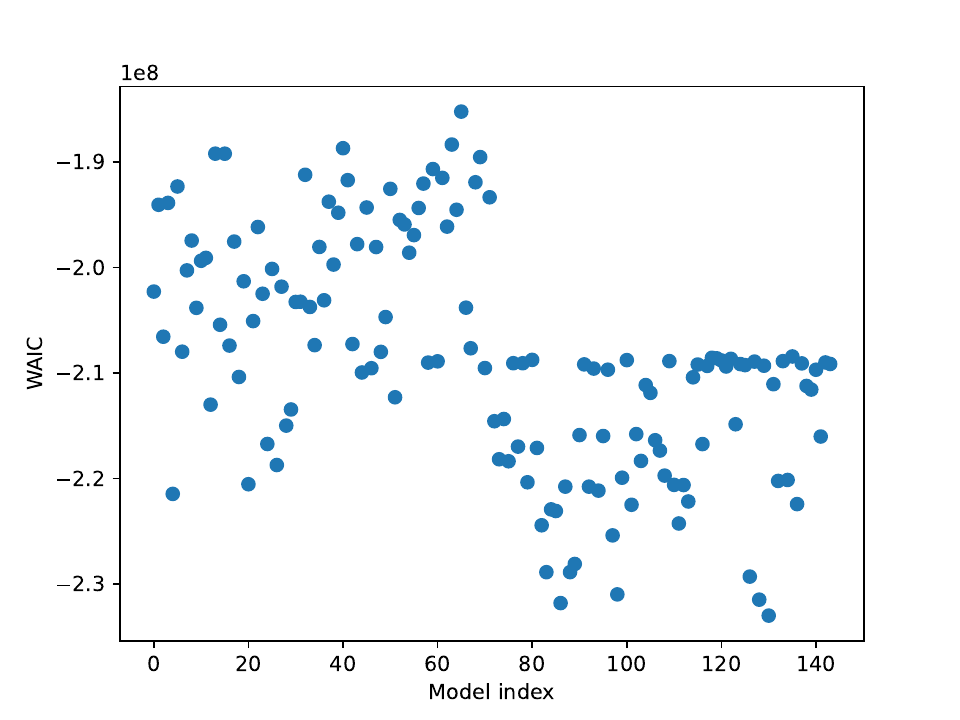}
    \caption{WAIC values for the different fitted models: the first half refers to models with $d=3$, the second half refers to models with $d=4$.}
    \label{fig:waics_bauges}
\end{figure}

Figures~\ref{fig:bauges_nclus_d3} and \ref{fig:bauges_nclus_d4} report the posterior  mean of the number of clusters when $d=3$ and $d=4$, respectively, as a function of $\rho |R| \in \{0.1, 0.5, 3, 5 \}$.

\begin{figure}[ht]
    \centering
    \includegraphics[width=\textwidth]{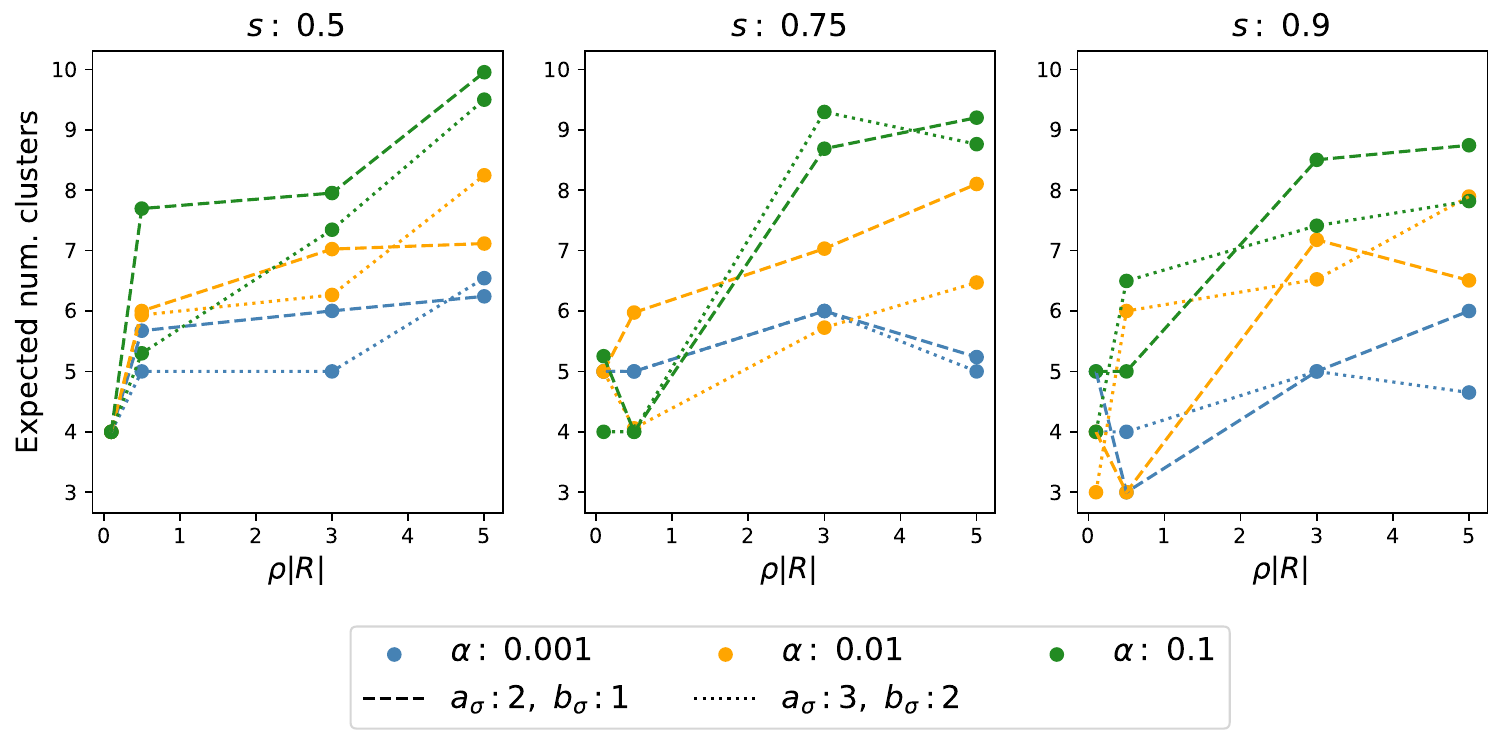}
    \caption{Posterior mean  of the number of clusters for $d=3$ as a function of $\rho |R| \in \{0.1, 0.5, 3, 5 \}$, for different choices of the model's hyperparameters.}
    \label{fig:bauges_nclus_d3}
\end{figure}

\begin{figure}[ht]
    \centering
    \includegraphics[width=\textwidth]{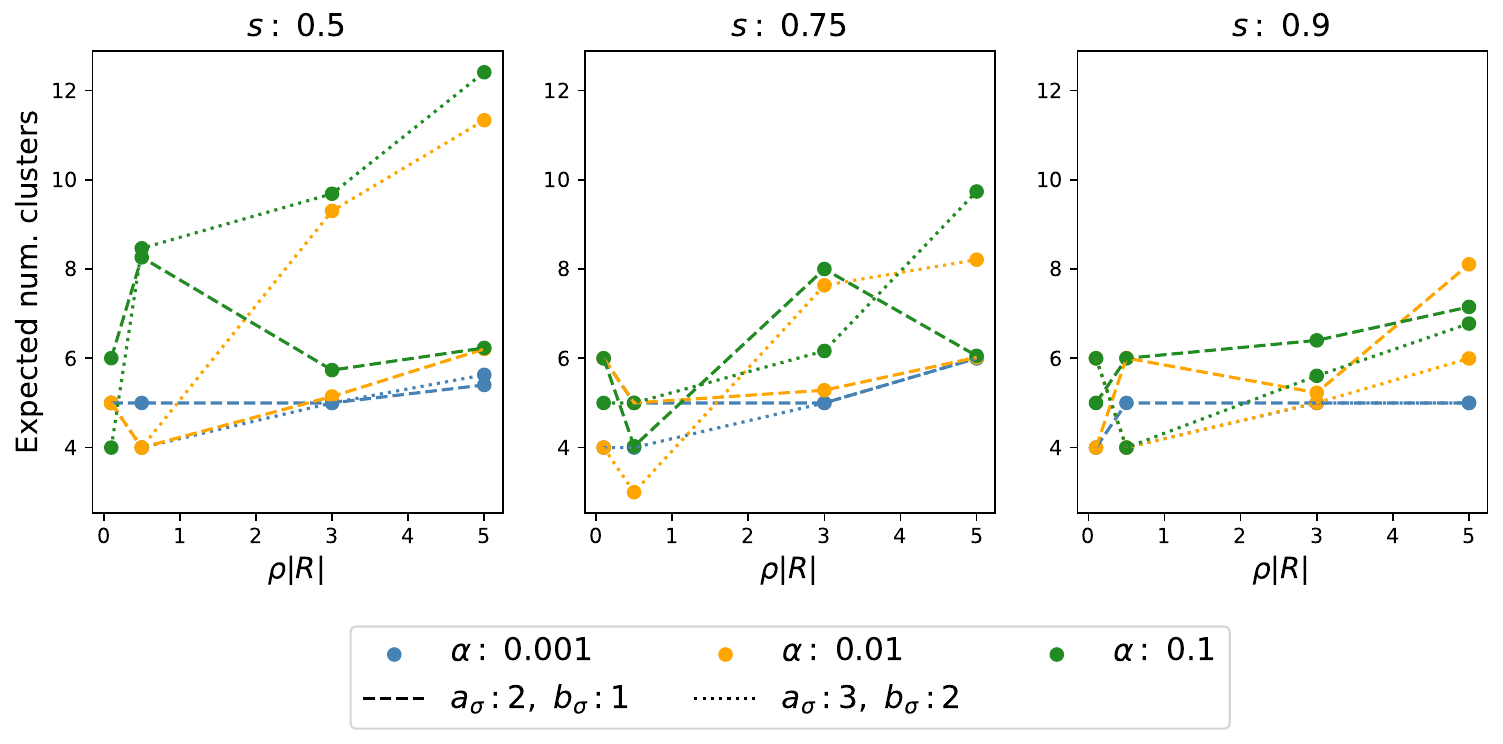}
    \caption{Posterior mean of the number of clusters for $d=4$ as a function of $\rho |R| \in \{0.1, 0.5, 3, 5 \}$, for different choices of the model's hyperparameters.}
    \label{fig:bauges_nclus_d4}
\end{figure}

\begin{landscape}
\begin{table}[h!]
\centering
\scalebox{0.8}{
\begin{tabular}{lrrrrrrl}
\toprule
 & $\rho|R|$ & $s$ & $\alpha$ & $a_\sigma$ & $b_\sigma$ & N. Clus. & WAIC \\
\midrule
0 & 0.1 & 0.5 & 0.001 & 3 & 2 & 4 & -2.02e+08 \\
1 & 0.1 & 0.5 & 0.001 & 2 & 1 & 4 & -1.94e+08 \\
2 & 0.1 & 0.5 & 0.01 & 3 & 2 & 4 & -2.07e+08 \\
3 & 0.1 & 0.5 & 0.01 & 2 & 1 & 4 & -1.94e+08 \\
4 & 0.1 & 0.5 & 0.1 & 3 & 2 & 4 & -2.21e+08 \\
5 & 0.1 & 0.5 & 0.1 & 2 & 1 & 4 & -1.92e+08 \\
6 & 0.1 & 0.75 & 0.001 & 3 & 2 & 5 & -2.08e+08 \\
7 & 0.1 & 0.75 & 0.001 & 2 & 1 & 5 & -2.00e+08 \\
8 & 0.1 & 0.75 & 0.01 & 3 & 2 & 5 & -1.97e+08 \\
9 & 0.1 & 0.75 & 0.01 & 2 & 1 & 5 & -2.04e+08 \\
10 & 0.1 & 0.75 & 0.1 & 3 & 2 & 4 & -1.99e+08 \\
11 & 0.1 & 0.75 & 0.1 & 2 & 1 & 6 & -1.99e+08 \\
12 & 0.1 & 0.9 & 0.001 & 3 & 2 & 4 & -2.13e+08 \\
13 & 0.1 & 0.9 & 0.001 & 2 & 1 & 5 & -1.89e+08 \\
14 & 0.1 & 0.9 & 0.01 & 3 & 2 & 3 & -2.05e+08 \\
15 & 0.1 & 0.9 & 0.01 & 2 & 1 & 4 & -1.89e+08 \\
16 & 0.1 & 0.9 & 0.1 & 3 & 2 & 4 & -2.07e+08 \\
17 & 0.1 & 0.9 & 0.1 & 2 & 1 & 5 & -1.98e+08 \\
18 & 0.5 & 0.5 & 0.001 & 3 & 2 & 5 & -2.10e+08 \\
19 & 0.5 & 0.5 & 0.001 & 2 & 1 & 6 & -2.01e+08 \\
20 & 0.5 & 0.5 & 0.01 & 3 & 2 & 6 & -2.21e+08 \\
21 & 0.5 & 0.5 & 0.01 & 2 & 1 & 6 & -2.05e+08 \\
22 & 0.5 & 0.5 & 0.1 & 3 & 2 & 6 & -1.96e+08 \\
23 & 0.5 & 0.5 & 0.1 & 2 & 1 & 8 & -2.03e+08 \\
24 & 0.5 & 0.75 & 0.001 & 3 & 2 & 5 & -2.17e+08 \\
25 & 0.5 & 0.75 & 0.001 & 2 & 1 & 5 & -2.00e+08 \\
26 & 0.5 & 0.75 & 0.01 & 3 & 2 & 5 & -2.19e+08 \\
27 & 0.5 & 0.75 & 0.01 & 2 & 1 & 7 & -2.02e+08 \\
28 & 0.5 & 0.75 & 0.1 & 3 & 2 & 4 & -2.15e+08 \\
29 & 0.5 & 0.75 & 0.1 & 2 & 1 & 4 & -2.13e+08 \\
30 & 0.5 & 0.9 & 0.001 & 3 & 2 & 4 & -2.03e+08 \\
31 & 0.5 & 0.9 & 0.001 & 2 & 1 & 3 & -2.03e+08 \\
32 & 0.5 & 0.9 & 0.01 & 3 & 2 & 6 & -1.91e+08 \\
33 & 0.5 & 0.9 & 0.01 & 2 & 1 & 3 & -2.04e+08 \\
34 & 0.5 & 0.9 & 0.1 & 3 & 2 & 7 & -2.07e+08 \\
35 & 0.5 & 0.9 & 0.1 & 2 & 1 & 5 & -1.98e+08 \\\bottomrule
\end{tabular}\hspace{1cm}
\begin{tabular}{lrrrrrrl}
\toprule
 & $\rho|R|$ & $s$ & $\alpha$ & $a_\sigma$ & $b_\sigma$ & N. Clus. & WAIC \\
\midrule
36 & 3.0 & 0.5 & 0.001 & 3 & 2 & 5 & -2.03e+08\\
37 & 3.0 & 0.5 & 0.001 & 2 & 1 & 6 & -1.94e+08 \\
38 & 3.0 & 0.5 & 0.01 & 3 & 2 & 7 & -2.00e+08 \\
39 & 3.0 & 0.5 & 0.01 & 2 & 1 & 8 & -1.95e+08 \\
40 & 3.0 & 0.5 & 0.1 & 3 & 2 & 9 & -1.89e+08 \\
41 & 3.0 & 0.5 & 0.1 & 2 & 1 & 10 & -1.92e+08 \\
42 & 3.0 & 0.75 & 0.001 & 3 & 2 & 6 & -2.07e+08 \\
43 & 3.0 & 0.75 & 0.001 & 2 & 1 & 6 & -1.98e+08 \\
44 & 3.0 & 0.75 & 0.01 & 3 & 2 & 7 & -2.10e+08 \\
45 & 3.0 & 0.75 & 0.01 & 2 & 1 & 8 & -1.94e+08 \\
46 & 3.0 & 0.75 & 0.1 & 3 & 2 & 11 & -2.10e+08 \\
47 & 3.0 & 0.75 & 0.1 & 2 & 1 & 11 & -1.98e+08 \\
48 & 3.0 & 0.9 & 0.001 & 3 & 2 & 5 & -2.08e+08 \\
49 & 3.0 & 0.9 & 0.001 & 2 & 1 & 5 & -2.05e+08 \\
50 & 3.0 & 0.9 & 0.01 & 3 & 2 & 7 & -1.93e+08 \\
51 & 3.0 & 0.9 & 0.01 & 2 & 1 & 8 & -2.12e+08 \\
52 & 3.0 & 0.9 & 0.1 & 3 & 2 & 9 & -1.96e+08 \\
53 & 3.0 & 0.9 & 0.1 & 2 & 1 & 10 & -1.96e+08 \\
54 & 5.0 & 0.5 & 0.001 & 3 & 2 & 8 & -1.99e+08 \\
55 & 5.0 & 0.5 & 0.001 & 2 & 1 & 8 & -1.97e+08 \\
56 & 5.0 & 0.5 & 0.01 & 3 & 2 & 9 & -1.94e+08 \\
57 & 5.0 & 0.5 & 0.01 & 2 & 1 & 8 & -1.92e+08 \\
58 & 5.0 & 0.5 & 0.1 & 3 & 2 & 11 & -2.09e+08 \\
59 & 5.0 & 0.5 & 0.1 & 2 & 1 & 12 & -1.91e+08 \\
60 & 5.0 & 0.75 & 0.001 & 3 & 2 & 5 & -2.09e+08 \\
61 & 5.0 & 0.75 & 0.001 & 2 & 1 & 6 & -1.92e+08 \\
62 & 5.0 & 0.75 & 0.01 & 3 & 2 & 7 & -1.96e+08 \\
63 & 5.0 & 0.75 & 0.01 & 2 & 1 & 9 & -1.88e+08 \\
64 & 5.0 & 0.75 & 0.1 & 3 & 2 & 10 & -1.95e+08 \\
65 & 5.0 & 0.75 & 0.1 & 2 & 1 & 11 & -1.85e+08 \\
66 & 5.0 & 0.9 & 0.001 & 3 & 2 & 5 & -2.04e+08 \\
67 & 5.0 & 0.9 & 0.001 & 2 & 1 & 6 & -2.08e+08 \\
68 & 5.0 & 0.9 & 0.01 & 3 & 2 & 9 & -1.92e+08 \\
69 & 5.0 & 0.9 & 0.01 & 2 & 1 & 8 & -1.90e+08 \\
70 & 5.0 & 0.9 & 0.1 & 3 & 2 & 9 & -2.10e+08 \\
71 & 5.0 & 0.9 & 0.1 & 2 & 1 & 11 & -1.93e+08 \\
\bottomrule
\end{tabular}}
\caption{Summary of posterior inference for the Bauges data for different choices of some hyperparameters. In all cases, the latent dimension is $d=3$. ``N. Clus.''  denotes the number of estimated clusters obtained by minimizing the posterior expectation of the Binder's loss function.}
\label{tab:bauges_d3}
\end{table}
\end{landscape}

\begin{landscape}
\begin{table}[]
\centering
\scalebox{0.8}{
    \begin{tabular}{lrrrrrrl}
\toprule
 & $\rho|R|$ & $s$ & $\alpha$ & $a_\sigma$ & $b_\sigma$ &  N. Clus & WAIC \\
\midrule
0 & 0.1 & 0.5 & 0.001 & 3 & 2 & 5 & -2.15e+08 \\
1 & 0.1 & 0.5 & 0.001 & 2 & 1 & 5 & -2.18e+08 \\
2 & 0.1 & 0.5 & 0.01 & 3 & 2 & 5 & -2.14e+08 \\
3 & 0.1 & 0.5 & 0.01 & 2 & 1 & 5 & -2.18e+08 \\
4 & 0.1 & 0.5 & 0.1 & 3 & 2 & 4 & -2.09e+08 \\
5 & 0.1 & 0.5 & 0.1 & 2 & 1 & 6 & -2.17e+08 \\
6 & 0.1 & 0.75 & 0.001 & 3 & 2 & 4 & -2.09e+08 \\
7 & 0.1 & 0.75 & 0.001 & 2 & 1 & 6 & -2.20e+08 \\
8 & 0.1 & 0.75 & 0.01 & 3 & 2 & 4 & -2.09e+08 \\
9 & 0.1 & 0.75 & 0.01 & 2 & 1 & 6 & -2.17e+08 \\
10 & 0.1 & 0.75 & 0.1 & 3 & 2 & 5 & -2.24e+08 \\
11 & 0.1 & 0.75 & 0.1 & 2 & 1 & 6 & -2.29e+08 \\
12 & 0.1 & 0.9 & 0.001 & 3 & 2 & 6 & -2.23e+08 \\
13 & 0.1 & 0.9 & 0.001 & 2 & 1 & 4 & -2.23e+08 \\
14 & 0.1 & 0.9 & 0.01 & 3 & 2 & 6 & -2.32e+08 \\
15 & 0.1 & 0.9 & 0.01 & 2 & 1 & 4 & -2.21e+08 \\
16 & 0.1 & 0.9 & 0.1 & 3 & 2 & 6 & -2.29e+08 \\
17 & 0.1 & 0.9 & 0.1 & 2 & 1 & 5 & -2.28e+08 \\
18 & 0.5 & 0.5 & 0.001 & 3 & 2 & 4 & -2.16e+08 \\
19 & 0.5 & 0.5 & 0.001 & 2 & 1 & 5 & -2.09e+08 \\
20 & 0.5 & 0.5 & 0.01 & 3 & 2 & 4 & -2.21e+08 \\
21 & 0.5 & 0.5 & 0.01 & 2 & 1 & 4 & -2.10e+08 \\
22 & 0.5 & 0.5 & 0.1 & 3 & 2 & 10 & -2.21e+08 \\
23 & 0.5 & 0.5 & 0.1 & 2 & 1 & 10 & -2.16e+08 \\
24 & 0.5 & 0.75 & 0.001 & 3 & 2 & 4 & -2.10e+08 \\
25 & 0.5 & 0.75 & 0.001 & 2 & 1 & 5 & -2.25e+08 \\
26 & 0.5 & 0.75 & 0.01 & 3 & 2 & 3 & -2.31e+08 \\
27 & 0.5 & 0.75 & 0.01 & 2 & 1 & 5 & -2.20e+08 \\
28 & 0.5 & 0.75 & 0.1 & 3 & 2 & 5 & -2.09e+08 \\
29 & 0.5 & 0.75 & 0.1 & 2 & 1 & 4 & -2.22e+08 \\
30 & 0.5 & 0.9 & 0.001 & 3 & 2 & 4 & -2.16e+08 \\
31 & 0.5 & 0.9 & 0.001 & 2 & 1 & 5 & -2.18e+08 \\
32 & 0.5 & 0.9 & 0.01 & 3 & 2 & 4 & -2.11e+08 \\
33 & 0.5 & 0.9 & 0.01 & 2 & 1 & 6 & -2.12e+08 \\
34 & 0.5 & 0.9 & 0.1 & 3 & 2 & 4 & -2.16e+08 \\
35 & 0.5 & 0.9 & 0.1 & 2 & 1 & 6 & -2.17e+08 \\\bottomrule
\end{tabular}\hspace{1cm}%
\begin{tabular}{lrrrrrrl}
\toprule
 & $\rho|R|$ & $s$ & $\alpha$ & $a_\sigma$ & $b_\sigma$ &  N. Clus & WAIC \\
\midrule
36 & 3.0 & 0.5 & 0.001 & 3 & 2 & 5 & -2.20e+08 \\
37 & 3.0 & 0.5 & 0.001 & 2 & 1 & 5 & -2.09e+08 \\
38 & 3.0 & 0.5 & 0.01 & 3 & 2 & 10 & -2.21e+08 \\
39 & 3.0 & 0.5 & 0.01 & 2 & 1 & 7 & -2.24e+08 \\
40 & 3.0 & 0.5 & 0.1 & 3 & 2 & 12 & -2.21e+08 \\
41 & 3.0 & 0.5 & 0.1 & 2 & 1 & 7 & -2.22e+08 \\
42 & 3.0 & 0.75 & 0.001 & 3 & 2 & 5 & -2.10e+08 \\
43 & 3.0 & 0.75 & 0.001 & 2 & 1 & 5 & -2.09e+08 \\
44 & 3.0 & 0.75 & 0.01 & 3 & 2 & 9 & -2.17e+08 \\
45 & 3.0 & 0.75 & 0.01 & 2 & 1 & 6 & -2.09e+08 \\
46 & 3.0 & 0.75 & 0.1 & 3 & 2 & 7 & -2.09e+08 \\
47 & 3.0 & 0.75 & 0.1 & 2 & 1 & 8 & -2.09e+08 \\
48 & 3.0 & 0.9 & 0.001 & 3 & 2 & 5 & -2.09e+08 \\
49 & 3.0 & 0.9 & 0.001 & 2 & 1 & 5 & -2.09e+08 \\
50 & 3.0 & 0.9 & 0.01 & 3 & 2 & 5 & -2.09e+08 \\
51 & 3.0 & 0.9 & 0.01 & 2 & 1 & 6 & -2.15e+08 \\
52 & 3.0 & 0.9 & 0.1 & 3 & 2 & 7 & -2.09e+08 \\
53 & 3.0 & 0.9 & 0.1 & 2 & 1 & 7 & -2.09e+08 \\
54 & 5.0 & 0.5 & 0.001 & 3 & 2 & 6 & -2.29e+08 \\
55 & 5.0 & 0.5 & 0.001 & 2 & 1 & 6 & -2.09e+08 \\
56 & 5.0 & 0.5 & 0.01 & 3 & 2 & 13 & -2.31e+08 \\
57 & 5.0 & 0.5 & 0.01 & 2 & 1 & 8 & -2.09e+08 \\
58 & 5.0 & 0.5 & 0.1 & 3 & 2 & 15 & -2.33e+08 \\
59 & 5.0 & 0.5 & 0.1 & 2 & 1 & 8 & -2.11e+08 \\
60 & 5.0 & 0.75 & 0.001 & 3 & 2 & 6 & -2.20e+08 \\
61 & 5.0 & 0.75 & 0.001 & 2 & 1 & 6 & -2.09e+08 \\
62 & 5.0 & 0.75 & 0.01 & 3 & 2 & 9 & -2.20e+08 \\
63 & 5.0 & 0.75 & 0.01 & 2 & 1 & 7 & -2.08e+08 \\
64 & 5.0 & 0.75 & 0.1 & 3 & 2 & 11 & -2.22e+08 \\
65 & 5.0 & 0.75 & 0.1 & 2 & 1 & 7 & -2.09e+08 \\
66 & 5.0 & 0.9 & 0.001 & 3 & 2 & 5 & -2.11e+08 \\
67 & 5.0 & 0.9 & 0.001 & 2 & 1 & 5 & -2.12e+08 \\
68 & 5.0 & 0.9 & 0.01 & 3 & 2 & 7 & -2.10e+08 \\
69 & 5.0 & 0.9 & 0.01 & 2 & 1 & 10 & -2.16e+08 \\
70 & 5.0 & 0.9 & 0.1 & 3 & 2 & 8 & -2.09e+08 \\
71 & 5.0 & 0.9 & 0.1 & 2 & 1 & 8 & -2.09e+08 \\
\bottomrule
\end{tabular}}
\caption{Summary of posterior inference for the Bauges data for different choices of some hyperparameters. In all cases, the latent dimension is $d=4$. ``N. Clus.''  denotes the number of estimated clusters obtained by minimizing the posterior expectation of the Binder's loss function.}
\label{tab:bauges_d4}
\end{table}
\end{landscape}

\FloatBarrier

\subsection{Scientific names of the plant species}

We report the scientific names of the species displayed in Figure~\ref{fig:relevant_species_by_cluster}.

\begin{itemize}
    \item  ``0'': Acer pseudoplatanus L. 
    \item ``1'': Astrantia major L.
    \item ``4'': Laserpitium latifolium L.
    \item ``5'': Pimpinella major (L.) Hudson
    \item ``6'': Hedera helix L.
    \item ``9'': Cirsium palustre (L.) Scop.
    \item ``14": Prenanthes purpurea L.
  \item ``27'': Helianthemum nummularium (L.) Miller
  \item ``28": Corylus avellana L.
 \item ``45": Fagus sylvatica L.
 \item ``46": Gentiana lutea L.
 \item ``49'': Geranium sylvaticum L.
 \item ``61": Fraxinus excelsior L.
 \item ``72": Ranunculus tuberosus Lapeyr.
 \item ``83'': Galium odoratum (L.) Scop.
 \item ``91'': Carex sempervirens Vill. subsp. sempervirens
 \item ``97": Anthoxanthum odoratum L.
 \item ``112": Abies alba Miller  
 \item ``117": Acer campestre L.
 \item ``121": Rubus fruticosus groupe
\end{itemize}

\end{document}